\colorlet{DarkRed}{red!50!black}
\colorlet{DarkGreen}{green!50!black}
\colorlet{DarkBlue}{blue!50!black}
\newcommand{\polylog}{\text{ polylog }}
\newcommand{\roundConst}{2}
\newcommand{\wH}{\widehat{H}}
\newtheorem{theorem}{Theorem}[section]
\newtheorem{corollary}[theorem]{Corollary}
\newtheorem{lemma}[theorem]{Lemma}
\newtheorem{claim}[theorem]{Claim}
\newtheorem{fact}[theorem]{Fact}
\newtheorem{informal theorem}[theorem]{Informal Theorem}
\theoremstyle{definition}
\newtheorem{definition}[theorem]{Definition}
\newtheorem*{theorem*}{Theorem}
\newtheorem*{corollary*}{Corollary}
\newtheorem*{conjecture*}{Conjecture}
\newtheorem*{lemma*}{Lemma}
\newtheorem*{thm*}{Theorem}
\newtheorem*{prop*}{Proposition}
\newtheorem*{obs*}{Observation}
\newtheorem*{definition*}{Definition}
\newtheorem*{remark*}{Remark}
\newtheorem*{rec*}{Recommendation}
\newcommand{\ap}{{p} }
\newcommand{\hap}{\overline{p}}
\newcommand*\samethanks[1][\value{footnote}]{\footnotemark[#1]}
\g@addto@macro\bfseries{\boldmath}
\g@addto@macro\mdseries{\unboldmath}
\g@addto@macro\normalfont{\unboldmath}
\g@addto@macro\rmfamily{\unboldmath}
\g@addto@macro\upshape{\unboldmath}
\renewcommand*{\multicitedelim}{\addcomma\space}
    \newlength{\temp@x}%
    \newlength{\temp@y}%
    \newlength{\temp@w}%
    \newlength{\temp@h}%
    \def\my@coords#1#2#3#4{%
      \setlength{\temp@x}{#1}%
      \setlength{\temp@y}{#2}%
      \setlength{\temp@w}{#3}%
      \setlength{\temp@h}{#4}%
      \adjustlengths{}%
      \my@pdfliteral{\strip@pt\temp@x\space\strip@pt\temp@y\space\strip@pt\temp@w\space\strip@pt\temp@h\space re}}%
      \def\my@pdfliteral#1{\pdfliteral page{#1}}
      \def\adjustlengths{}%
      \def\my@pdfliteral #1{}
      \def\adjustlengths{\setlength{\temp@h}{-\temp@h}\addtolength{\temp@y}{1in}\addtolength{\temp@x}{-1in}}%
    \def\Hy@colorlink#1{%
      \begingroup
        \ifHy@ocgcolorlinks
          \def\Hy@ocgcolor{#1}%
          \my@pdfliteral{q}%
          \my@pdfliteral{7 Tr}
        \else
          \HyColor@UseColor#1%
        \fi
    }%
    \def\Hy@endcolorlink{%
      \ifHy@ocgcolorlinks%
        \my@pdfliteral{/OC/OCPrint BDC}%
        \my@coords{0pt}{0pt}{\pdfpagewidth}{\pdfpageheight}%
        \my@pdfliteral{F}
        %
        \my@pdfliteral{EMC/OC/OCView BDC}%
        \begingroup%
          \expandafter\HyColor@UseColor\Hy@ocgcolor%
          \my@coords{0pt}{0pt}{\pdfpagewidth}{\pdfpageheight}%
          \my@pdfliteral{F}
        \endgroup%
        \my@pdfliteral{EMC}%
        \my@pdfliteral{0 Tr}
        \my@pdfliteral{Q}%
      \fi
      \endgroup
    }%
\title{Deterministic Incremental APSP with Polylogarithmic Update Time and Stretch}
\date{}
\author{Sebastian Forster\thanks{Department of Computer Science, University of Salzburg. This work is supported by the Austrian Science Fund (FWF): P 32863-N. This project has received funding from the European Research Council (ERC) under the European Union's Horizon 2020 research and innovation programme (grant agreement No~947702).}\\
University of Salzburg
\and Yasamin Nazari \samethanks \\ University of Salzburg \and Maximilian Probst Gutenberg\thanks{The research leading to these results has received funding from the grant ``Algorithms and complexity for high-accuracy flows and convex optimization'' (no. 200021 204787) of the Swiss National Science Foundation.} \\ ETH Zurich}
\begin{document}
\maketitle
\begin{abstract}

    We provide the first \emph{deterministic} data structure that given a weighted undirected graph undergoing edge insertions, processes each update with \emph{polylogarithmic} amortized update time and answers queries for the distance between any pair of vertices in the current graph with a \emph{polylogarithmic} approximation in $O(\log \log n)$ time.
    
    Prior to this work, no data structure was known for partially dynamic graphs, i.e., graphs undergoing either edge insertions or deletions, with less than $n^{o(1)}$ update time except for dense graphs, even when allowing randomization against oblivious adversaries or considering only single-source distances.
\end{abstract}

\section{Introduction}




Partially dynamic algorithms for approximate shortest path problems have received considerable attention in recent years.
In the partially dynamic setting the input graph is undergoing either edge insertions (incremental setting) or edge deletions (decreme1.265ntal setting).
The focus on partially dynamic distance approximation algorithms, instead of fully dynamic ones allowing both types of updates, has three major reasons:
\begin{itemize}
    \item Fully dynamic maintenance of exact distances or small-stretch approximations is sometimes not possible with small update time under plausible hardness assumptions~\cite{Patrascu10,AbboudW14,HenzingerKNS15,BrandNS19}.
    \item Partially dynamic algorithms often serve as a ``stepping stone'' for fully dynamic algorithms~\cite{Thorup05,AbrahamCT14,ForsterGH21}.
    \item In several applications, partially dynamic algorithms that are deterministic or use randomization against an adaptive adversary\footnote{This means that the ``adversary'' creating the sequence of updates is \emph{adaptive} in the sense that it may react to the outputs of the algorithm. This type of adversary is called ``adaptive online adversary'' in the context of online algorithms~\cite{Ben-DavidBKTW94}. In contrast, an \emph{oblivious} adversary needs to choose its sequence of updates in advance, which guarantees probabilistic independence between the updates and the random choices made by the algorithm. Deterministic algorithms obviously work against an adaptive adversary.} can be used as subroutines for solving static problems~\cite{Madry10,ChuzhoyK19,Chuzhoy21,BernsteinGS21,chen2022maximum,chen2022simple}.
\end{itemize}

The research line of developing partially dynamic distance approximation algorithms against adaptive adversaries has been especially successful for \emph{undirected} graphs: in particular, deterministic incremental and decremental algorithms with almost optimal amortized update time of $ n^{o(1)} $ exist for the single-source shortest paths problem (SSSP) with stretch $ (1 + o(1)) $~\cite{BernsteinGS21} and for the all-pairs shortest paths problem (APSP) with stretch $ n^{o(1)} $~\cite{Chuzhoy21}.\footnote{As usual, $ n $ denotes the number of vertices and $ m $ denotes the maximum number of edges of the graph. In the introductory parts of this paper, we assume that edge weights are integer and polynomial in $ n $ when stating running time bounds.}
These efforts were leveraged for the following applications in static algorithms:
\begin{itemize}
\item $ (1 + o(1)) $-approximate minimum-cost flow in time $ m^{1+o(1)} $ \cite{BernsteinGS21} (using deterministic decremental $ (1 + o(1)) $-approximate SSSP)
\item $ n^{o(1)} $-approximate multicommodity flow in time $ m^{1+o(1)} $~\cite{Chuzhoy21} (using deterministic decremental $ n^{o(1)} $-approximate APSP)
\item Exact minimum cost flow in time $ m^{1+o(1)} $~\cite{chen2022maximum} (using randomized $ n^{o(1)} $-approximate APSP against adaptive adversaries on expander graphs)
\item Deterministic nearly-linear time constructions of light spanners (using deterministic incremental $ O(1) $ -approximate bounded distance APSP).
\end{itemize}

All known algorithms for partially dynamic approximate SSSP and APSP that are deterministic (or randomized against an adaptive adversary) suffer from an ``$ n^{o(1)} $-bottleneck'' in their update time with the only exception being the partially dynamic approximate SSSP algorithm of Bernstein and Chechik~\cite{BernsteinC16} that achieves polylogarithmic update time in very dense graphs with $ \tilde \Omega (n^2) $ edges.\footnote{In this paper, we use $ \tilde O (\cdot) $- and $ \tilde \Omega (\cdot) $-notation to suppress factors that are polylogarithmic in $ n $.} Even if we allowed randomization against an oblivious adversary, the $ n^{o(1)} $-bottleneck persists for sparse graphs~\cite{LackiN22}.

It is thus an intriguing and important open problem to design improved deterministic algorithms with polylogarithmic update time and polylogarithmic stretch in all density regimes that ideally work against an adaptive adversary to allow the use in static algorithms.
In particular, it has recently been shown that a deterministic decremental APSP algorithm with polylogarithmic stretch and polylogarithmic update time would imply a $ \tilde O (m) $-time algorithm for finding balanced sparse cuts~\cite{chen2022simple} and improve the state-of-the-art running time of the (exact) minimum cost flow problem~\cite{chen2022maximum}.

\paragraph{Our Result.}

In this paper, we present the first shortest path algorithm breaking the $ n^{o(1)} $ update-time barrier in \emph{all density regimes} for a partially dynamic distance problem with polylogarithmic stretch.
We give an incremental algorithm for maintaining a distance oracle with polylogarithmic stretch that has polylogarithmic amortized update time.

\begin{theorem}
There is a deterministic algorithm that, given an undirected graph with real edge weights in $ [ 1, W ] $ undergoing edge insertions, in total time $ O (m \log n \log \log n + n \log^6 (nW) \log \log n) $ over all updates maintains a distance oracle with polylogarithmic stretch and query time $ O (\log \log n) $, where $ n $ denotes the number of vertices $ m $ denotes the final number of edges of the graph.
\end{theorem}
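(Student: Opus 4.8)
The plan is to build a scale-indexed, Thorup--Zwick--flavoured hierarchical distance oracle on a \emph{sparsified} copy of $G$, and to maintain it under edge insertions by exploiting that insertions only decrease distances (so balls only grow, and a clustering can only be invalidated by a vertex \emph{entering} some target ball).

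First I would reduce to a sparse graph: maintain incrementally a sparse emulator $H$ of $G$ --- a graph on $V$ with $\tilde O(n)$ edges and $d_G(u,v) \le d_H(u,v) \le \polylog(n)\cdot d_G(u,v)$ --- using standard incremental clustering/spanner techniques. Here $H$ only ever has edges inserted or re-weighted downward, and after rounding weights to powers of $2$ the total number of such changes over the whole update sequence is $\tilde O(n\log(nW))$. All the work that touches the original edge set $E$ happens here and accounts for the $O(m\log n\log\log n)$ term; everything afterwards operates on the sparse, slowly-changing graph $H$, to which I would also attach and maintain a $\polylog(n)$-hop approximate hopset of size $\tilde O(n)$, so that balls in $H$ have $\polylog(n)$ hop-diameter and can be explored by truncated incremental Even--Shiloach trees of hop-depth $\polylog(n)$.

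On $H$ I would maintain, for each distance scale $i \in \{0,1,\dots,L\}$ with $L = O(\log(nW))$, a sparse neighbourhood cover $\mathcal C_i$ at radius $2^i$: clusters of (weak) diameter $\polylog(n)\cdot 2^i$, every ball $B_H(v,2^i)$ contained in some cluster, each vertex in $\polylog(n)$ clusters, and --- crucially --- the covers \emph{nested}, i.e.\ every cluster of $\mathcal C_i$ is contained in a cluster of $\mathcal C_{i+1}$, so that the relation ``$u,v$ share a cluster of $\mathcal C_i$'' is monotone in $i$. The cover $\mathcal C_{i+1}$ would be built from $\mathcal C_i$ by contracting the clusters of $\mathcal C_i$ (a graph on $\le n$ nodes, re-sparsified to $\tilde O(n)$ edges and given its own hopset) and taking a constant-radius neighbourhood cover of the contracted graph; this keeps cluster diameters at $\polylog(n)\cdot 2^i$ rather than letting them grow geometrically along the $L$ levels, and lets each cover be maintained by truncated Even--Shiloach trees of hop-depth $\polylog(n)$ rooted at deterministically chosen cluster centres, at amortised cost $\tilde O(1)$ per node per change to the relevant contracted graph. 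A query $(u,v)$ then binary-searches over the $L+1$ scales (valid by monotonicity) for the smallest $i^\ast$ with $u,v$ in a common cluster of $\mathcal C_{i^\ast}$, and returns that cluster's diameter bound $\polylog(n)\cdot 2^{i^\ast}$: not sharing a cluster in $\mathcal C_{i^\ast-1}$ forces $d_H(u,v) > 2^{i^\ast-1}$, sharing one in $\mathcal C_{i^\ast}$ gives $d_H(u,v) \le \polylog(n)\cdot 2^{i^\ast}$, and together with $d_G \le d_H \le \polylog(n)\cdot d_G$ this yields stretch $\polylog(n)$ and overestimation; the query time is $O(\log L) = O(\log\log(nW))$, which is $O(\log\log n)$ under the standing assumption that weights are polynomially bounded.

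The hard part will be bounding the total maintenance work for the hierarchy. An insertion can invalidate several ball-covering constraints, forcing clusters to be re-centred and re-grown, and a rebuild at scale $i$ alters the contracted graph at scale $i+1$, so rebuilds can cascade all the way up the $L$ levels. I would control this with a potential/amortisation argument showing that (i) on a fixed geometric scale each vertex triggers only $\polylog(nW)$ rebuilds --- charged either to genuine increases in cluster sizes or to the $O(\log(nW))$ halvings a given distance can undergo; (ii) the deterministic choice of centres (via incremental hitting sets / greedy covering on the sparse contracted graphs) never places a vertex in more than $\polylog(n)$ clusters; and (iii) the upward cascade of rebuilds costs only a $\polylog(nW)$ overhead. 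Combined, these give $\tilde O(n)\cdot\polylog(nW)$ total work off the edge set --- the $O(n\log^6(nW)\log\log n)$ term --- while the maintenance of $H$ contributes $O(m\log n\log\log n)$. The most delicate point, and the one I expect to need the most care, is making the nested-cover maintenance and the cross-level cascade argument go through \emph{deterministically}: random sampling of centres is unavailable, and naive deterministic hitting sets can both be costly to recompute and cause the per-vertex cluster count to drift upward, so the recourse bound and the hitting-set maintenance have to be designed together.
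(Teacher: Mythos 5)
Your plan is a genuinely different route from the paper (which builds an Andoni--Stein--Zhong-style hierarchy of $O(\log\log n)$ vertex sparsifiers with approximate pivots and lazy multi-level projections, not scale-indexed neighborhood covers), but as written it has gaps that are not details to be filled in later -- they are the crux of the problem. First, your opening reduction assumes as a ``standard'' black box a \emph{deterministic incremental} emulator with $\tilde O(n)$ edges and polylogarithmic stretch maintained in near-linear total time. No such standard tool exists: the natural greedy/clustering maintenance needs exactly the kind of adaptive-adversary-safe approximate distance queries you are trying to build (indeed, deterministic near-linear spanner construction is listed as an \emph{application} of this paper's result), and all known deterministic partially dynamic constructions of this type sit at the $n^{o(1)}$ barrier. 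So the first step is essentially circular. Second, the diameter control in your nested cover hierarchy fails quantitatively. Building $\mathcal C_{i+1}$ by contracting the clusters of $\mathcal C_i$ and covering the contracted graph multiplies the cluster diameter at every scale: with polylogarithmic overlap, sparse covers force a diameter-versus-covering-radius blowup of $\Theta(\log n)$ (and even an optimistic constant factor $\ge 2$ does not telescope), so over $L=\Theta(\log (nW))$ scales the diameter of a scale-$i$ cluster is not ${\rm polylog}(n)\cdot 2^i$ but grows like $(\log n)^{L}$, i.e.\ super-polynomial stretch. Building each scale's cover directly on $H$ avoids the compounding but destroys the nestedness you need for the $O(\log\log)$ binary-search query and the contraction-based efficiency argument, so the two requirements are in tension and the proposal does not resolve it.

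Third, the part you yourself flag as hard -- deterministic maintenance of the covers with per-vertex overlap and rebuild counts staying polylogarithmic, and controlling the cascade of rebuilds up the $L$ levels -- is precisely where the paper spends its technical effort, and ``a potential/amortisation argument'' is not yet an argument. The paper's mechanism is quite specific: each vertex has a \emph{single} approximate pivot per level whose distance estimate only ever decreases, and is only updated when it drops by a constant factor (hence $O(\log (nW))$ changes per vertex); projections of edges to higher levels are performed lazily only when the accumulated pivot-distance sum drops by a constant factor, which is what prevents the level-by-level recourse from compounding; and extra edge types tied to ``last significantly improving pivots'' repair the stretch analysis that this laziness would otherwise break. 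In your framework the analogous questions -- why re-centred clusters do not drift the overlap upward, why a rebuild at scale $i$ induces only $O(1)$ amortized work at scale $i+1$ rather than a multiplicative $\log (nW)$ overhead per level, and how to do all of this deterministically without sampled centres -- are exactly the open content of the proof, so the proposal as it stands does not establish the theorem. Separately and more minorly, your query time is $O(\log\log (nW))$ rather than $O(\log\log n)$, which matches the theorem only under a polynomial bound on $W$ that the statement does not impose.
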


Note that, while our stretch guarantee leaves some room for improvement in the exponent of the logarithm, there is evidence that a substantial improvement might not be possible without sacrificing the polylogarithmic update time:
Based on popular hardness assumptions concerning static 3SUM or static APSP, Abboud, Bringmann, Khoury, and Zamir \cite{AbboudBKZ22} recently showed that a constant-stretch distance oracle cannot be maintained with update time $ n^{o(1)} $.
Furthermore, a space bound (and thus also a total update time bound) of $ \Omega (n^{1+1/k}) $ for distance oracles with stretch $ 2k - 1 $ follows from Erdős's girth conjecture~\cite{Erdos63,ThorupZ05}.

It is worth noting that our techniques are very different from previous approaches that usually employ Even-Shiloach trees~\cite{EvenS81,King99} and constructions based on the Thorup-Zwick distance oracle~\cite{ThorupZ05}.
We further use no heavy algorithmic machinery and -- apart from a dynamic tree data structure -- our paper is self-contained.
We crucially use a hierarchical vertex sparsifier construction by Andoni, Stein, and Zhong~\cite{andoni2020parallel} that originally was developed in the context of distance approximation algorithms with polylogarithmic depth for the PRAM model.
The major technical challenge in employing this hierarchy in a dynamic setting is controlling the recourse -- i.e., the number of induced updates -- from the bottom to the top.
A standard ``layer-by-layer'' analysis would lead to an exponential blowup that would at best result in an $ n^{o(1)} $ overhead.
Instead, we perform several modifications to the vertex sparsifier hierarchy that allow a more controlled propagation in the algorithm that avoids such blowups. These modifications require a more entangled analysis over different levels.
Hence despite the relatively simple algorithm, our analysis is quite technical; see Section~\ref{sec:overview} for an overview of our technical ideas.

\paragraph{Prior Work.}

For the \emph{fully dynamic} all-pairs shortest paths problem, we can in principle distinguish the two regimes of update time $ \Omega (n) $ and update time $ o(n) $ (the ``sublinear'' regime). 
Most earlier works have focused on the first regime~\cite{King99,DemetrescuI06} and the state-of-the-art fully dynamic algorithms for APSP have an amortized update time of $ \tilde O (n^2) $ to maintain an exact solution~\cite{DemetrescuI04,Thorup04} or an amortized update time of $ m^{1 + o(1)} $ to maintain a $ (2 + o(1)) $-approximate solution in undirected graphs~\cite{Bernstein09,BernsteinGS21} (which can then be combined with a dynamic spanner algorithm~\cite{BaswanaKS12}).
Several approaches exist to obtain comparable worst-case update time~\cite{Thorup05,AbrahamCK17,BrandN19,GutenbergW20b,BrandFN22} and to obtain subquadratic (but still superlinear) update time at the cost of polynomial query time~\cite{Sankowski05,RodittyZ11,RodittyZ12,BrandNS19, bergamaschi2021new, karczmarz2022subquadratic}.

Work on the sublinear regime has been pioneered by Abraham, Chechik, and Talwar~\cite{AbrahamCT14} with a trade-off between stretch and amortized update time for unweighed, undirected graphs, allowing for example for a stretch of $ O (\log n) $ with amortized update time $ O (n^{1/2 + \delta}) $ (for any constant $ \delta $) in sparse graphs with $ O (n) $ edges.
A trade-off for weighted, undirected graphs allowing for even faster update time has been presented by Forster, Goranci, and Henzinger~\cite{ForsterGH21}, allowing for example for both subpolynomial stretch and subpolynomial amortized update time.
Both of these algorithms are randomized and correct against an oblivious adversary.

For the \emph{partially dynamic} all-pairs shortest paths problem, we can similarly distinguish between algorithms with total update time $ \tilde O (m n) $ (and above) and faster ``subcubic'' algorithms.
In the first regime, the state of the art is as follows:
deterministic exact all-pairs shortest paths can be maintained with total update time $ \tilde O (n^3) $ in unweighted, directed graphs~\cite{DemetrescuI06,BaswanaHS07,EvaldFGW21} and $ (1 + \epsilon)$-approximate all-pairs shortest paths can be maintained with total update time $ \tilde O (mn/\epsilon) $ in weighted, directed graphs \cite{RodittyZ12,Bernstein16,HenzingerKN16} against an oblivious adversary, and in total update time $\tilde O (mn^{4/3}/\epsilon^2)$ against an adaptive adversary \cite{KarczmarzL19, EvaldFGW21}.

Subcubic algorithms go beyond the ``$mn$'' barrier in undirected graphs either by increasing the multiplicative stretch or by allowing extra additive stretch.
In terms of purely multiplicative stretch, Chechik~\cite{Chechik18} presented an algorithm that for any integer $ k \geq 2 $ maintains a distance oracle of stretch $ (2 + o(1)) (k-1) $ with total update time $ m n^{1/k+o(1)} $, yielding in particular logarithmic stretch with total update time $ m^{1+o(1)} $.
This result was refined by Łącki and Nazari~\cite{LackiN22} to in particular improve the total update time to $ \tilde O ((m + n^{1+o(1)}) n^{1/k}) $.
Prior works were relevant only for dense graphs~\cite{BernsteinR11} or had ``exponentially growing'' stretch guarantees~\cite{HenzingerKN18,AbrahamCT14}.
Recently, a subcubic partially dynamic algorithm with stretch $ 2 + o(1) $ has been developed as well~\cite{DoryFNV23}.
All of these subcubic algorithms for multiplicative stretch are randomized and assume an oblivious adversary.
A deterministic incremental algorithm with several trade-offs between stretch and update and query time was developed by~\cite{ChenGHPS20}; in particular their algorithm can provide constant stretch and total update time $ m^{1+o(1)} $.
Deterministic partially dynamic algorithms allowing deletions have been developed by Chuzhoy and Saranurak~\cite{ChuzhoyS21} and by Chuzhoy~\cite{Chuzhoy21}, where the latter work provides polylogarithmic stretch in total update time $ O (m^{1+\delta}) $ for any constant $ \delta $ (see also~\cite{BernsteinGS21}).
The state-of-the-art algorithm with ``mixed'' stretch guarantee has a multiplicative stretch of $ (1+o(1)) $, an additive stretch of $ 2(k-1) $, and a total update time of $ (n^{2-1/k+o(1)} m^{1/k} $~\cite{DoryFNV23}.
Prior works considered only the case $ k = 2 $~\cite{HenzingerKN16,AbrahamC13} or had an ``exponentially growing'' stretch guarantee~\cite{HenzingerKN14}.
Again, all of these subcubic algorithms for ``mixed'' stretch are randomized and assume an oblivious adversary

\section{Overview}\label{sec:overview}
We start by reviewing the techniques from \cite{andoni2020parallel}. We then explain several challenges we face in the dynamic settings and the modifications we make to the construction 
to overcome these challenges.

\paragraph{Review of the static construction of \cite{andoni2020parallel}.} Our starting point is a distance oracle proposed by \cite{andoni2020parallel} that supports fast distance queries with polylogarithmic stretch. We can see this structure as a \textit{hierarchy of vertex sparsifiers}\footnote{In \cite{andoni2020parallel} what we call a vertex sparsifier is called a sub-emulator. They use a type of sub-emulator for building a low-hop emulator, i.e.~a graph that approximates the distances only using paths with $O(\log \log n)$-hops. We do not need a low-hop emulator and instead use subemulators/vertex sparsifiers for maintaining a distance oracle with small update and query time.}. For a given graph $G=(V,E)$, a vertex sparsifier is a graph $H$, where $V(H) \subseteq V(G)$ and each vertex $v \in V$ has a representative vertex $p(v)$ (called a pivot) such that the distance between vertices $u,v \in V$ can be approximated with the distance between $p(u)$ and $p(v)$ in $H$. A hierarchy of vertex sparsifiers allows us to compute approximate distances on subsequently smaller graphs in each level, while trading off computation time with the stretch.

Specifically, the algorithm of \cite{andoni2020parallel} creates graphs $H_1,\ldots, H_k$ in $k=O(\log \log n)$ levels as follows: at each level $i$ for a parameter $b_i$, they choose $V(H_{i+1}) \subseteq V(H_i)$ by subsampling a set of $\tilde{O}(|V(H_i)|/b_i)$ vertices. By choosing appropriate $b_i$ values increasing double exponentially in each level (in our case by setting $b_i=2^{(6/5)^i}$) we have that after $k=O(\log \log n)$ levels the number of remaining vertices is very small. Roughly speaking, each level of the hierarchy incurs an additional constant multiplicative factor in the stretch, which we denote by $\alpha$ and it then remains to observe that $\alpha^k = \polylog n$. 
We next describe the procedure by \cite{andoni2020parallel} to compute a vertex sparsifier~$H'$ for a graph~$H$ with some target parameter $b$ such that $|V(H')| \approx |V(H)| / b$. The hierarchy $H_1, \ldots, H_k$ is then obtained recursively by computing $H_{i+1}$ from $H_i$ by using the described procedure with target parameter $b_i$. 

The procedure initially samples each vertex from $V(H)$ with probability $1/b$ to form the vertex set of $H'$. They define the pivot $p(v)$ for any vertex $v \in V(H)$ to be the vertex in $V(H') \subseteq V(H)$ that is closest to $v$ in $H$ (we assume that distances are unique for simplicity in the overview). We define $pivotDist(v)$ to be the distance from $v$ to its pivot in $H$. Using standard hitting set arguments, one can then argue that there are at most $\tilde{O}(b)$ vertices inside the ball $B_H(v, pivotDist(v))$ centered at $v$ with radius $pivotDist(v)$. Having found the vertex set of $H'$, it remains to define the edge set. Two types of edges are added\footnote{Based on this definition, we may be introducing multi-edges to $H'$.} to $H'$:
\begin{itemize}
    \item Type 1 (Ball edges): For each $v \in V(H), u \in B_H(v, pivotDist(v))$, we have an edge $(p(u), p(v)) \in H'$ of weight $pivotDist(v)+ dist_{H}(u,v)+pivotDist(u)$, and
    \item Type 2 (Projected edges): For each $e = (x,y) \in E(H)$, an edge $(p(x), p(y))$ of weight $pivotDist(x) + w_{H}(x,y)+ pivotDist(y)$ is added to $H'$.
\end{itemize}
Intuitively, the first type of edges connect vertices in $H$ if one vertex appears in the ball of the other and the second type of edges connect the boundaries of the balls.

\begin{figure}
    \centering
    \includegraphics[scale=0.6]{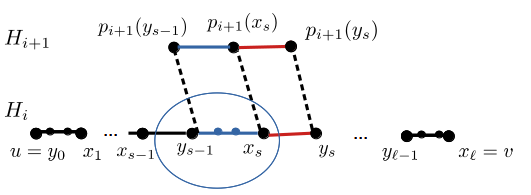}
    \caption{\small In the construction of \cite{andoni2020parallel} paths in $H_i$ are approximated by paths in $H_{i+1}$ going through level $i$ pivots (denoted by the function $p_{i+1}$) of certain vertices on the path. Here $x_s$ is in the ball of $y_{s-1}$ (represented by the circle). Dashed lines represent projections to the next level.}
    \label{fig:ASZpaths}
\end{figure}

We briefly sketch the stretch analysis: Consider vertices $u,v \in V(H)$, and let $\pi$ be the shortest path between $u$ and $v$ in $H$. We divide this path into segments defined by a sequence of vertices $u=y_0,...,y_{\ell-1}$ and $x_1,...,x_{\ell}=v$ defined as follows: Starting from $y_0 := u$, for each $s >0$ let $x_{s+1}$ be the last vertex on $\pi$ such that $x_{s+1} \in B_H(y_s, pivotDist(y_s))$. Then $y_{s+1}$ is set to the next vertex on $\pi$ right after $x_{s+1}$. We stop when $x_{s+1} = v$. Essentially, we segment the path $\pi$ to alternately take maximal segments contained in balls to pivot distances, and using the original edge (see also \Cref{fig:ASZpaths}). \cite{andoni2020parallel} then suggest that this path can simply be projected to $H'$ by taking the path $\langle p(u = y_0), p(x_{1}), p(y_1), p(x_2), \ldots, p(x_{\ell} = v) \rangle$ in $H'$ where every two vertices are connected by an edge in $H'$ as can be verified from the procedure above. 

One can then show that $dist_{H'}(p(y_s), p(y_{s+1})) \approx dist_{H}(y_s, y_{s+1}) + pivotDist(y_s) + pivotDist(y_{s+1})$ (here $\approx$ hides a constant factor). But it is not hard to see that for any $s$, we have $dist_H(y_s, y_{s+1}) > pivotDist(y_s)$. Summing over path segments, we thus get that $dist_{H'}(p(u), p(v)) \approx dist_{H}(u, v) +  pivotDist(v)$. 

Unfortunately, one cannot hope to get rid of an additive term scaling linearly in $pivotDist(v)$ in the approximation as can be seen from straight-forward worst-case examples. However, one can use classic distance oracle query techniques: For query pair $u,v$, we either have that $u \in B_H(v, pivotDist(v))$ and keeping these balls and the respective distances explicitly in a dictionary, one can then return the exact distance. Otherwise, we have $pivotDist(v) < dist_H(u,v)$ and therefore $dist_{H'}(p(u), p(v)) \approx dist_{H}(u, v)$.

Finally, it is easy to see that the sampling ensures $|E(H')| \leq \tilde{O}(|V(H)| b) + |E(H)|$. 

Putting this result back to our hierarchy $H_1, H_2, \ldots, H_k$, we have that one can straight-forwardly query the distances between any two vertices in the original graph $G$ in time $O(k)$ by applying the discussed query procedure iteratively. Further, for each $i$, we have $|E(H_i)| \leq |E(G)| + \sum_i O(|V(H_i)|/b_i)$ where $b_i$'s are chosen carefully to ensure $|E(H_i)| \leq m + \tilde O(n)$.



\begin{algorithm}[h]
\DontPrintSemicolon


    \While{$\exists v \in V(H)$ such that $|B_{H}(v, \frac{1}{4}\widetilde{pivotDist}(v))| \geq {b}$}{
        
        Let $B_v$ be a set of size ${b}$ such that $B_v \subseteq B_{H}(v, \frac{1}{4}\widetilde{pivotDist}(v))$.\\
     
        \If(\label{lne:ifCaseMainUpdate}){$\exists u \in B_v$ with $\widetilde{pivotDist}(u) < \frac{1}{2} \widetilde{pivotDist}(v)$}{
            $p(v) \gets p(u)$; $\widetilde{pivotDist}(v) \gets dist_{H}(v,u) + \widetilde{pivotDist}(u)$.
        }\Else{
            \lForEach{$u \in B_{v}$}{
                $p(u) \gets v$; $\widetilde{pivotDist}(u) \gets dist_{H}(u,v)$.
            }
        }
    }
\caption{$\textsc{UpdateApproxPivots}(H, b)$}
\label{alg:overview}
\end{algorithm}

\paragraph{Incremental algorithm for one level.} 
We give an overview of our algorithm for maintaining these two types of edges for a single level before describing the modifications needed for making the algorithm efficient over all levels.

The first obstruction to maintaining the vertex sparsifier of the last section in incremental settings is that for each vertex $v \in V(H)$, even if the pivot $p(v)$ does not change, the pivot distance $pivotDist(v)$ might change after almost each of the $ m $ insertions. While on average most vertices might only undergo few pivot distance changes, we still might have some node $ v $ of large degree and would have to adjust the weight of projected edges (type 2) incident on $ v $ with every change in $ v $'s pivot distance. To avoid such a running time overhead for simply maintaining the weights of projected edges, we maintain an approximation $\widetilde{pivotDist}(v)$ of $pivotDist(v)$ such that whenever $\widetilde{pivotDist}(v)$ changes, it decreases by a constant factor (thus the total number of changes is $O(\log(nW))$. We maintain the pivot $p(v)$ to be some vertex in $V(H')$ that is roughly at distance  $\widetilde{pivotDist}(v)$ (in our case, all approximations are within a factor 4 of each other). In the following, we merely describe an algorithm to maintain the approximate pivots and the corresponding distance estimates as it is straight-forward to maintain the edge set of $H'$ from this information.

We give our procedure in \Cref{alg:overview}. We assume here that all edge weights in $H$ are powers of $2$ which is w.l.o.g. since we only want to obtain constant stretch. Here, we skip the initialization procedure for brevity. The algorithm is then invoked after every edge insertion to $H$. The algorithm works as follows: whenever it detects that the ball $B_{H}(v, \frac{1}{4}\widetilde{pivotDist}(v))$ contains more than $b$ vertices, it checks for a closer pivot for $v$ which then decreases $\widetilde{pivotDist}(v)$ significantly. Therefore, the algorithm searches over $b$ vertices in $B_{H}(v, \frac{1}{4}\widetilde{pivotDist}(v))$ and asks them whether their pivot would make a good candidate. If such a candidate is found, it becomes the new pivot of $v$. Otherwise, we make $v$ a pivot itself and assign it to the set of vertices scanned as a pivot (the vertices in $B_v$). In this latter case, each vertex in $B_v$ has its (approximate) pivot distance decreased significantly. We maintain the vertex set $V(H')$ to be the image of all pivot functions $p$ from the current and previous stages. 

To implement the while-loop in \Cref{alg:overview}, we use a truncated Dijkstra's algorithm from each vertex $v$ to explore $B_{H}(v, \frac{1}{4}\widetilde{pivotDist}(v))$, however, we abort the procedure after seeing $b$ vertices. Using adjacency lists sorted by weight, we can implement this procedure in time $\tilde{O}(b^2)$. Note that in between any two stages, for a vertex $v$, if no edge/a multi-edge of equal or higher weight is inserted into $B_{H}(v, \frac{1}{4}\widetilde{pivotDist}(v))$, we can simply ignore the update and do not need to recompute. But in the other case, for fixed $\widetilde{pivotDist}(v)$, we have that $B_{H}(v, \frac{1}{4}\widetilde{pivotDist}(v))$ is increasing over time, and the number of edges with different weights and endpoints in the final ball (before exceeding $b$ vertices) is $b^2 \log(nW)$ edges. Hence we have that there are at most $O(b^2 \log nW)$ recomputations before $\widetilde{pivotDist}(v)$ changes. Overall this incurs total time  $\tilde{O}(b^4 \log^2 nW)$ per vertex $v \in V(H)$, and thus $\tilde{O}(m + |V(H)| b^4 \polylog nW)$ overall.\footnote{Here, we were slightly imprecise but in the exact analysis one has a higher power for the $\log(nW)$ factor.}

Finally, observe that $|V(H')|$ is only increased if we enter the else-case. But in this case, $b$ vertices have their pivot distance significantly decreased. We can therefore upper bound the number of vertices in $H'$ by $O(|V(H)|\log(nW)/b )$. However as we see next, bounding the recourse on the number of edges will be problematic over all the $k$ levels.

\paragraph{Challenges in maintaining Projected Edges.}
The naive approach for maintaining the vertex-sparsifier hierarchy would be to run the aforementioned algorithm for each $ 1 \leq i < k $ in a black-box manner to maintain $ H_{i+1} $ as a vertex sparsifier of $ H_i $. 
In particular, the edges added to $ H_{i+1} $ over the course of the algorithm appear as insertions to the algorithm maintaining the vertex sparsifier $ H_{i+2} $ of the next level.
Thus $O(|E(H_i)| \log nW + |V(H_i)| b_i^2 \log(n W) )$ edges are inserted to $H_{i+1}$ in total: $O(|V(H_i)|b^2_i  \log(n W))$ type 1 (ball) edges, and $O(|E(H_i)| \log nW)$ type~2 (projected) edges. 

Starting from $G=H_1$ with $m$ edges, this naive approach with $k=O(\log \log n)$ levels leads to a bound of at least $ m \cdot O(\log nW)^{\log \log n}$ type 2 edges inserted to the top level, each of which needs \emph{at least} constant time to be processed.
Therefore, the black-box approach will not give us the desired $\tilde{O}(m)$ total update time. Instead, we propose a more careful approach for avoiding the exponential blow up in the number of inserted edges within the hierarchy that we explain next.

\paragraph{Maintaining the hierarchy via multi-level projections.}

The challenge discussed means that we cannot afford to have a chain of projections from lower levels to higher levels.
In the following, $ p_{i+1} (v) $ denotes the pivot of some node $ v \in V (H_i) $ maintained by the incremental algorithm at level $ i $ and $ \widetilde{pivotDist}_{i+1} (v) $ denotes its corresponding approximate pivot distance.
For the sake of concreteness, consider some edge $ (u, v) $ in $ H_1 = G $ and the pivots $ p_2 (u) \in V (H_2) $ and $ p_2 (v) \in V (H_2) $ of its endpoints.
Following our previous definition of $ H_2 $ and our process for maintaining approximate pivot distances, there would be a projected edge $ (p_2(u), p_2(v)) $ in $ H_2 $ of weight: 
\begin{equation*}
w_{H_2} (p_2(u), p_2(v)) = \widetilde{pivotDist}_2 (u) + w_{H_1} (u,v) + \widetilde{pivotDist}_2 (v) \, .
\end{equation*}
This edge would in turn be projected to $ H_3 $ by an edge $ (p_3(p_2(u)), p_3(p_2(v))) $ of weight
\begin{align*}
w_{H_3} (p_3(p_2(u)), p_3(p_2(v))) &= \widetilde{pivotDist}_3 (p_2(u)) + w_{H_2} (p_2(u), p_2(v)) + \widetilde{pivotDist}_3 (p_2(v)) \\
&= \widetilde{pivotDist}_3 (p_2(u)) + \widetilde{pivotDist}_2 (u) + w_{H_1} (u,v) \\ &\hspace{1em} + \widetilde{pivotDist}_2 (v) + \widetilde{pivotDist}_3 (p_2(v)) \, .
\end{align*}
As explained above, the ``black box'' approach would mean to insert projections of $ (u, v) $ to $ H_3 $ whenever $ p_3(p_2(u)) $ changes, which happens $ O (\log (nW)) $ times for each of the $ O (\log (nW)) $ choices of $ p_2(u) $.
This bounds the number of insertions of projections of $ (u, v) $ to $ H_3 $ by $ O (\log (nW)^2) $ (and in general the number of insertions to $ H_i $ by $ O (\log (nW))^{i-1} $).

Our main idea for obtaining a better bound is to employ another lazy updating scheme: we insert a projection of $ (u, v) $ to $ H_3 $ only when the sum above determining the edge weight changes significantly, in particular whenever the ``left part'' $ \widetilde{pivotDist}_3 (p_2(u)) + \widetilde{pivotDist}_2 (u) $ or the ``right part'' $ \widetilde{pivotDist}_2 (v) + \widetilde{pivotDist}_3 (p_2(v)) $ decreases by a constant factor.
In this way, we ``reproject'' $ (u, v) $ to $ H_3 $ only $ O (\log (nW)) $ times, a bound that is independent on the level at which the projection happens, which gives us the desired control in the number of insertions at each level.
Note that such projections to higher levels are not only carried out for the edges of $ G $, but also for the type~1 (ball) edges introduced at each level of the hierarchy, which can be done analogously.

More precisely, we define a set of \textit{base edges}, which are intuitively the level $i$ edges that were not previously projected from a lower level. To this end, it is convenient to define $p_i(u) = p_i( \ldots p_2(p_1(u)) \ldots)$ for any vertex $u \in V$. Then at level $i+1$, we add, in the lazy fashion explained above, a \textit{projected edge} $(p_{i+1}(u), p_{i+1}(v)) \in H_{i+1}$ corresponding to each base edge $(u,v) \in E(H_j)$ from level $j \leq i$ and setting the weight (at time of projection) to be $\sum_{j\leq i}  \widetilde{pivotDist}_j(u) + w_{H_j}(u,v)+ \sum_{j\leq i} \widetilde{pivotDist}_j (v)$.

We show that we can carry out our idea efficiently by utilizing a dynamic tree data structure on the forest induced by connecting each vertex of the hierarchy to its pivot at the next level (weighted by approximate pivot distance).
Whenever for some vertex $ v $ in some $ H_j $ the sum of the approximate pivot distances along the tree path to its ancestor pivot $ v' $ at some level $ i > j $ decreases by a constant factor, we insert to $ H_i $ the projections of all (non-projected) edges incident on $ v $ in $ H_j $.
We call the corresponding pivot $ v' $ a significantly improving pivot of $ v $ at level $ i $.
These significantly improving pivots will play a major role in our algorithm, as we explain next.


\paragraph{Challenges introduced by considering significantly improving pivots.}

While we have abandoned the ``black-box'' level-by-level approach for efficiency reasons, we still want to, in spirit, follow the proof strategy of~\cite{andoni2020parallel}, which is an inductive level-by-level stretch analysis.
This ``mismatch'' causes certain issues.
Consider again the argument of~\cite{andoni2020parallel} to show that any shortest path~$ \pi $ in $ H_i $ has a suitable approximation in $ H_{i+1} $ (see \Cref{fig:ASZpaths}).
The path~$ \pi $ is divided into segments and each segment is represented by an edge in $ H_{i+1} $.
In particular, some of these segments consist of single edges $ (x_s, y_s) $, which in particular are type~2 edges in $ H_ i $.
In the original proof, $ H_{i+1} $ contains the projection $ (p_{i+1} (x_s), p_{i+1} (y_s)) $  of $ (x_s, y_s) $ (where $ p_{i+1} (x_s) $ and $ p_{i+1} (y_s) $ are the current pivots of $ x_s $ and $ y_s $, respectively).

However, after our modifications for lazy updating we only have the weaker guarantee that $ (x_s, y_s) $ was inserted previously as the projection of some (non-projected) edge $ (\bar{x}_s, \bar{y}_s) $ from some lower level $ j < i $.
Additionally we know that $ H_{i+1} $ contains the projection $ e $ of the edge $ (\bar{x}_s, \bar{y}_s) $ from $ H_j $ and the endpoints of $ e $ are the the last significantly improving pivots at level $ i+1 $ of $ \bar{x}_s $ and $ \bar{y}_s $, respectively (see \Cref{fig:secondEdgeLifts} in \Cref{sec:main}).
The major challenge now is to still find a suitable path from $ p_{i+1} (x_s) $ to  $ p_{i+1} (y_s) $ in $ H_{i+1} $, which should include $ e $ to somehow relate the length of this path to the weight of $ (x_s, y_s) $.

\paragraph{New edges for significantly improving pivots.}

We address this challenge by introducing two new types of edges (with appropriately chosen weights) into our vertex sparsifiers:
The first new type gives us an edge from the current pivot of $ x_s $ (i.e., $ p_{i+1} (x_s) $) to the last significantly improving pivot of $ x_s $.
The second new type gives us an edge from the last significantly improving pivot of $ x_s $ to the last significantly improving pivot of $ \bar{x}_s $, i.e., the first endpoint of $ e $.
Similarly, we can use the new types of edges to find a path from the second endpoint of $ e $ to the current pivot of $ y_s $ (i.e., $ p_{i+1} (y_s) $), and thus find the desired path from $ p_{i+1} (x_s) $ to $ p_{i+1} (y_s) $.
Since the new types of edges are used in a somewhat special configuration, we can argue that they can be included in the hierarchy with only polylogarithmic overheads.
Setting the edge weights appropriately to obtain a stretch bound for this path in $ H_{i+1} $ requires some intricate estimates.
The exact definition of these edges and the full analysis can be found in~\Cref{sec:main}.

\section{Preliminaries}\label{sec:preliminaries}

\paragraph{Basic Notation.} For a general (multi-)graph $H$, we denote the edge set of the graph by $E(H)$, its vertex set by $V(H)$ and its weight function by $w_H$ where $w_H$ maps each edge in $E(H)$ to a positive number. We denote the distance between any two vertices $u,v \in V(H)$ in the graph $H$ by $dist_H(u,v)$. We denote by $B_H(u, r) = \{ dist_H(u,v) \leq r\}$ the ball at $u$ in $H$ of radius $r$. We say that $H$ is incremental if it is undergoing edge insertions. 

In this article, we denote by $G=(V,E,w)$ the input graph and define $n := |V|$, $m = |E|$ and let $w$ be the weight function with image in $[1, W]$.

\paragraph{Encoding of the Adjacency List.} We assume that additional to the usual encoding, we have for each (multi-)graph $H$ an adjacency list for each vertex $v$ denoted by $\textsc{Adj}_{H,v}$ stored as a doubly-linked list where the edges incident to $v$ appear sorted lexicographically first by weights and then by time of arrival. Here we define time of arrival for an edge to be equal to the number of edges that were in the graph before the edge was added where we assume without loss of generality that edges are added one after another and the initial graph $H$ is empty. We often index the adjacency list like an array and use $\textsc{Adj}_{H,v}[1, b]$ to refer to the set of the first $b$ edges in the adjacency list of $v$ (i.e. the $b$ edges of smallest weight). 

\paragraph{Update time.} The \emph{total update time} of an incremental algorithm is (a bound on) the sum of the running times spent by the algorithm for processing all of the $ m $ insertions and its \emph{amortized update time} is its total update time divided by $ m $.\footnote{Similarly, the total update time of a decremental algorithm is usually the sum of the running times spent by the algorithm for processing up to $ m $ deletions in a graph with initially $ m $ edges.}

\paragraph{Miscellaneous.} We define $\lceil x \rceil_{\roundConst} = \lceil x/2 \rceil \cdot 2$, where we round up $ x $ to the next multiple of~$ 2 $.

We refer to the $t$-th \textit{stage} of a dynamic algorithm as the instructions it performs after the $t$-th update.
We refer to the value of a variable or function at stage $ t $ as the value directly after the $t$-th stage and write it with the superscript ``$(t)$'' ; $ p^{(t)} (v) $ for example denotes the pivot of $ v $ at stage $ t $. We omit the superscript when it is clear from the context, for instance when we talk about the current stage.

\section{Full Algorithm and Analysis}
We start by giving the hierarchy that we maintain. We then give an algorithm to maintain the hierarchy efficiently that allows for additional query access. Finally, we give the query algorithm.

\subsection{A Distance-Preserving Vertex Sparsifier Hierarchy}\label{sec:main}

\begin{definition}[Distance-Preserving Vertex Sparsifier Hierarchy]\label{def:sparsifier hierarchy}
Given an incremental, undirected, weighted graph $G = (V,E,w)$, a $k$-level hierarchy maintaining algorithm is an algorithm that maintains vertex sparsifiers $H_1, H_2, \ldots, H_k$ for some positive integer $k$, with $V(H_1) \supseteq V(H_2) \supseteq \ldots \supseteq V(H_k) \neq \emptyset$ where $H_1 = G$ and for every $1 \leq i \leq k$, $H_i$ is an \emph{incremental} graph (with vertex insertions). We have a pivot function set to $p_1(v)=v$ for the initial level. The algorithm maintains for every $1 \leq i < k$:
\begin{enumerate}
    \item \label{prop:apprxDist}\label{prop:apAreCloseToPivots} an approximate pivot function $p_{i+1} : V \mapsto V(H_{i+1})$ that acts as the identity on $V(H_{i+1})$, and an estimator of the distance from each $v \in V(H_i)$ to its approximate pivot $\widetilde{pivotDist}_{i+1}(v)$. We enforce that
    \[
        dist_{H_i}(v, V(H_{i+1})) \leq dist_{H_i}(v, \ap_{i+1}(v)) \leq \widetilde{pivotDist}_{i+1}(v) \leq 4 \cdot dist_{H_i}(v, V(H_{i+1})).
    \]
    It also maintains for each $v \in V$, the quantity $\widetilde{pivotDist}_{i+1}(v) = \sum_{j \leq i} \widetilde{pivotDist}_{j+1}(p_j(v))$ and the value $\widetilde{minPivotDist}_{i+1}(v) = \min_{t' \leq t}\; \widetilde{pivotDist}^{(t')}_{i+1}(v)$ where $t$ is the current stage of the graph.
    For each $v \in V \setminus V(H_{i+1})$, we have $p_{i+1}(v) = p_{i+1}(p_i(v))$. 
    \item It further maintains for each $v \in V$, the \emph{last (significantly) improving pivot} $\overline{p}_{i+1}(v)$ that we define to be the approximate pivot $\ap^{(t')}_{i+1}(v)$ for $t' =  \min \{ t'' \;|\; \widetilde{pivotDist}^{(t'')}_{i+1}(v) \leq  \lceil \widetilde{minPivotDist}_{i+1}(v) \rceil_{\roundConst} \}$.\label{lastimprPivot}
\end{enumerate}

Given these values, our algorithm maintains each $H_{i+1}$ as an incremental graph consisting of two types of edges in $E(V_{i+1})$: \textit{base edges} $E^{base}_{i+1}$ which are the edges first introduced in level $i+1$, and \textit{projected edges} $E^{proj}_{i+1}$ which are projected to level $i+1$ from lower level graphs. For convenience, we define the sets $E^{base}_1 = E$ and $E^{proj}_1 = \emptyset$.\\
The algorithm is required to maintain a set of base edges $E^{base}_{i+1}$ which contains
\begin{enumerate}
   \setcounter{enumi}{2}
     \item for each $u \in V(H_i)$, and $v \in B_{H_i}(u, \frac{1}{4} \cdot \widetilde{pivotDist}_{i+1}(u))$,  an edge $(p_{i+1}(u), p_{i+1}(v))$ in $E^{base}_{i+1}$ with weight $8 \cdot \lceil \widetilde{pivotDist}_{i+1}(u) \rceil_{\roundConst}$. 
     \label{iball_edges} 
     
     \item For any vertex $v \in V(H_i)$, let $0 = t_1 < t_2 < \ldots <t_h \leq t$ be such that for $j \geq 1$, we have $t_{j+1}$ to be the first stage after stage $t_j$ such that $\ap^{(t_{j+1})}_{i+1}(v) \neq \ap^{(t_{j+1} - 1)}_{i+1}(v)$. Then, we have for any $1 \leq j < \ell \leq h$, a base edge $(\ap ^{(t_j)}_{i+1}(v), \ap^{(t_{\ell})}_{i+1}(v)) \in E^{base}_{i+1}$ with weight $8 \cdot \lceil \widetilde{minPivotDist}^{(t_j)}_{i+1}(v) \rceil_{\roundConst}$. \label{item:glueEdgesBetweenAPs} 
    
    \item For any vertex $v \in V$, times $t' \leq t$, we have at stage $t$, for $x = \hap^{(t')}_{i}(v)$, an edge $(\hap^{(t)}_{i+1}(x), \hap_{i+1}^{(t)}(v))$ in $E^{base}_{i+1}$ of weight $\lceil \widetilde{minPivotDist}^{(t)}_{i+1}(x) \rceil_2 + \lceil \widetilde{minPivotDist}^{(t')}_i(v) \rceil_2 + \lceil \widetilde{minPivotDist}^{(t)}_{i+1}(v) \rceil_2$.
      \label{item:glueEdgesBetweenBaseConnectors}
\end{enumerate}
Additionally, the algorithm maintains a set of projected edges $E^{proj}_{i+1}$ which contains 
\begin{enumerate}
    \setcounter{enumi}{5}
    \item for $j \leq i$ and $e = (x,y) \in E^{base}_j$, the edge $(\hap_{i+1}(x), \hap_{i+1}(y))$ of weight $ \lceil \widetilde{minPivotDist}_{i+1}(x)\rceil_{\roundConst} + \lceil w_{H_j}(e) \rceil_{\roundConst} +  \lceil \widetilde{minPivotDist}_{i+1}(y)\rceil_{\roundConst}$ in $E^{proj}_{i+1}$. \label{projected_edges}
\end{enumerate}
We also set for all $v \in V$ $\bar{p}_1(v)=p_1(v)=v$.
\end{definition}

We first establish the following simple facts that prove useful in the next proof of the main theorem of this section.

\begin{fact}\label{fact:pivotIsRoughlyMinPivot}
For any $v \in V(H_i)$, we have $dist_{H_i}(v, \ap_{i+1}(v)) \leq 4 \cdot \widetilde{minPivotDist}_{i+1}(v)$.
\end{fact}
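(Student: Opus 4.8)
The plan is to exploit that the whole hierarchy is incremental. The quantity $\widetilde{minPivotDist}_{i+1}(v)$ records the value $\widetilde{pivotDist}^{(t')}_{i+1}(v)$ from some possibly much earlier stage $t'$; but since vertices of $H_{i+1}$ are never removed and edges of $H_i$ are never removed, the approximate pivot that witnessed that small estimate at stage $t'$ still lies in $V(H_{i+1})$ at the current stage $t$ and is still at most that far from $v$ in $H_i$. Hence at the current stage $v$ is still within distance $\widetilde{minPivotDist}_{i+1}(v)$ of the set $V(H_{i+1})$, and the defining chain of inequalities of \Cref{def:sparsifier hierarchy} --- in particular $\widetilde{pivotDist}_{i+1}(v) \le 4 \cdot dist_{H_i}(v, V(H_{i+1}))$ --- applied at the current stage then bounds the distance to the \emph{current} approximate pivot $\ap_{i+1}(v)$.

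To make this precise I would first fix $v \in V(H_i)$ at the current stage $t$ and choose a stage $t' \le t$ attaining the minimum, i.e.\ $\widetilde{pivotDist}^{(t')}_{i+1}(v) = \widetilde{minPivotDist}_{i+1}(v)$ (note $v \in V(H_i^{(t')})$, since otherwise this estimate is undefined at stage $t'$). The first two inequalities of Property~\ref{prop:apprxDist} applied at stage $t'$ give, using $\ap^{(t')}_{i+1}(v) \in V(H_{i+1}^{(t')})$,
\[
dist_{H_i^{(t')}}(v, V(H_{i+1}^{(t')})) \le dist_{H_i^{(t')}}(v, \ap^{(t')}_{i+1}(v)) \le \widetilde{pivotDist}^{(t')}_{i+1}(v) = \widetilde{minPivotDist}_{i+1}(v) .
\]
Next I would push this forward to stage $t$: since $H_i$ only gains edges and $V(H_{i+1})$ only gains vertices between stages $t'$ and $t$, we have $dist_{H_i^{(t)}}(v,\cdot) \le dist_{H_i^{(t')}}(v,\cdot)$ and $V(H_{i+1}^{(t')}) \subseteq V(H_{i+1}^{(t)})$, whence
\[
dist_{H_i^{(t)}}(v, V(H_{i+1}^{(t)})) \le dist_{H_i^{(t')}}(v, V(H_{i+1}^{(t')})) \le \widetilde{minPivotDist}_{i+1}(v) .
\]
Finally, the last inequality of Property~\ref{prop:apprxDist} at the current stage $t$ yields $dist_{H_i}(v, \ap_{i+1}(v)) \le \widetilde{pivotDist}_{i+1}(v) \le 4 \cdot dist_{H_i}(v, V(H_{i+1})) \le 4 \cdot \widetilde{minPivotDist}_{i+1}(v)$, which is exactly the claim.

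The argument is short, and the only point that needs care is that Property~\ref{prop:apprxDist} must be invoked at two different stages --- at $t'$ for the ``$\le \widetilde{pivotDist}$'' direction, to certify that $v$ was once close to some vertex of $V(H_{i+1})$, and at $t$ for the ``$\le 4\,dist$'' direction, to pull the current pivot back in --- and the bridge between the two stages uses incrementality of both $E(H_i)$ and $V(H_{i+1})$. A single-stage argument does not suffice, since the current $\widetilde{pivotDist}_{i+1}(v)$ may in principle be strictly larger than $\widetilde{minPivotDist}_{i+1}(v)$.
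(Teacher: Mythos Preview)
Your proof is correct and follows essentially the same approach as the paper: apply Property~\ref{prop:apprxDist} at the current stage to get $dist_{H_i}(v,\ap_{i+1}(v)) \le 4\,dist_{H_i}(v,V(H_{i+1}))$, use incrementality of $H_i$ and $V(H_{i+1})$ to pass to the minimizing stage $t'$, and then apply Property~\ref{prop:apprxDist} at stage $t'$ to bound by $\widetilde{pivotDist}^{(t')}_{i+1}(v)=\widetilde{minPivotDist}_{i+1}(v)$. The paper compresses this into a single chain of inequalities, while you spell out the two-stage structure more explicitly, but the argument is the same.
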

\begin{proof}
We have at any stage $t$, $\widetilde{minPivotDist}_{i+1}(v) = \min_{t'' \leq t} \widetilde{pivotDist}_{i+1}^{(t'')}(v)$ and let $t'= \text{argmin}_{t''\leq t} \widetilde{pivotDist}_{i+1}^{(t'')}(v)$. Using that $v \in V(H_i)$, \Cref{prop:apprxDist} and that $H_i$ is incremental, we have that 
\begin{align*}
dist_{H_i}(v, \ap_{i+1}(v)) &\leq 4 \cdot dist_{H_i}(v, V(H_{i+1})) \leq 4 \cdot dist_{H^{(t')}_i}(v, V(H^{(t')}_{i+1})) \\
&\leq 4 \cdot \widetilde{pivotDist}_{i+1}^{(t')}(v) \leq 4 \cdot \widetilde{minPivotDist}_{i+1}(v).
\end{align*}
\end{proof}

We can now prove the main result of this section: we show that any algorithm that maintains a hierarchy $H_1, H_2, \ldots, H_k$ as described in \Cref{def:sparsifier hierarchy} has distances in $H_{i+1}$ being constant-factor approximations of distances in $H_i$. 

\begin{theorem}\label{thm:mainApproxTheorem}
Given a $k$-level hierarchy maintaining algorithm as described in \Cref{def:sparsifier hierarchy}, for any $1 \leq i \leq k -1$, we have for any $u,v \in V(H_{i+1})$, that $dist_{G}(u,v) \leq dist_{H_{i+1}}(u,v) \leq 3629 \cdot dist_{H_i}(u,v)$. 
\end{theorem}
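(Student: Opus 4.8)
The plan is to mimic the inductive stretch analysis of~\cite{andoni2020parallel} sketched in the overview, but carried out with the modified edge set of \Cref{def:sparsifier hierarchy}. The lower bound $dist_G(u,v) \le dist_{H_{i+1}}(u,v)$ should be a separate, easy induction: every edge weight in $E^{base}_{i+1} \cup E^{proj}_{i+1}$ is of the form (pivot-distance terms) $+$ (a lifted lower-level weight or a zero), and since $\widetilde{pivotDist}$ upper-bounds true distance to the pivot and $\lceil\cdot\rceil_2$ only rounds up, each edge of $H_{i+1}$ has weight at least the $H_i$-distance between the corresponding pivots, which by induction is at least the $G$-distance; summing along a shortest path in $H_{i+1}$ gives the bound. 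So the real work is the upper bound $dist_{H_{i+1}}(u,v) \le 3629\,dist_{H_i}(u,v)$.

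For the upper bound I would fix $u,v \in V(H_{i+1})$, take a shortest path $\pi$ in $H_i$, and segment it exactly as in the overview: vertices $u = y_0, x_1, y_1, x_2, \dots, x_\ell = v$ where $x_{s+1}$ is the last vertex of $\pi$ inside $B_{H_i}(y_s, \tfrac14\widetilde{pivotDist}_{i+1}(y_s))$ (using the true ball of radius $\tfrac14\widetilde{pivotDist}_{i+1}$, matching the definition of ball edges in item~\ref{iball_edges}), and $y_{s+1}$ is the next vertex after $x_{s+1}$. For each $s$, the pair $(y_s, x_{s+1})$ is within the designated ball, so by item~\ref{iball_edges} there is a base edge $(p_{i+1}(y_s), p_{i+1}(x_{s+1}))$ of weight $5\lceil\widetilde{pivotDist}_{i+1}(y_s)\rceil_2$, which I bound in terms of $dist_{H_i}(y_s,x_{s+1})$ plus a slack — here one uses $dist_{H_i}(x_{s+1},y_{s+1}) > \tfrac14\widetilde{pivotDist}_{i+1}(y_s)$ (since $y_{s+1}$ left the ball) to charge the slack to the next segment, telescoping as in~\cite{andoni2020parallel}. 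For the ``crossing'' edges $(x_s, y_s)$: this is a base edge of $H_i$ (or, at $i=1$, an edge of $G$), so by item~\ref{projected_edges} it was projected to $H_{i+1}$, but — as flagged in the overview — only as the projection of some base edge $(\bar x_s,\bar y_s)$ from a level $j\le i$ whose projection endpoints are $\hap_{i+1}(\bar x_s), \hap_{i+1}(\bar y_s)$, the last improving pivots, not necessarily the current pivots $p_{i+1}(x_s), p_{i+1}(y_s)$. To close this gap I would assemble the promised detour: from $p_{i+1}(x_s)$ to $\hap_{i+1}(x_s)$ using item~\ref{item:glueEdgesBetweenAPs} (an edge between two successive-in-time approximate pivots of $x_s$, of weight $8\lceil\widetilde{minPivotDist}_{i+1}\rceil_2$), then from $\hap_{i+1}(x_s)$ to $\hap_{i+1}(\bar x_s)$ using item~\ref{item:glueEdgesBetweenBaseConnectors} (the ``base connector'' edge, exploiting $\bar x_s = \hap_j(x_s)$-type relationship), then the projected edge $e$ from item~\ref{projected_edges}, then the mirror-image two edges back to $p_{i+1}(y_s)$. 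Concatenating all segment-paths and detours gives a $u$–$v$ walk in $H_{i+1}$; its length is a constant times $\sum_s dist_{H_i}(y_s,x_{s+1}) + \sum_s dist_{H_i}(x_s,y_s) + (\text{slack terms})$, and after telescoping the slack against the already-accounted path length one gets $O(1)\cdot dist_{H_i}(u,v)$, with the constant worked out to $3629$.

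The main obstacle will be the bookkeeping of the pivot-distance estimates in the detour around each crossing edge. Every edge I splice in carries a weight that is some $\widetilde{minPivotDist}$ or $\widetilde{pivotDist}$ term (possibly at an old stage $t'$, possibly the compound $\widetilde{pivotDist}_{i+1}(v) = \sum_{j\le i}\widetilde{pivotDist}_{j+1}(p_j(v))$), and I must show all these are within a constant factor of $dist_{H_i}(x_s, V(H_{i+1}))$ — which in turn must be bounded by $dist_{H_i}(y_s,x_{s+1})$ or the like so it gets absorbed into the path length rather than blowing up. The ingredients are \Cref{fact:pivotIsRoughlyMinPivot} (relating $\widetilde{minPivotDist}$ to true pivot distance), the rounding identity $x \le \lceil x\rceil_2 \le x+2$, the incrementality of all $H_i$ (distances only decrease, so old estimates upper-bound current true distances), and the $\lceil\cdot\rceil_2$-threshold definition of $\hap$ guaranteeing $\widetilde{pivotDist}$ at the improving stage is within a factor $2$ of $\widetilde{minPivotDist}$. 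I expect the proof to proceed by first proving a handful of such ``$O(1)$-comparability'' lemmas for these quantities, then a clean case analysis on segment type, then a final summation; the factor $3629$ is whatever falls out and I would not try to optimize it.
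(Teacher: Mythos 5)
Your plan follows the paper's proof essentially verbatim: the same induction over the four edge types for the lower bound, the same segmentation of the $H_i$-shortest path into ball segments and crossing edges, the same use of the ball edges of weight $5\lceil\widetilde{pivotDist}_{i+1}(y_s)\rceil_2$ with the slack charged via $\widetilde{pivotDist}_{i+1}(y_s) \leq 4\, dist_{H_i}(y_s,y_{s+1})$, and the same detour $p_{i+1}(x_s) \leadsto \hap_{i+1}(x_s) \leadsto \hap_{i+1}(\overline{x_s})$ built from the edges of \Cref{item:glueEdgesBetweenAPs} and \Cref{item:glueEdgesBetweenBaseConnectors} around each projected crossing edge, closed out with \Cref{fact:pivotIsRoughlyMinPivot}, rounding, and incrementality exactly as in \Cref{lma:secondtypeEdgesLift}. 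Two small corrections to your bookkeeping: the out-of-ball condition gives $dist_{H_i}(y_s,y_{s+1}) > \tfrac14\widetilde{pivotDist}_{i+1}(y_s)$ (the distance from $y_s$, not from $x_{s+1}$), and the relationship exploited for the \Cref{item:glueEdgesBetweenBaseConnectors} edge is $x_s = \hap_i^{(t')}(\overline{x_s})$, i.e., $x_s$ is the last improving pivot of the lower-level vertex, not the reverse.
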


We defer the proof of the lower bound given in the theorem  to \Cref{sec:LowerBoundingVertexSparsifier} as the proof is rather tedious, and focus for the rest of the section on achieving the upper bound.

\paragraph{Creating Path Segments.} Let $\pi$ be the shortest path between $u$ and $v$ in $H_i$. We show that there is a path with the desired stretch in $H_{i+1}$ by dividing this path into segments defined by a sequence of vertices $u=y_0,...,y_{\ell-1}$ and $x_1,...,x_{\ell}=v$ on $\pi$ found by the following procedure:
\begin{itemize}
    \item $y_0 \gets u$, $s \gets 0$, repeat the following two steps:
    \item Let $x_{s+1}$ be the last vertex on $\pi$ such that $dist_{H_i}(y_{s}, x_{s+1}) \leq \frac{1}{4}\widetilde{pivotDist}_{i+1}(y_{s})$. If $x_{s+1} = v$, the procedure terminates.
    \item Otherwise, find $y_{s+1}$ to be the vertex that appears next on $\pi$ after $x_{s+1}$. $s \gets s+1$.
\end{itemize}
    \paragraph{Mapping Path Segments into $H_{i+1}$.} For $s = 0, \ldots, \ell - 1$, we have by \Cref{iball_edges} that edge $(\ap_{i+1}(y_s), \ap_{i+1}(x_{s+1}))$ exists in $H_{i+1}$ of weight $8 \cdot \lceil \widetilde{pivotDist}_{i+1}(y_s) \rceil_{\roundConst}$. For $s = 1, \ldots, \ell-1$, we need to find paths from  $\ap_{i+1}(x_s)$ to $\ap_{i+1}(y_s)$. However, this turns out to be a considerably more laborious task. The following lemma summarizes the result.

\begin{figure}
    \centering
    \includegraphics[scale=0.4]{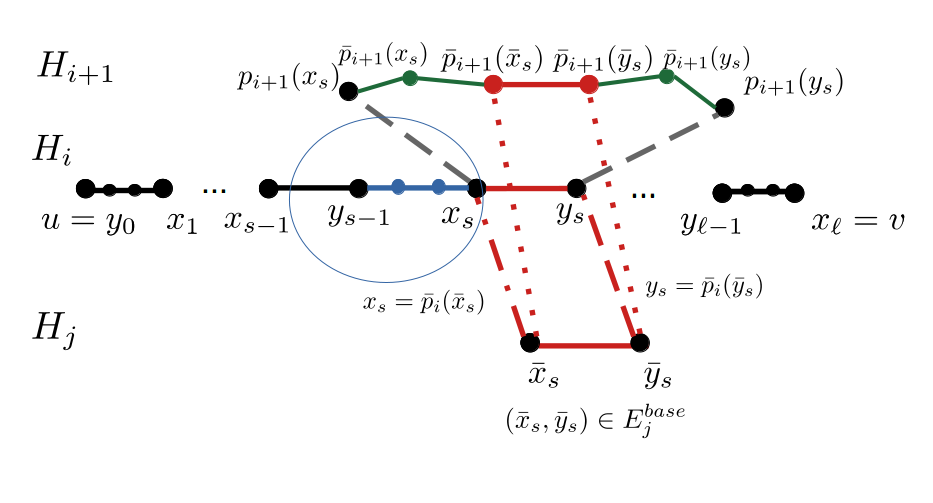}
    \caption{A sketch of the paths discussed in \Cref{lma:secondtypeEdgesLift} and the projections from level $i$ to $i+1$ and also from a level $j <i$ to $i+1$. Note that here in applying \Cref{clm:simpleHelperStretch} we set $z= \{x_s, y_s\}$. Dashed lines represent projections from lower levels to higher levels. Here $x_s$ is in the ball of $y_{s-1}$ and we have $(x_s,y_s) \in H_i$.}
    \label{fig:secondEdgeLifts}
\end{figure}

\begin{lemma}\label{lma:secondtypeEdgesLift}
For $s = 1, \ldots, \ell-1$, we have $dist_{H_{i+1}}(\ap_{i+1}(x_s), \ap_{i+1}(y_s)) \leq  69 \cdot w_{H_i}(x_s, y_s) + \sum_{z \in \{x_s, y_s\}} 85 \cdot \lceil \widetilde{minPivotDist}_{i+1}(z)\rceil_{\roundConst}.$
\end{lemma}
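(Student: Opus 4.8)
The goal is to bound $dist_{H_{i+1}}(\ap_{i+1}(x_s), \ap_{i+1}(y_s))$ where $(x_s, y_s)$ is an edge of $H_i$ on the shortest path $\pi$. The key difficulty, as flagged in the overview, is that this edge of $H_i$ need not itself be a base edge of $H_i$: it may have been \emph{projected} into $H_i$ from some base edge $(\bar x_s, \bar y_s) \in E^{base}(H_j)$ at a lower level $j \leq i-1$ (or it is itself a base edge of $H_i$, which we treat as the degenerate case $j = i$ or handle directly). So first I would set up this case distinction, unfolding \Cref{projected_edges}: since $(x_s, y_s) \in E(H_i)$, either it lies in $E^{base}(H_i)$ — in which case \Cref{projected_edges} (applied at level $i$, with $j=i$) gives directly a projected edge $(\hap_{i+1}(x_s), \hap_{i+1}(y_s)) \in E^{proj}_{i+1}$ of weight $\lceil \widetilde{minPivotDist}_{i+1}(x_s)\rceil_2 + \lceil w_{H_i}(x_s,y_s) \rceil_2 + \lceil \widetilde{minPivotDist}_{i+1}(y_s)\rceil_2$, and we are basically done — or it is a projected edge in $E^{proj}_i$, meaning $x_s = \hap_i(\bar x_s)$, $y_s = \hap_i(\bar y_s)$ for some base edge $(\bar x_s, \bar y_s) \in E^{base}(H_j)$, $j < i$, with $w_{H_i}(x_s,y_s)$ equal to the corresponding sum of rounded $\widetilde{minPivotDist}$ and edge-weight terms.

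Concentrating on the harder projected case, the plan is to build the path in $H_{i+1}$ from $\ap_{i+1}(x_s)$ to $\ap_{i+1}(y_s)$ by stitching together five pieces, matching the "new edges" introduced in \Cref{item:glueEdgesBetweenAPs} and \Cref{item:glueEdgesBetweenBaseConnectors}: (1) an edge from $\ap_{i+1}(x_s) = \hap^{(t)}_{i+1}(x_s)$ to a last-significantly-improving-type pivot of $x_s$ using \Cref{item:glueEdgesBetweenAPs} (the glue edges between the successive approximate pivots of $x_s$ at level $i+1$), of weight $O(\lceil \widetilde{minPivotDist}_{i+1}(x_s)\rceil_2)$; (2) a "base connector" edge from $\hap_{i+1}(x_s)$ to $\hap_{i+1}(\bar x_s)$ via \Cref{item:glueEdgesBetweenBaseConnectors} (taking $v = x_s$, $x = \hap_i(\bar x_s) = \ldots$; one has to chase the indices carefully here so that the roles of $x$ and $v$ in that item line up with $\bar x_s$ and $x_s$), of weight bounded by the three rounded $\widetilde{minPivotDist}$ terms; (3) the projected edge $e = (\hap_{i+1}(\bar x_s), \hap_{i+1}(\bar y_s)) \in E^{proj}_{i+1}$ coming from the base edge $(\bar x_s, \bar y_s) \in E^{base}(H_j)$ via \Cref{projected_edges}, whose weight is $\lceil \widetilde{minPivotDist}_{i+1}(\bar x_s)\rceil_2 + \lceil w_{H_j}(\bar x_s, \bar y_s)\rceil_2 + \lceil \widetilde{minPivotDist}_{i+1}(\bar y_s)\rceil_2$; (4) the mirror image of (2), a base connector from $\hap_{i+1}(\bar y_s)$ back to $\hap_{i+1}(y_s)$; and (5) the mirror image of (1), glue edges bringing us from there to $\ap_{i+1}(y_s)$. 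Summing the weights of these five edges gives an upper bound on $dist_{H_{i+1}}(\ap_{i+1}(x_s), \ap_{i+1}(y_s))$, and what remains is bookkeeping: express every $\widetilde{minPivotDist}_{i+1}$-at-$\bar x_s$, $\widetilde{minPivotDist}_{j}$-at-$\bar x_s$, and $w_{H_j}(\bar x_s, \bar y_s)$ term in terms of $w_{H_i}(x_s,y_s)$ and $\widetilde{minPivotDist}_{i+1}(x_s), \widetilde{minPivotDist}_{i+1}(y_s)$, using the fact that $w_{H_i}(x_s,y_s)$ — being the weight of the projected edge in $E^{proj}_i$ — \emph{already contains} $\lceil \widetilde{minPivotDist}_i(\bar x_s)\rceil_2 + \lceil w_{H_j}(\bar x_s,\bar y_s)\rceil_2 + \lceil \widetilde{minPivotDist}_i(\bar y_s)\rceil_2$, so these quantities are each at most $w_{H_i}(x_s,y_s)$. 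One then needs monotonicity/telescoping facts about $\widetilde{minPivotDist}$ over levels and over time (like \Cref{fact:pivotIsRoughlyMinPivot}, and the triangle-inequality-style relation $\widetilde{minPivotDist}_{i+1}(\bar x_s) \le \widetilde{minPivotDist}_{i+1}(x_s) + \text{(distance from $\bar x_s$ up to $x_s$)} \lesssim \widetilde{minPivotDist}_{i+1}(x_s) + w_{H_i}(x_s,y_s)$), and there is presumably a helper claim (referred to as \Cref{clm:simpleHelperStretch} in \Cref{fig:secondEdgeLifts}) that packages the "glue edges plus base connector" into a clean bound; I would invoke that helper with $z \in \{x_s, y_s\}$ exactly as the figure caption suggests.

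The final step is to add up all the numerical constants and check they fit under $69 \cdot w_{H_i}(x_s,y_s) + \sum_{z \in \{x_s,y_s\}} 85 \cdot \lceil \widetilde{minPivotDist}_{i+1}(z)\rceil_2$. Each rounding operation $\lceil \cdot \rceil_2$ at most doubles (or adds $2$, hence at most doubles since weights are $\ge 1$), and each of the five path pieces contributes a constant multiple of either $w_{H_i}(x_s,y_s)$ or one of the two $\widetilde{minPivotDist}_{i+1}$ terms; since there are only a constant number of pieces and a constant number of rounding/monotonicity steps, the constants stay bounded, and one just has to be slightly careful to keep them below $69$ and $85$ respectively.

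The main obstacle I expect is the index-chasing in step (2)/(4) — correctly matching the quantified variables $v, t', t, x$ in \Cref{item:glueEdgesBetweenBaseConnectors} to $\bar x_s, x_s$ and the relevant stages, and making sure the "last significantly improving pivot" $\hap_{i+1}$ at the right stage is the one actually appearing as an endpoint of the projected edge $e$ (this is precisely the mismatch between the lazy-update algorithm and the Andoni–Stein–Zhong-style inductive proof that the overview warns about). The stretch constants are routine once the structure of the five-edge path is pinned down; the real work is verifying that every edge invoked genuinely exists in $H_{i+1}$ at the current stage, which forces a careful argument that the stages $t_j$ and pivots line up across the $E^{base}$/$E^{proj}$ definitions.
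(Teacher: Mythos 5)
Your plan reconstructs the paper's proof: the same five-edge path $\ap_{i+1}(x_s) \to \hap_{i+1}(x_s) \to \hap_{i+1}(\overline{x_s}) \to \hap_{i+1}(\overline{y_s}) \to \hap_{i+1}(y_s) \to \ap_{i+1}(y_s)$ built from Items \ref{item:glueEdgesBetweenAPs}, \ref{item:glueEdgesBetweenBaseConnectors} and \ref{projected_edges} (the first two packaged exactly as in \Cref{clm:simpleHelperStretch}), followed by the same bookkeeping that uses $w_{H_i}(x_s,y_s) = \lceil \widetilde{minPivotDist}^{(t')}_i(\overline{x_s})\rceil_{\roundConst} + w_{H_j}(\overline{x_s},\overline{y_s}) + \lceil \widetilde{minPivotDist}^{(t')}_i(\overline{y_s})\rceil_{\roundConst}$ and \Cref{fact:pivotIsRoughlyMinPivot} to convert $\widetilde{minPivotDist}_{i+1}(\overline{z})$ into the stated terms. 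Up to the index-chasing in Item \ref{item:glueEdgesBetweenBaseConnectors} (where $v=\overline{x_s}$ and $x=\hap^{(t')}_i(\overline{x_s})=x_s$, which you flag but leave slightly permuted) and the unexecuted constant arithmetic, this is the same argument as the paper's.
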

\begin{proof}
Since $x_s$ and $y_s$ are neighbors in $H_i$
we either have $(x_s, y_s) \in E^{base}_i$ or $(x_s, y_s) \in E^{proj}_i$. In the following argument both we handle both of these cases simultaneously. 
By construction of the path segments, and by \Cref{projected_edges}, that there is an edge $(\overline{x_s}, \overline{y_s}) \in E^{base}_j$ for some $j \leq i$, such that $(\hap^{(t')}_{i}(\overline{x_s}), \hap^{(t')}_i(\overline{y_s})) = (x_s, y_s)$ at some stage $t' \leq t$. Note that here if $(x_s,y_s) \in E^{base}_i$ then we are using the fact that $\bar{x}_s=x_s$ and $\bar{y}_s=y_s$.

Again by \Cref{projected_edges}, we have $(\hap_{i+1}(\overline{x_s}), \hap_{i+1}(\overline{y_s})) \in E^{proj}_{i+1}$ of weight $\lceil \widetilde{minPivotDist}_{i+1}(\overline{x_s})\rceil_{\roundConst} + \lceil w_{H_j}(\overline{x_s}, \overline{y_s}) \rceil_{\roundConst} +  \lceil \widetilde{minPivotDist}_{i+1}(\overline{y_s})\rceil_{\roundConst}$. 
    
It remains to find paths from $\ap_{i+1}(x_s)$ to $\hap_{i+1}(\overline{x_s})$ and from $\ap_{i+1}(\overline{y_s})$ to $\ap_{i+1}(y_s)$. To this end, we employ the simple claim below. This establishes the existence of a path $\ap_{i+1}(x_s) \leadsto \hap_{i+1}(\overline{x_s}) \leadsto \hap_{i+1}(\overline{y_s}) \leadsto \ap_{i+1}(y_s)$. 

\begin{claim}\label{clm:simpleHelperStretch}
For any $v \in V$ and $t' \leq t$, where we define $z = \hap_i^{(t')}(v)$, $dist_{H_{i+1}}(\ap^{(t)}_{i+1}(z), \hap_{i+1}^{(t)}(v)) \leq \lceil \widetilde{minPivotDist}^{(t')}_i(v) \rceil_2 + 17 \lceil \widetilde{minPivotDist}^{(t)}_{i+1}(z) \rceil_2 + \lceil \widetilde{minPivotDist}^{(t)}_{i+1}(v) \rceil_2$.
\end{claim}
\begin{proof}
We show that the path $\langle \ap_{i+1}(z), \hap_{i+1}(z), \hap_{i+1}(v) \rangle$ exists in $H_{i+1}$ and is of small weight. Note that it is possible for some of these vertices to be the same, e.g.~$\ap_{i+1}(z)=\hap_{i+1}(z)$, but this would be a simpler case, as nothing needs to be show for the corresponding edge. Otherwise, we show the existence of each edge on this path one-by-one:
\begin{itemize}
    \item $(\ap_{i+1}(z), \hap_{i+1}(z))$: note that since $z \in V(H_i)$, we have that $\hap^{(t)}_{i+1}(z) = \ap^{(t'')}_{i+1}(z)$ for some $t''$ by the definition of last improving pivots (see \Cref{lastimprPivot}). But note that by \Cref{item:glueEdgesBetweenAPs}, we thus have an edge $(\ap^{(t'')}_{i+1}(z), \ap^{(t)}_{i+1}(z))$ of weight $8 \cdot \lceil \widetilde{minPivotDist}^{(t'')}_{i+1}(z) \rceil_{\roundConst}$. Using that $\hap^{(t)}_{i+1}(z)$ is updated by $\ap^{(t)}_{i+1}(z)$ whenever $\widetilde{minPivotDist}_{i+1}(z)$ improves by at most a factor of two, we can further upper bound the edge weight by $16 \cdot \lceil \widetilde{minPivotDist}^{(t)}_{i+1}(z) \rceil_{\roundConst}$.
    \item $(\hap_{i+1}(z), \hap_{i+1}(v))$: by \Cref{item:glueEdgesBetweenBaseConnectors} this edge is in $E_{i+1}^{base}$ with weight $\lceil \widetilde{minPivotDist}^{(t)}_{i+1}(z) \rceil_2 + \lceil \widetilde{minPivotDist}^{(t')}_i(v) \rceil_2 + \lceil \widetilde{minPivotDist}^{(t)}_{i+1}(v) \rceil_2$.
\end{itemize}
The distance then follows by summing over the upper bounds on the edge weights.
\end{proof}

Straight-forward addition of the upper bounds on the path segments of the  exposed path $\ap_{i+1}(x_s) \leadsto \hap_{i+1}(\overline{x_s}) \leadsto \hap_{i+1}(\overline{y_s}) \leadsto \ap_{i+1}(y_s)$ thus establishes that
\begin{align*}
    dist_{H_{i+1}}(\ap_{i+1}(x_s), \ap_{i+1}(y_s)) \leq &\sum_{z \in \{x_s, y_s, \overline{x_s}, \overline{y_s}\}} 17 \cdot \lceil \widetilde{minPivotDist}_{i+1}(z)\rceil_{\roundConst} \\ &+ \lceil w_{H_j}(\overline{x_s}, \overline{y_s}) \rceil_{\roundConst}  +\sum_{z \in \{ \overline{x_s},  \overline{y_s} \}} \lceil \widetilde{minPivotDist}^{(t')}_i(z) \rceil_2.
\end{align*}
To simplify this expression, we note that 
\[
w_{H_i}(x_s, y_s) =  \lceil \widetilde{minPivotDist}^{(t')}_i(\overline{x_s}) \rceil_2 + w_{H_j}(\overline{x_s}, \overline{y_s}) + \lceil \widetilde{minPivotDist}^{(t')}_i(\overline{y_s}) \rceil_2.\]
For $j < i$, this follows from \Cref{projected_edges} while for $j = i$, we have by \Cref{prop:apprxDist} that $x_s = \overline{x_s}$, $y_s = \overline{y_s}$ and $\widetilde{minPivotDist}_i^{(t')}(\overline{x}) = \widetilde{minPivotDist}_i^{(t')}(\overline{x}) = 0$.

Further, for $z \in \{x_s, y_s\}$, using first \Cref{prop:apprxDist} and then \Cref{fact:pivotIsRoughlyMinPivot} yields
\begin{align*}
\widetilde{minPivotDist}_{i+1}(\overline{z}) &= \sum_{j \leq j' \leq i} \widetilde{pivotDist}_{j'+1}(\ap_{j'}(\overline{z})) \leq 4 \cdot \sum_{j \leq j' \leq i} dist_{H_{j'}}(\ap_{j'}(\overline{z}), V(H_{{j'}+1})) \\
&\leq 4 \cdot \left(\lceil \widetilde{minPivotDist}^{(t')}_i(\overline{z}) \rceil_{\roundConst} + dist_{H_i}(z, \ap_{i+1}(z))\right)
\end{align*}
and therefore
\begin{align*}\widetilde{minPivotDist}_{i+1}(\overline{x_s}) &+ \widetilde{minPivotDist}_{i+1}(\overline{y_s}) \\ 
&\leq 4 \left( w_{H_i}(x_s, y_s) + \sum_{z \in \{x_s, y_s\}} \widetilde{minPivotDist}_{i+1}(z)\right).
\end{align*}
where we use \Cref{fact:pivotIsRoughlyMinPivot} in the last inequality. Our claim now follows by combining these insights.
\end{proof}

\paragraph{Analyzing distances in $H_i$.} For $s = 0, 1, \ldots, \ell-2$, by choice of $x_{s+1}$ and $y_{s+1}$, we have $\widetilde{minPivotDist}_{i+1}(y_{s}) \leq \widetilde{pivotDist}_{i+1}(y_{s}) \leq 4 \cdot dist_{H_i}(y_s, y_{s+1})$ and therefore, using \Cref{prop:apAreCloseToPivots}, also
\begin{align} 
\widetilde{minPivotDist}_{i+1}(x_{s+1}) &\leq \widetilde{pivotDist}_{i+1}(x_{s+1}) \leq 4 \cdot dist_{H_i}(x_{s+1}, V(H_{i+1})) \label{eq:minPivotX}\\
&\leq 4\left(dist_{H_i}(x_{s+1}, y_s) + dist_{H_i}(y_s, \ap_{i+1}(y_s))\right) \nonumber\\
&\leq 4\left(dist_{H_i}(x_{s+1}, y_s) + dist_{H_i}(y_s, \widetilde{pivotDist}_{i+1}(y_s))\right) \nonumber\\
&\leq 16\left(dist_{H_i}(x_{s+1}, y_s) + dist_{H_i}(y_s, y_{s+1})\right) = 16 \cdot dist_{H_i}(x_{s+1}, y_{s+1}).\nonumber
\end{align}
In the last line we used the definition of $x_{s+1}$ and the fact that $y_{s+1}$ is further on $\pi$ from $y_s$. In the last equality we have used the fact that $\pi$ is the shortest path in $H_i$.

Using again \Cref{prop:apAreCloseToPivots}, we further obtain via the triangle inequality
\begin{align}
\widetilde{minPivotDist}_{i+1}(y_{s+1}) &\leq \widetilde{pivotDist}_{i+1}(y_{s+1}) \leq  4 \cdot dist_{H_i}(y_{s+1}, V(H_{i+1})) \label{eq:minPivotY}\\
&\leq 4 \cdot \left( dist_{H_i}(y_{s+1}, y_s) + dist_{H_i}(y_{s}, V(H_{i+1}))\right) \nonumber\\
&\leq 4 \cdot \left( dist_{H_i}(y_{s+1}, y_s) + \widetilde{pivotDist}_{i+1}(y_{s})\right) 
\leq 20 \cdot dist_{H_i}(y_{s+1}, y_s). \nonumber
\end{align}

\paragraph{Analyzing distances in $H_{i+1}$.} Let us recall our analysis: we segmented the path $\pi$ using the sequence $y_0, x_1, y_1, \ldots, y_{\ell-1}, x_{\ell}$ into segments in $H_i$ that we then lift to $H_{i+1}$. The total weight of segments $y_s \leadsto x_{s+1}$ lifted to $H_{i+1}$ is  
\begin{align*}
    \sum_{s=0}^{\ell-1} w_{H_{i+1}}(&\ap_{i+1}(y_s), \ap_{i+1}(x_{s+1}))
    = 8 \sum_{s=0}^{\ell-1} \lceil \widetilde{pivotDist}_{i+1}(y_s) \rceil_{\roundConst} = 8 \sum_{s=1}^{\ell-1} \lceil \widetilde{pivotDist}_{i+1}(y_s) \rceil_{\roundConst} 
\end{align*}
where we use in the last equality that $y_0 = u \in V(H_{i+1})$ which implies that $\widetilde{pivotDist}_{i+1}(y_0) = 0$ by \Cref{prop:apAreCloseToPivots}. It remains to use \Cref{eq:minPivotY} to deduce that 
\[
\sum_{s=0}^{\ell-1} w_{H_{i+1}}(\ap_{i+1}(y_s), \ap_{i+1}(x_{s+1})) \leq 160 \cdot dist_{H_i}(u,v).
\]

For the second type of segments, we can bound the weight of these segments using \Cref{lma:secondtypeEdgesLift} to obtain
\begin{align*}
    \sum_{s = 1}^{\ell -1} dist_{H_{i+1}}&(\ap_{i+1}(x_s),\ap_{i+1}(y_s)) \\
    &\leq \sum_{s = 1}^{\ell -1} \left(69 \cdot w_{H_i}(x_s, y_s) + \sum_{z \in \{x_s, y_s\}} 85 \cdot \lceil \widetilde{minPivotDist}_{i+1}(z)\rceil_{\roundConst}\right)\\
    & \leq 69 \cdot dist_{H_i}(u,v) + 85 \cdot (40 \cdot dist_{H_i}(u,v))\\
    &\leq 3469 \cdot dist_{H_i}(u,v).
\end{align*}
Combining the weights of these two segment types yields $dist_{H_{i+1}}(u,v) \leq 3629 \cdot dist_{H_i}(u,v)$, as desired.

\subsection{An Algorithm for Maintaining the Hierarchy}

The main result of this section is summarized in the following theorem.

\begin{restatable}{theorem}{mainAlgoResult}\label{thm:algoToMaintainVertexSparsifierHier}
Given an incremental, undirected, weighted graph $G=(V,E,w)$, there is a deterministic algorithm that maintains the hierarchy of vertex sparsifiers $H_1, H_2, \ldots, H_k$ as described in \Cref{def:sparsifier hierarchy} for some $k = \Theta(\log\log n)$. Additionally, the algorithm answers queries given a level $1 \leq i \leq k$, and vertices $u,v \in V(H_i)$ where the query returns a distance estimate $ \widehat{dist}(u,v)$ that satisfies $dist_{H_i}(u,v) \leq \widehat{dist}(u,v)$ and if $u \in B_{H_i}(v, \frac{1}{8}\widetilde{pivotDist}_{i+1}(v))$ or $v \in B_{H_i}(u, \frac{1}{8}\widetilde{pivotDist}_{i+1}(u))$ where we define $\widetilde{pivotDist}_{k+1}(v) = \infty$, it is further guaranteed that $\widehat{dist}(u,v) \leq 2 \cdot dist_{H_i}(u,v)$. 

The algorithm maintains the vertex sparsifiers in total time $O(kn \log^6 nW + km)$ and answers every query in worst-case $O(1)$ time.
\end{restatable}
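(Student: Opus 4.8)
The plan is to describe, level by level, a data structure that maintains all the objects required by Definition~\ref{def:sparsifier hierarchy}, and to account for its running time via a potential-type argument. First I would set up the parameters: choose $b_i = 2^{(6/5)^i}$ and $k = \Theta(\log\log n)$ so that $|V(H_k)|$ is $O(1)$ and, crucially, $\prod_i (\text{overhead per level})$ stays polylogarithmic; one should verify $|V(H_{i+1})| \le |V(H_i)|/(b_i \log(nW))$ inductively so that $\sum_i |V(H_i)|\, b_i^{O(1)} = \tilde O(n)$ and $\sum_i |E(H_i)| \le m + \tilde O(n)$. The heart of the algorithm is, for each level $i$, a subroutine in the spirit of Algorithm~\ref{alg:overview}: after every edge insertion into $H_i$, run truncated Dijkstra (using the weight-sorted adjacency lists $\textsc{Adj}_{H_i,v}$) from affected vertices to test whether $|B_{H_i}(v, \tfrac14 \widetilde{pivotDist}_{i+1}(v))| \ge b_i$, and if so either adopt a closer pivot or declare $v$ a pivot and re-assign the scanned ball $B_v$. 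As argued in the overview, each vertex triggers $O(b_i^2 \log(nW))$ recomputations, each costing $\tilde O(b_i^2)$, between consecutive decreases of $\widetilde{pivotDist}_{i+1}$, and $\widetilde{pivotDist}_{i+1}$ decreases $O(\log(nW))$ times; this gives the claimed $O(n \log^6(nW))$-type bound per level and $O(km)$ for touching edges, hence $O(kn\log^6 nW + km)$ overall. The query procedure for a pair $u,v \in V(H_i)$ is the classical distance-oracle move: check whether $u$ lies in the explicitly stored ball of $v$ (or vice versa) and if so return the exact/estimated in-ball distance, which is within a factor $2$ by the ball radius being $\tfrac18\widetilde{pivotDist}_{i+1}$ versus the $\tfrac14$ threshold used for ball edges; otherwise return $\widehat{dist}(u,v) = \infty$ (or the trivial upper bound), which is still a valid over-estimate. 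Correctness of the $\le 2\cdot dist_{H_i}(u,v)$ guarantee follows because in the in-ball case the stored distance is exact up to the discretization of weights to powers of two and the $\lceil\cdot\rceil_2$ rounding.

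The second task is to show the maintained edge sets really satisfy properties \ref{iball_edges}--\ref{projected_edges}. Ball edges (\ref{iball_edges}) and the ``glue'' edges between successive approximate pivots of a vertex (\ref{item:glueEdgesBetweenAPs}) are added exactly when the subroutine changes a pivot; since $p_{i+1}(v)$ changes $O(\log nW)$ times, the all-pairs glue edges among those $h = O(\log nW)$ snapshots cost only $O(\log^2 nW)$ edges per vertex. For the projected edges (\ref{projected_edges}) I would use the ``multi-level projection via dynamic trees'' idea from the overview: maintain the forest in which each $v \in V(H_j)$ points to $p_{j+1}(v)$ with edge weight $\widetilde{pivotDist}_{j+1}(v)$, equipped with a link-cut / top-tree structure supporting path-sum queries and ``find highest ancestor where the accumulated sum dropped by a factor $2$.'' Whenever for a base edge $(x,y) \in E^{base}_j$ the accumulated pivot-distance sum from $x$ (or $y$) up to its last significantly improving ancestor at level $i{+}1$ decreases by a constant factor, re-insert the projection $(\hap_{i+1}(x),\hap_{i+1}(y))$ with the weight prescribed in \ref{projected_edges}. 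Because each such sum decreases $O(\log nW)$ times \emph{regardless of the level} $i$, every base edge is reprojected $O(\log nW)$ times to each level, giving $O(k \log nW)$ reprojections total per base edge and hence $\tilde O(m + n)$ projected-edge insertions across the whole hierarchy rather than the catastrophic $(\log nW)^{\Theta(\log\log n)}$ of the black-box approach. The connector edges (\ref{item:glueEdgesBetweenBaseConnectors}), linking $\hap_{i+1}$ of a vertex to $\hap_{i+1}$ of its level-$i$ last improving pivot, are similarly maintained: they only change when one of the finitely many relevant $\widetilde{minPivotDist}$ values changes, so $O(\log nW)$ updates per (vertex, level) pair, and their weights are read off directly from the stored $\widetilde{minPivotDist}$ values.

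Finally I would assemble the running time: the dynamic-tree operations cost $O(\log n)$ each, and the number of operations is dominated by the number of edge (re)insertions across all levels, which is $\tilde O(m + n)$ as above; the Dijkstra-based pivot maintenance contributes $O(\sum_i |V(H_i)| b_i^{O(1)} \log^{O(1)} nW) = O(n \log^6 nW)$ after absorbing the level sum into the leading term by the super-exponential growth of $b_i$; and touching each inserted edge to form its projections over the $O(\log\log n)$ levels above it costs $O(km)$. The query time is $O(1)$ because after locating the relevant level we perform a constant number of dictionary lookups into the stored balls. The main obstacle — and the part deserving the most care — is the amortized recourse bound for projected edges: one must argue that the ``decreases by a constant factor'' trigger, applied to the path-sum up to the last \emph{significantly improving} pivot (not the current pivot), still fires only $O(\log nW)$ times per base edge per level, even though pivots themselves change more often; this is exactly why the last-improving-pivot and $\widetilde{minPivotDist}$ bookkeeping of Definition~\ref{def:sparsifier hierarchy} is set up the way it is, and why the rounding operator $\lceil\cdot\rceil_2$ appears everywhere. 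Once that recourse bound is in hand, the rest is careful but routine accounting.
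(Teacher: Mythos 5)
Your proposal follows essentially the same route as the paper: the same parameter choice $b_i = 2^{(6/5)^i}$ with the level costs absorbed via the super-exponential growth of $b_i$, the same truncated-Dijkstra pivot maintenance amortized over the $O(\log nW)$ geometric decreases of $\widetilde{pivotDist}_{i+1}$, the same dictionary-based $O(1)$ queries with the factor $2$ from rounding weights to powers of two, and the same dynamic-tree bookkeeping over the pivot forest to detect constant-factor drops of the accumulated pivot-distance sums and trigger lazy reprojection. The "main obstacle" you flag at the end is exactly what the paper resolves with its $\widetilde{mpd}$/marked-leaf weighting of the forest and the monotone halving of $\lceil\widetilde{minPivotDist}\rceil_{\roundConst}$, so your plan matches the paper's proof in all essentials.
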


For the rest of this section, we assume w.l.o.g. that $G$ is initially connected, has diameter at most $n^2 W$\footnote{We use $n^2 W$ factor, so that we can use the connectivity assumption w.l.o.g.~as otherwise we would add a super source with weights $nW$.} and all edge weights in $[2, W]$. 

Let us now start by giving an algorithm to maintain the approximate pivots $\ap_{i+1}(v)$ for each $v \in V(H_i)$. This also allows us to determine the vertex sets of each graph $H_{i+1}$. Once this algorithm is set-up, we give an algorithm to maintain historic approximate pivots $\hap_{i+1}(v)$ for each $v \in V$. Finally, we discuss how to maintain the edges in the graph hierarchy. We note that for technical reasons, all algorithms work on the graphs $\wH_1, \wH_2, \ldots, \wH_k$ where $\wH_i$ is the graph $H_i$ with all edges rounded up to the nearest power of $\roundConst$.

\begin{algorithm}
\DontPrintSemicolon
\For(\label{lne:forEachInMainUpdate}){$i = 1, \ldots, k-1$}{

    \While(\label{lne:whileLoopPivots}){$\exists v \in V(\wH_i)$ such that $|B_{\wH_i}(v, \frac{1}{4}\widetilde{pivotDist}_{i+1}(v))| \geq \hat{b}_i$}{
     
        Let $B_v$ be a set of size $\hat{b}_i$ such that $B_v \subseteq B_{\wH_i}(v, \frac{1}{4}\widetilde{pivotDist}_{i+1}(v))$.\label{lne:computeBClosest}\\
     
        \If(\label{lne:ifCaseMainUpdate}){$\exists u \in B_v$ with $\widetilde{pivotDist}_{i+1}(u) < \frac{1}{2} \widetilde{pivotDist}_{i+1}(v)$}{
            $p_{i+1}(v) \gets p_{i+1}(u)$; $\widetilde{pivotDist}_{i+1}(v) \gets dist_{\wH_i}(v,u) + \widetilde{pivotDist}_{i+1}(u)$.\label{lne:ifCasePivotUpdate}
        }\Else{
            Add $v$ to $\wH_{i+1}$.\label{lne:addPivot}\\
            \ForEach(\label{lne:forEachLoopPivot}){$u \in B_{v}$}{
                $p_{i+1}(u) \gets v$; $\widetilde{pivotDist}_{i+1}(u) \gets dist_{\wH_i}(u,v)$.\label{lne:elseCasePivotUpdate}
            }
        }
    }
}
\caption{$\textsc{UpdateApproxPivots}()$}
\label{alg:updateAp}
\end{algorithm}

\paragraph{Parameters.} Throughout the section, we use parameters $b_i = 2^{(6/5)^i}$ for any $i \geq 1$, and let $k$ be the smallest index such that $\prod_{i \leq k} b_{i} > n$. It is straight-forward to calculate that $k = \Theta(\log\log n)$. For convenience, we define $\hat{b}_i = b_i \cdot  (\log_{4/3}(n^2 W) + 1)$ for each $i$. 

\paragraph{Maintaining Approximate Pivots (Pseudo-code).} We initialize for each $u \in V$, the pivot $p_2(u) = \bot$ and let $\widetilde{pivotDist}_{2}(u) = n^2 W$. We initialize $H_1$ to $G$ and $H_2, H_3, \ldots, H_k$ to empty graphs (and initialize $\wH_1, \wH_2, \ldots, \wH_k$ to empty graphs). Throughout the algorithm, whenever a new vertex $v$ is added to vertex set $V(H_i)$ (and thus also to $V(\wH_i)$), we again initialize its pivot $p_{i+1}(u) = \bot$ and let $\widetilde{pivotDist}_{i+1}(u) = n^2 W$.

After this initialization and after each update to $G$, we invoke $\textsc{UpdateApproxPivots}()$ given in \Cref{alg:updateAp}. The goal of the algorithm is two-fold:
\begin{itemize}
    \item (Ball Size Constraint) Intuitively, we want the ball $B_{H_i}(v, d_{H_i}(v, V(H_{i+1})))$ to contain at most $\hat{b}_i$ vertices for each $v \in V(H_i)$, so that the local computation can be done more efficiently. If this constraint is violated, we need to take action and make a new vertex in this ball a pivot of $v$ so that the ball shrinks in size. Since we work with approximate pivots, however, we have to relax this constraint. To counter this, we search even more aggressively for an approximate pivot that enforces this constraint on the ball $B_{\wH_i}(v, \frac{1}{4}\widetilde{pivotDist}_{i+1}(v))$. Since we have $d_{H_i}(v, \ap_{i+1}(v)) \leq \widetilde{pivotDist}_{i+1}(v) \leq 4 \cdot dist_{H_i}(v, V(H_{i+1}))$ by \Cref{prop:apAreCloseToPivots} of \Cref{def:sparsifier hierarchy}, we thus have that $|B_{\wH_i}(v, \frac{1}{4}\widetilde{pivotDist}_{i+1}(v))| \leq |B_{H_i}(v, d_{H_i}(v, V(H_{i+1})))| \leq \hat{b}_i$. 
    \item (Graph Size Constraint) on the other hand, we also want $V(H_{i+1})$ to be significantly smaller than $V(H_i)$ (roughly by a factor $\hat{b}_i$). Since $V(H_{i+1})$ is in the image of $p_{i+1}$, we have to make each approximate pivot $\ap_{i+1}(v)$ a vertex of $V(H_{i+1})$. At an extreme, while making each vertex in $V(H_i)$ its own approximate pivot is a viable choice of $\ap_{i+1}$ with respect to ball sizes, it is a poor choice when considering the size of $V(H_{i+1})$. Therefore, we use a dynamic covering technique that allows us to bound the number of vertices that are in $V(H_{i+1})$ (i.e. at any point in the image of $p_{i+1}$) by a much smaller factor.
\end{itemize}

Our algorithm optimizes these two constraints using a simple rule: whenever a new pivot is required due to a ball size constraint being violated, such a vertex $v$ in need, first asks other vertices that are close to it if their approximate pivot is a good fit. Otherwise, $v$ becomes a pivot itself, but also the new pivot of these close vertices. We note that the algorithm also already shows how to maintain $\widetilde{pivotDist}_{i+1}(v)$ for all $v \in V(H_i)$ for all $i$. Below, we establish our claim on the graph size.

\begin{claim}\label{clm:decreaseInApproxDist}
Whenever the approximate pivot $\ap_{i+1}(v)$ of a vertex $v \in V(H_i)$ is changed, the value  $\widetilde{pivotDist}_{i+1}(v)$ decreases to a $\frac{3}{4}$-fraction of the original value. 
\end{claim}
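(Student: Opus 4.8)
The plan is to trace through the two cases of the while-loop in \Cref{alg:updateAp} in which $\ap_{i+1}(v)$ can be reassigned and in each case exhibit the claimed $\frac34$-factor decrease. Before entering either case, the loop condition guarantees that $v$ has $|B_{\wH_i}(v, \frac14\widetilde{pivotDist}_{i+1}(v))| \geq \hat b_i$, so $B_v$ of size $\hat b_i$ is a subset of this ball; in particular every $u \in B_v$ satisfies $dist_{\wH_i}(v,u) \leq \frac14 \widetilde{pivotDist}_{i+1}(v)$.

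\emph{Case 1 (line \ref{lne:ifCasePivotUpdate}, the if-branch).} Here there is $u \in B_v$ with $\widetilde{pivotDist}_{i+1}(u) < \frac12\widetilde{pivotDist}_{i+1}(v)$, and we set $\widetilde{pivotDist}_{i+1}(v) \gets dist_{\wH_i}(v,u) + \widetilde{pivotDist}_{i+1}(u)$. I would bound this sum by $\frac14\widetilde{pivotDist}_{i+1}(v) + \frac12\widetilde{pivotDist}_{i+1}(v) = \frac34\widetilde{pivotDist}_{i+1}(v)$, using $dist_{\wH_i}(v,u) \leq \frac14\widetilde{pivotDist}_{i+1}(v)$ from membership of $u$ in $B_v$ and the if-condition for the second term. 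That is exactly the claimed decrease (to a $\frac34$-fraction).

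\emph{Case 2 (line \ref{lne:elseCasePivotUpdate}, the else-branch, with $u = v$ itself one of the vertices whose pivot is reset).} Here, for each $u \in B_v$ we set $\widetilde{pivotDist}_{i+1}(u) \gets dist_{\wH_i}(u,v)$. The subtlety is that we only need to verify the decrease for vertices $u$ whose pivot \emph{changes}, but since before the update all $u \in B_v$ in the else-branch fail the if-condition test, we know $\widetilde{pivotDist}_{i+1}(u) \geq \frac12\widetilde{pivotDist}_{i+1}(v) \geq \frac12 \cdot 4\, dist_{\wH_i}(u,v) = 2\, dist_{\wH_i}(u,v)$ is \emph{not} directly what I want; instead I use that $dist_{\wH_i}(u,v) \leq \frac14\widetilde{pivotDist}_{i+1}(u)$ only if $u$ is in $v$'s small ball, which need not hold. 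The clean argument is: the new value $dist_{\wH_i}(u,v) \le \frac14 \widetilde{pivotDist}_{i+1}(v)$ (as $u\in B_v$), and by the failure of the if-test $\frac12\widetilde{pivotDist}_{i+1}(v) \le \widetilde{pivotDist}_{i+1}(u)$, i.e. $\widetilde{pivotDist}_{i+1}(v) \le 2\widetilde{pivotDist}_{i+1}(u)$, so the new value is at most $\frac14 \cdot 2\widetilde{pivotDist}_{i+1}(u) = \frac12\widetilde{pivotDist}_{i+1}(u) \le \frac34\widetilde{pivotDist}_{i+1}(u)$, giving the decrease for every $u \in B_v$ whose pivot is reset (including $u=v$, for which $dist_{\wH_i}(v,v)=0$ trivially).

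The main obstacle is bookkeeping the else-branch correctly: one must be careful that the vertices $u\in B_v$ there are precisely those that \emph{fail} the if-condition in this iteration (otherwise the algorithm would have taken the if-branch instead), so the inequality $\widetilde{pivotDist}_{i+1}(u) \ge \frac12\widetilde{pivotDist}_{i+1}(v)$ is available for all of them; combined with $dist_{\wH_i}(u,v)\le\frac14\widetilde{pivotDist}_{i+1}(v)$ this yields the bound. A minor point to note is that we work in $\wH_i$ (powers-of-$\roundConst$ weights), but since all distances and $\widetilde{pivotDist}$ values referenced are measured consistently in $\wH_i$, rounding causes no issue here. Putting the two cases together proves that any change to $\ap_{i+1}(v)$ is accompanied by $\widetilde{pivotDist}_{i+1}(v)$ dropping to at most a $\frac34$-fraction of its prior value.
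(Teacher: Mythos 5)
Your proposal is correct and follows essentially the same two-case argument as the paper: in the if-branch the new value $dist_{\wH_i}(v,u)+\widetilde{pivotDist}_{i+1}(u)\le \frac14\widetilde{pivotDist}_{i+1}(v)+\frac12\widetilde{pivotDist}_{i+1}(v)$, and in the else-branch the new value $dist_{\wH_i}(u,v)\le\frac14\widetilde{pivotDist}_{i+1}(v)\le\frac12\widetilde{pivotDist}_{i+1}(u)$. Your attribution of the last inequality to the failure of the if-test (rather than, as the paper loosely states, the while-condition) is in fact the more precise bookkeeping, but the proof is otherwise identical.
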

\begin{proof}
It is not hard to see from \Cref{alg:updateAp} that for each $v \in V(H_i)$, $\widetilde{pivotDist}_{i+1}(v)$ is monotonically decreasing over time. Further, when the pivot of a vertex $v$ is changed in \Cref{lne:ifCasePivotUpdate}, we have that $\widetilde{pivotDist}^{NEW}_{i+1}(v) = dist_{H_i}(v,u) + \widetilde{pivotDist}_{i+1}(u) \leq \frac{3}{4} \widetilde{pivotDist}^{OLD}_{i+1}(v)$ by the if-condition.

If the pivot of a vertex $u$ is changed in the for-each loop in \Cref{lne:forEachLoopPivot}, we have that
$\widetilde{pivotDist}^{NEW}_{i+1}(u) = dist_{H_i}(u,v) \leq \frac{1}{4}\widetilde{pivotDist}^{OLD}_{i+1}(v) \leq \frac{1}{2}\widetilde{pivotDist}^{OLD}_{i+1}(u)$ where the last inequality stems from \Cref{lne:computeBClosest} and the fact that the if condition in \Cref{lne:ifCaseMainUpdate} was not satisfied.
\end{proof}
\begin{corollary}\label{cor:numberOfVerticesDecreasesALot}
At any stage, we have $|V(H_{i+1})| \leq \frac{|V(H_i)| (\log_{4/3}(n^2 W) + 1)}{\hat{b}_i} = \frac{|V(H_i)|}{b_i}$.
\end{corollary}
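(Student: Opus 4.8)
The plan is a charging argument; the equality in the statement is immediate from $\hat{b}_i = b_i\cdot(\log_{4/3}(n^2W)+1)$, so only the first inequality needs work. Fix a stage $t$. I will bound $|V(H_{i+1})|$ at stage $t$ by charging each insertion of a vertex into $V(H_{i+1})$ against $\hat{b}_i$ distinct changes of level-$(i+1)$ approximate pivots, and then bounding the total number of such changes up to stage $t$ by $|V(H_i)|\cdot(\log_{4/3}(n^2W)+1)$.

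First I would set up the charge. Inspecting \Cref{alg:updateAp}, a vertex is added to $V(\wH_{i+1})=V(H_{i+1})$ only in \Cref{lne:addPivot}, inside the else-branch, and whenever some $v$ is added there the loop of \Cref{lne:forEachLoopPivot} immediately reassigns $\ap_{i+1}(u)\gets v$ for all $\hat{b}_i$ vertices $u\in B_v$. We may assume $v\in B_v$ (as $v$ always lies in its own ball); then after the addition $\widetilde{pivotDist}_{i+1}(v)=0$, so $v$ never again triggers the while loop and no vertex is added to $V(H_{i+1})$ twice. Since $v$ is placed into $V(H_{i+1})$ only in the line above, no vertex had $v$ as approximate pivot beforehand, so each of the $\hat{b}_i$ reassignments genuinely changes $\ap_{i+1}$, hence (by \Cref{clm:decreaseInApproxDist}) strictly decreases the corresponding value of $\widetilde{pivotDist}_{i+1}$. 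As distinct insertions happen in distinct, time-ordered iterations of the while loop, the $\hat{b}_i$ changes charged to one insertion are disjoint from those charged to any other. Thus, if $A$ vertices have been inserted into $V(H_{i+1})$ by stage $t$, at least $A\cdot\hat{b}_i$ distinct level-$(i+1)$ pivot changes have occurred, and $|V(H_{i+1})|=A$.

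Next I would upper-bound the total number of level-$(i+1)$ pivot changes up to stage $t$. For a fixed vertex $u$, the value $\widetilde{pivotDist}_{i+1}(u)$ is initialized to $n^2W$ (at the start or when $u$ joins $V(H_i)$), is non-increasing, and — being, whenever positive, a distance or a sum of a distance and a pivot distance in $\wH_i$, all of whose positive edge weights are at least $1$ (this is where rounding to powers of two and the assumption $w\ge 2$ on $G$ enter, and where one checks inductively that $\wH_i$ has no $0$-weight edge between distinct vertices) — only ever takes the value $0$ or values in $[1,n^2W]$. Together with the factor-$\tfrac{3}{4}$ drop at each change from \Cref{clm:decreaseInApproxDist}, this caps the number of changes of $\ap_{i+1}(u)$ over the whole run by $\log_{4/3}(n^2W)+1$. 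Since every vertex undergoing a level-$(i+1)$ pivot change lies in $V(H_i)$ at that time and $V(H_i)$ only grows, the number of such changes by stage $t$ is at most $|V(H_i)|\cdot(\log_{4/3}(n^2W)+1)$ with $|V(H_i)|$ measured at stage $t$. Combining with the previous paragraph, $|V(H_{i+1})|\cdot\hat{b}_i = A\cdot\hat{b}_i \le |V(H_i)|\cdot(\log_{4/3}(n^2W)+1)$, and dividing by $\hat{b}_i$ finishes the proof.

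The arithmetic is routine; the points that require care are the two pillars of the charge — that the $\hat{b}_i$ pivot changes attached to each insertion are \emph{honest} and \emph{globally disjoint} (handled by ``$v$ is fresh'' and ``iterations are processed sequentially''), and the per-vertex cap $\log_{4/3}(n^2W)+1$ on the number of pivot changes, which rests on the positive values of $\widetilde{pivotDist}_{i+1}$ being bounded away from $0$.
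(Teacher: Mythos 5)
Your proof is correct and follows essentially the same route as the paper: charge each insertion into $V(H_{i+1})$ (which only happens in \Cref{lne:addPivot}) to the $\hat{b}_i$ pivot reassignments performed in \Cref{lne:forEachLoopPivot}, and bound the number of pivot changes per vertex of $V(H_i)$ by $\log_{4/3}(n^2W)+1$ via \Cref{clm:decreaseInApproxDist}. The extra care you take (freshness of $v$, disjointness of the charges, and the lower bound on positive values of $\widetilde{pivotDist}_{i+1}$) only makes explicit what the paper leaves implicit.
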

\begin{proof}
By \Cref{clm:decreaseInApproxDist}, we have that every vertex can change its approximate pivot at most $\log_{4/3}(n^2 W) + 1$ times. But note that whenever a new vertex $v$ is added to the set $V(H_{i+1})$ (which happens only in \Cref{lne:addPivot}), we change the approximate pivots of all vertices in the current ball $B_{H_i}(v, \frac{1}{4}\widetilde{pivotDist}_{i+1}(v))$ which has size at least $\hat{b}_i$ by \Cref{lne:computeBClosest}. 
\end{proof}

For convenience, we define $n_i$ for each level $i$ the final size of set $V(H_i)$ for the rest of this section. Thus, the corollary above can be restated as $n_{i+1} \leq n_i / b_i$. 

\paragraph{Maintaining Approximate Pivots (Implementation).} Next, we propose an efficient implementation of \Cref{alg:updateAp} given that there is a procedure that updates $\wH_i$ based on the updated approximate pivots/ approximate pivot distances. 

In our implementation, we use the following crucial primitive $\textsc{TruncDijkstra}(v, H_i, r)$: for any vertex $v \in V(H_i)$ and radius $r$, we run Dijkstra's algorithm from $v$ in $\wH_i$ where we stop relaxing vertices that are at distance greater than $r$, or abort after having found $\hat{b}_i$ such vertices. But note that by definition of the adjacency list $\textsc{Adj}_{\wH_i, u}$ of each vertex $u$ in graph $H_i$, we can use exclusively the edges in $\textsc{Adj}_{\wH_i, u}[1, \hat{b}_i]$ for each $u$ that we relax and therefore implement the algorithm efficiently in $O(\hat{b}_i^2 + \hat{b}_i \log \hat{b}_i)$ time (recall from \Cref{sec:preliminaries} that these lists are ordered by weight and time of arrival). Further, we can store with $v$ a deterministic dictionary $\mathcal{D}_i(v)$ (see \cite{HAGERUP200169}) that allows us to check for each vertex $u$ if $u$ is one of the $\hat{b}_i$ vertices relaxed by Dijkstra's algorithm, and if so, we can return the distance $dist_{\wH_i}(u,v)$. The construction time of the dictionary is subsumed by the bound $O(\hat{b}_i^2 \log \hat{b}_i)$ and its query time is worst-case constant. 

Equipped with this primitive, let us give the entire algorithm. Throughout each stage, we maintain a list of \emph{unvisited} vertices $\textsc{Unvisited} \subseteq V(\wH_i)$ that corresponds to vertices where we cannot currently ensure that the while-loop condition in \Cref{lne:whileLoopPivots} holds. 

At the initial stage, we have $\textsc{Unvisited}$ equal to $V(\wH_i)$, i.e. the initial set of vertices of $\wH_i$. At the beginning of any subsequent stage, $\textsc{Unvisited}$ consists of the vertices $u \in V(H_i)$ for which there exists a vertex $v \in \mathcal{D}_u$ where $\textsc{Adj}_{\wH_i,v}[1, \hat{b}_i]$ was updated since the last stage. 

Then, at any stage, once $\textsc{Unvisited}$ is initialized as described above, we do the following: while there exists a vertex $v \in \textsc{Unvisited}$, we run $\textsc{TruncDijkstra}(v, \wH_i, \frac{1}{4}\widetilde{pivotDist}_{i+1}(v))$. If the primitive explores less than $\hat{b}_i$ vertices, we store the dictionary $\mathcal{D}_i(v)$ and remove $v$ from $\textsc{Unvisited}$. Otherwise, i.e. if the primitive explores $\hat{b}_i$ vertices for $v$, then we enter the while-loop. We can obtain $B_v$ as described in \Cref{lne:computeBClosest} from the primitive (by scanning the dictionary) and it is not hard to see that the rest of the while-loop iteration can be implemented in time $O(\hat{b}_i)$.

We prove next that this implementation is efficient.

\begin{lemma}\label{lma:approxPivotsAreReallyPowerful}
Given a procedure that updates each $\wH_{i}$ just before the $i$-th iteration of \Cref{alg:updateAp} in such a way that for each $u$ keeps the adjacency list of $u$ in $\wH_i$ ordered by weight and time of arrival, the total update time of all invocations of \Cref{alg:updateAp} (excluding the time required by the procedure updating each $\wH_i$) can be bounded by $O\left(\sum_i n_i b_i^4 \log^6 nW + \Delta \right)$ where $\Delta$ is the total number of edges and vertices in the final versions of $H_1, H_2, \ldots, H_k$. 

Additionally we can query given a level $1 \leq i \leq k$, and vertices $u,v \in V(H_i)$ where the query returns a distance estimate $ \widehat{dist}(u,v)$ that satisfies $dist_{H_i}(u,v) \leq \widehat{dist}(u,v)$ and  if $u \in B_{H_i}(v, \frac{1}{8}\widetilde{pivotDist}_{i+1}(v))$ or $v \in B_{H_i}(u, \frac{1}{8}\widetilde{pivotDist}_{i+1}(u))$ where we define $\widetilde{pivotDist}_{k+1}(v) = \infty$, it is further guaranteed that $\widehat{dist}(u,v) \leq 2 \cdot dist_{H_i}(u,v)$. 
\end{lemma}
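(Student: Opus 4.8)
The plan is to prove Lemma~\ref{lma:approxPivotsAreReallyPowerful} by separately accounting for (a) the work done inside the while-loop of Algorithm~\ref{alg:updateAp} (the ``productive'' iterations where an approximate pivot actually changes), and (b) the work done on \textsc{TruncDijkstra} calls that fail to trigger a while-loop iteration (the ``wasted'' recomputations), plus (c) the one-time cost of touching every edge/vertex in the final hierarchy, which is the $\Delta$ term. The query part is handled at the very end using the dictionaries $\mathcal{D}_i(v)$.

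For part (a): each \textsc{TruncDijkstra}$(v,\wH_i,\cdot)$ call costs $O(\hat b_i^2\log\hat b_i)$, and the remainder of a while-loop iteration costs $O(\hat b_i)$. By Claim~\ref{clm:decreaseInApproxDist}, every time the approximate pivot of a vertex changes its $\widetilde{pivotDist}$ drops by a constant factor, so each vertex of $V(H_i)$ changes its approximate pivot at most $\log_{4/3}(n^2W)+1 = O(\log nW)$ times. A successful while-loop iteration either runs the if-branch (one pivot change for $v$) or the else-branch (a pivot change for each of the $\hat b_i$ vertices in $B_v$); in the latter case we charge the cost of the iteration to those $\hat b_i$ pivot changes. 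So the total number of successful while-loop iterations summed over all stages is $O(n_i\log nW)$ per level, each costing $O(\hat b_i^2\log\hat b_i)$, giving $O(n_i\hat b_i^2\log\hat b_i\log nW)$ per level. Since $\hat b_i = b_i\cdot(\log_{4/3}(n^2W)+1) = \Theta(b_i\log nW)$, this is $O(n_i b_i^2\log^2 nW\cdot\log(b_i\log nW)\cdot\log nW)$, comfortably within $O(n_i b_i^4\log^6 nW)$.

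For part (b) --- the main obstacle --- I must bound the number of \textsc{TruncDijkstra} calls on a fixed vertex $v$ (at level $i$) that do \emph{not} cause a while-loop iteration, i.e.\ the calls triggered because some $v'\in\mathcal{D}_v$ had its prefix $\textsc{Adj}_{\wH_i,v'}[1,\hat b_i]$ changed but $|B_{\wH_i}(v,\frac14\widetilde{pivotDist}_{i+1}(v))|$ still turned out to be $<\hat b_i$. The key idea: fix a value of $\widetilde{pivotDist}_{i+1}(v)$ (it changes only $O(\log nW)$ times). For this fixed radius, the ball $B_{\wH_i}(v,\frac14\widetilde{pivotDist}_{i+1}(v))$ can only grow as edges are inserted, and it contains at most $\hat b_i$ vertices before the while-condition fires; each of those vertices $u$ contributes the first $\hat b_i$ entries of its sorted adjacency list, and within that ball there are at most $\hat b_i^2$ \emph{distinct} (weight, endpoint-pair) combinations, but since edge weights in $\wH_i$ are powers of $\roundConst$ bounded by $n^2W$, each such slot can only be re-filled $O(\log nW)$ times by a lower-weight edge. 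Hence the prefix $\textsc{Adj}_{\wH_i,v'}[1,\hat b_i]$ of any relevant $v'$ changes at most $O(\hat b_i\log nW)$ times while the radius is fixed, and there are at most $\hat b_i$ such $v'$ in $\mathcal{D}_v$, so $v$ is woken up $O(\hat b_i^2\log nW)$ times per radius value, i.e.\ $O(\hat b_i^2\log^2 nW)$ times total. Each wake-up costs $O(\hat b_i^2\log\hat b_i)$, giving $O(n_i\hat b_i^4\log^2 nW\log\hat b_i) = O(n_i b_i^4\log^6 nW)$ per level after substituting $\hat b_i=\Theta(b_i\log nW)$; summing over $i$ gives the stated bound (the $\Delta$ term absorbs the cost of the update procedure's bookkeeping and of initializing $\textsc{Unvisited}$ by scanning changed prefixes). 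I would need to be careful that ``prefix changed'' events are counted correctly --- a single edge insertion into $\wH_i$ can change the prefix of only its two endpoints, and can enter the length-$\hat b_i$ prefix at most once per endpoint, which is what makes the $\hat b_i^2\log nW$ bound on distinct prefix-change events per fixed ball valid.

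Finally, for the query: given $i$ and $u,v\in V(H_i)$, we check whether $u$ lies in the stored dictionary $\mathcal{D}_i(v)$ or $v$ in $\mathcal{D}_i(u)$; if so we return the stored $dist_{\wH_i}$-value, which is $\geq dist_{H_i}$ (since $\wH_i$ has edge weights rounded up) and $\leq \roundConst\cdot dist_{H_i}(u,v) = 2\cdot dist_{H_i}(u,v)$ because rounding each edge weight up to the next power of $\roundConst$ at most doubles it. The point is that the \textsc{TruncDijkstra} calls that remove $v$ from \textsc{Unvisited} explore radius $\frac14\widetilde{pivotDist}_{i+1}(v)$ and find fewer than $\hat b_i$ vertices, so $\mathcal{D}_i(v)$ contains every vertex within distance $\frac14\widetilde{pivotDist}_{i+1}(v)$ of $v$ in $\wH_i$, hence certainly every vertex within $\frac18\widetilde{pivotDist}_{i+1}(v)$ of $v$ in $H_i$; this exact dictionary lookup is worst-case $O(1)$ by the guarantee of~\cite{HAGERUP200169}. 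If neither membership test succeeds we return $\widehat{dist}(u,v) = n^2W$ (an upper bound on $dist_{H_i}(u,v)$ by our diameter assumption), and no approximation guarantee is claimed in that case, matching the statement. Assembling (a)+(b)+(c) and the query bound completes the proof.
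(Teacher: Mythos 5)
Your running-time accounting is essentially the paper's own argument: you charge, for each vertex $v\in V(H_i)$ and each value of $\widetilde{pivotDist}_{i+1}(v)$ (which changes only $O(\log nW)$ times by \Cref{clm:decreaseInApproxDist}), at most $\hat b_i$ ball vertices times $O(\hat b_i\log nW)$ changes of the prefixes $\textsc{Adj}_{\wH_i,\cdot}[1,\hat b_i]$, i.e.\ $O(\hat b_i^2\log^2 nW)$ truncated-Dijkstra calls per vertex, each of cost about $\hat b_i^2$, which with $\hat b_i=\Theta(b_i\log nW)$ gives $O(n_i b_i^4\log^6 nW)$ per level; and the query is answered from the stored dictionaries $\mathcal{D}_i(v)$, with the factor $2$ coming from rounding weights up to powers of $\roundConst$. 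Two cosmetic remarks: your per-call cost $O(\hat b_i^2\log\hat b_i)$ strictly yields an extra $\log\hat b_i$ factor beyond the stated $\log^6 nW$ (the paper charges each call only $O(\hat b_i^2)$, the dictionary construction being dominated), and your split into ``productive'' and ``wasted'' calls is unnecessary since both are covered by the same per-vertex bound.

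The genuine omission is the correctness of the lazy $\textsc{Unvisited}$ scheme, which the paper proves and your argument silently assumes. A vertex $v$ that is not woken up at the current stage keeps a dictionary $\mathcal{D}_i(v)$ computed at some earlier stage; for the query guarantee you need that this dictionary still contains every vertex of $B_{\wH_i}(v,\frac14\widetilde{pivotDist}_{i+1}(v))$ together with its correct current distance, and for the rest of the analysis you need that such a $v$ still fails the while-loop condition of \Cref{lne:whileLoopPivots} — otherwise the implementation does not faithfully realize \Cref{alg:updateAp}, and the invariant you rely on for queries (that the ball has fewer than $\hat b_i$ vertices whenever a dictionary is stored and kept) breaks. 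The paper closes this by observing that if no vertex recorded in $\mathcal{D}_i(v)$ had its prefix $\textsc{Adj}_{\wH_i,\cdot}[1,\hat b_i]$ updated since the dictionary was built, then $\textsc{TruncDijkstra}(v,\wH_i,\frac14\widetilde{pivotDist}_{i+1}(v))$ would relax exactly the same vertices with the same distances as before (and $\widetilde{pivotDist}_{i+1}(v)$ only decreases), so by induction over stages both the stored dictionary and the failure of the while-condition remain valid. Your proposal states the wake-up rule but never argues this invariance, so as written the query guarantee and the soundness of skipping non-woken vertices are unsupported; adding this induction makes your proof coincide with the paper's.
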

\begin{proof}
We first argue for correctness. Note that each vertex $v$ that is on the list $\textsc{Unvisited}$ and then removed at the end of the stage cannot satisfy the condition of the while-loop in \Cref{lne:whileLoopPivots}. This follows from the fact that a vertex $v$ is only removed from the list $\textsc{Unvisited}$ when the primitive  $\textsc{TruncDijkstra}(v, \wH_i, \frac{1}{4}\widetilde{pivotDist}_{i+1}(v))$ certifies that it cannot satisfy the while-loop condition. Further, we have that $\widetilde{pivotDist}_{i+1}(v)$ is monotonically decreasing over time by \Cref{clm:decreaseInApproxDist} and therefore the condition remains true. It remains to argue that we can initialize $\textsc{Unvisited}$ at a stage after the initial stage to only consist of vertices $u$ that had no vertex $v \in \mathcal{D}_u$ with updated $\textsc{Adj}_{\wH_i,v}[1, \hat{b}_i]$. But this implies that the primitive $\textsc{TruncDijkstra}(u, \wH_i, \frac{1}{4}\widetilde{pivotDist}_{i+1}(u))$ would relax the same vertices as at the previous stage. It is thus straight-forward to prove by induction that for $v$ the while-loop condition does not hold.

We observe first that for each vertex $v$, its adjacency list $\textsc{Adj}_{\wH_i,v}[1, \hat{b}_i]$ can be updated at most $\hat{b}_i \log_2(nW)$ times over the entire course of the algorithm by the way the ordering of edges incident on $v$ is determined and by the fact that $\wH_i$ only allows for edge weights that are powers of $\roundConst$ and is incremental.

For the running time, we first use that for each vertex $v \in V(H_i)$, we have that $\widetilde{pivotDist}_{i+1}(v)$ is decreased at most $O(\log(nW))$ times over the entire course of the algorithm by \Cref{clm:decreaseInApproxDist}. Further, between any two times that $\widetilde{pivotDist}_{i+1}(v)$ is decreased, we claim that the primitive $\textsc{TruncDijkstra}(v, \wH_i, \frac{1}{4}\widetilde{pivotDist}_{i+1}(v))$ is invoked at most $O(\hat{b}_i)$ times. This follows since between these times $\widetilde{pivotDist}_{i+1}(v)$ remains fixed and since $\wH_i$ is an incremental graph, $B_{\wH_i}(v, \frac{1}{4}\widetilde{pivotDist}_{i+1}(v))$ is monotonically increasing. However only until it contains $\hat{b}_i$ or more vertices, as this triggers that the while-loop condition is violated on $v$ again. Until then, each of the at most $\hat{b}_i - 1$ vertices that are in the ball just before it starts violating the while-loop condition can have $\hat{b}_i \log_2(nW)$ updates to their adjacency list $\textsc{Adj}_{\wH_i,u}[1, \hat{b}_i]$ triggering an additional invocation of $\textsc{TruncDijkstra}(v, \wH_i, \frac{1}{4}\widetilde{pivotDist}_{i+1}(v))$. Summarizing our discussion, we can bound the number of times that $\textsc{TruncDijkstra}(v, \wH_i, \frac{1}{4}\widetilde{pivotDist}_{i+1}(v))$ is run for some vertex $v \in V(H_i)$ to be at most $O(\hat{b}^2_i \log^2 nW)$. Using that the time spent by of each such call can be upper bound by $O(\hat{b}_i^2)$, and given that these calls asymptotically subsumes all other operations of the implementation of \Cref{alg:updateAp}, we thus arrive at the runtime stated above.

To prove that we can carry out queries as stated, we require two insights: 1) $dist_{H_i}(u,v) \leq dist_{\wH_i}(u,v) \leq 2dist_{H_i}(u,v)$ which is trivial from the fact that each $\wH_i$ is  derived from $H_i$ by rounding up weights to the nearest power of $2$ and 2) the fact that $V(H_k) = \emptyset$ which follows from \Cref{cor:numberOfVerticesDecreasesALot} which implies $|V(H_k)| \leq n / \prod_{j=1}^k b_j$ and the fact that $k$ is chosen such that $\prod_{j=1}^k b_j > n$. Given these two insights, it is not hard to verify that the dictionaries $\mathcal{D}_i(v)$ that we have stored at the end of each stage enable us to carry out the stated queries where we obtain for a vertex $u$ the exact distance $dist_{\wH_i}(u,v)$ and return it as a distance estimate or if we cannot find an entry in the dictionary we can simply return $\infty$.
\end{proof}

\paragraph{Maintaining Historic Approximate Pivots.} Before we can describe how to maintain the graphs $H_1, H_2, \ldots, H_k$ (and thus $\wH_1, \wH_2, \ldots, \wH_k$), it is straight-forward to see from \Cref{def:sparsifier hierarchy}, that we also need to maintain the last improving pivots $\hap_{i+1}(v)$. Therefore, we need to know the current pivot distances $\widetilde{pivotDist}_{i+1}(v)$ for each $v \in V$ (recall that \Cref{alg:updateAp} maintains these distances only for vertices in $V(H_i)$). More precisely, we need an algorithm that informs us when $\lceil \widetilde{minPivotDist}_{i+1}(v) \rceil_{\roundConst}$ decreases. 

Focusing on a given level $i \leq k-1$, we keep explicit variables $\widetilde{mpd}_{i+1}(v)$ for each $v \in V = V(H_1)$. Our algorithm ensures that we have $\widetilde{mpd}_{i+1}(v) = \lceil \widetilde{minPivotDist}_{i+1}(v) \rceil_{\roundConst}$ by the end of each stage. To achieve this goal, consider the following natural hierarchy forest $F_i$. We let the vertices of $F_i$ correspond to the vertices in $V(H_1), V(H_2), \ldots, V(H_{i+1})$, where we have an edge from each vertex $v \in V(H_j), j \leq i$ to its current approximate pivot, i.e. an edge $(v, \ap_{j+1}(v))$. Using directed edges, it is clear that the vertices in $V(H_{i+1})$ form the roots of the forest $F_i$, the vertices in $V(H_j)$ form the level-$j$ vertices (i.e. at distance $j-1$ from the leaves) and $V(H_1)$ form the leaf vertices.

We further maintain the following weight function $w_{F_i}$ over the edges:
\begin{align*}
    w_{F_i}(e = (v, \ap_{j+1}(v))) = \begin{cases}
    \widetilde{pivotDist}_{j+1}(v) - \widetilde{mpd}_{i+1}(v) & \text{if }$j = 1$\\
    \widetilde{pivotDist}_{j+1}(v) & \text{otherwise}
    \end{cases}
\end{align*}

We can maintain this collection of trees $F_i$ using the dynamic tree data structure introduced below.

\begin{theorem}[see \cite{alstrup2005maintaining}, Theorem 2.7]
Given a directed forest $F=(V,E,w)$ with (possibly negative) edge weights $w$, there is a data structure $\mathcal{D}$ that supports the following operations:
\begin{itemize}
    \item $\textsc{AddEdge}(e, w_e)$ / $\textsc{DeleteEdge}(e)$: adds an edge $e$ with weight $w_e$ (assuming that the tail of $e$ is a root) / deletes an edge $e$ from $F$.
    \item $\textsc{FindRoot}(u)$: Returns the root of a vertex $u$.
    \item $\textsc{ReturnDist}(u)$: Returns the distance from $u$ to its root.
    \item $\textsc{Mark}(u)$ / $\textsc{Unmark}(u)$: Marks / Unmarks a vertex $u$. Initially all vertices are unmarked.
    \item $\textsc{FindNearestMarkedVertex}(u)$: Finds the vertex in $u$'s subtree that is at closest distance and marked (if such a vertex exists).
\end{itemize}
The data structure can be initialized in $O(n \log n)$ time and implement each operation in $O(\log n)$ time.
\end{theorem}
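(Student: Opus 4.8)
The plan is to realize $\mathcal{D}$ as an augmented top-tree data structure over the forest $F$ (equivalently, one could use link–cut trees, but top trees make the subtree search cleanest; this is essentially the instantiation underlying the cited theorem). We keep one bit $\mathrm{mark}(v)$ per vertex and augment every cluster $C$ of the top tree with a constant number of fields, maintained from its child clusters by an $O(1)$-time merge rule: the sum $\Sigma(C)$ of edge weights along the cluster path of $C$; a flag $M(C)$ indicating whether $C$ contains a marked vertex; and, when $M(C)$ holds, the minimum $\mu(C)$ over marked vertices $x$ in $C$ of the weighted distance from the ``upper'' boundary vertex of $C$ to $x$, together with an argmin pointer $\mathrm{arg}(C)$. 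Since $\Sigma$ and $\mu$ are built purely from additions and minima, the possibly negative weights cause no issue. Each $\textsc{AddEdge}$/$\textsc{DeleteEdge}$ is a $\mathrm{link}$/$\mathrm{cut}$, which rebuilds $O(\log n)$ clusters, each requiring only an $O(1)$ re-merge; the precondition that the tail of the inserted edge is a root guarantees $\mathrm{link}$ is applicable and keeps $F$ a forest. Building the top tree over the initial forest takes $O(n \log n)$.

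The path-reading queries are then immediate. $\textsc{FindRoot}(u)$ is supported directly by the top-tree interface ($\mathrm{expose}$ the $u$-to-root path, or read a maintained root handle) in $O(\log n)$. For $\textsc{ReturnDist}(u)$ I would $\mathrm{expose}$ the path from $u$ to the root of its tree so that this path is the cluster path of the root cluster, and return $\Sigma$ of that cluster; correctness is additivity of weights along a path, and the cost is the $O(\log n)$ cost of the $\mathrm{expose}$.

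The substantive operation is $\textsc{FindNearestMarkedVertex}(u)$, which I would implement by a non-local (top-down) search in the top tree, exactly in the framework of Alstrup et al. The key observation is that if $w$ lies in the subtree of $u$ (a descendant of $u$ under the edge orientation), then the weighted $u$-to-$w$ distance equals $\textsc{ReturnDist}(w) - \textsc{ReturnDist}(u)$, and $\textsc{ReturnDist}(u)$ is fixed during the query; so minimizing the $u$-to-$w$ distance over marked descendants is the same as minimizing $\textsc{ReturnDist}(w)$ over them, hence is captured by the $\mu$-augmentation. I would first $\mathrm{expose}$ so that $u$ becomes a boundary vertex of the root cluster, which isolates the subtree of $u$ as an $O(1)$-size collection of clusters hanging below $u$; I would then descend, at each step into the child cluster attaining the smaller $\mu$ (ignoring clusters with $M$ false), while accumulating the partial path weight from $u$ so that the $\mu$ values, which are measured from cluster boundaries, are re-based to distances from $u$. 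After $O(\log n)$ steps the search reaches the leaf cluster holding the desired vertex, which is returned via $\mathrm{arg}$ pointers; if every relevant cluster has $M$ false, the query reports that no marked descendant exists.

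The main obstacle is arranging the $\mu$-augmentation and the descent so that the search provably stays within the descendants of $u$ and measures the downward distance correctly, despite a top-tree cluster being an unoriented pair of boundary vertices whose orientation relative to the current root can flip under $\mathrm{link}$/$\mathrm{cut}$. The standard remedy, which I would follow, is to use the rooted variant of top trees in which each cluster records which of its two boundary vertices is the ``upper'' one, and each leaf/path cluster additionally stores the minimum weighted depth of a marked vertex in the material hanging below it; the subtree of $u$ after $\mathrm{expose}(u)$ is then such a bounded collection of clusters, and the non-local search paradigm of Alstrup et al. applies essentially verbatim. Everything else is routine: checking that each of the $O(\log n)$ merges in a $\mathrm{link}$/$\mathrm{cut}$ updates $\Sigma, M, \mu, \mathrm{arg}$ in $O(1)$ time, which yields the claimed $O(\log n)$ per-operation bound and $O(n \log n)$ initialization.
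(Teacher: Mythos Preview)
The paper does not prove this theorem at all: it is stated as a black-box citation to \cite{alstrup2005maintaining}, Theorem~2.7, and is used without further justification. So there is no ``paper's own proof'' to compare against; any correct construction you give is already more than the paper provides.

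Your top-tree instantiation is the standard and correct way to realize such a data structure, and your observation that minimizing the $u$-to-$w$ distance over marked descendants $w$ reduces to minimizing $\textsc{ReturnDist}(w)$ (since $\textsc{ReturnDist}(u)$ is a fixed offset) is the right reduction that makes the non-local search work with a single min-depth field per cluster. Two small points you glossed over but should make explicit in a full write-up: first, $\textsc{Mark}(u)$ and $\textsc{Unmark}(u)$ must themselves trigger an $\mathrm{expose}$ so that the $O(\log n)$ clusters containing $u$ have their $M$ and $\mu$ fields refreshed; second, your re-rooting/orientation concern is real, and the clean fix is exactly what you gesture at---use the rooted top-tree variant where the root of each tree is always a boundary vertex of the root cluster, so that ``upper boundary'' is well-defined and stable under the merges. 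With those two points filled in, the argument is complete.
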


We can now state our algorithm to maintain the correct values $\widetilde{mpd}_{i+1}(v)$ for each $v \in V = V(H_1)$. On initialization, we mark all vertices in $V(H_1)$ and leave all other vertices unmarked. We initialize for each $v \in V$, $\widetilde{mpd}_{i+1}(v) = \lceil n^2 W \rceil_{\roundConst}$.

Next, consider an update to $G$. This potentially results in many changes of edges in $F_i$ and weights due to changes in the distances to approximate pivots $\widetilde{pivotDist}_{j+1}(v)$. Our algorithm starts by forwarding these changes to the data structure $\mathcal{D}$ (for the initial change we assume the data structure is initially empty so the entire forest $F_i$ is encoded in these changes). Weight changes are implemented by first deleting edges and then adding them back into the forest with their new weight.

Then, for each tree $T \in F_i$ that underwent some change at the current stage, we find its root $r \in V(H_{i+1})$ and query for $u = \textsc{FindNearestMarkedVertex}(r)$, the nearest leaf of $r$. If the distance from $u$ to $r$ is non-negative, the algorithm moves on to the next tree. Otherwise, it sets $\widetilde{mpd}_{i+1}(u) = \lceil \widetilde{minPivotDist}(u) \rceil_2$ which can be extracted from the distance from $u$ to $r$. Then, the algorithm repeats this the procedure for the current tree. 

\begin{claim}\label{clm:correctnessOfHistoricPivots}
At the end of each stage, we have for each $v \in V$, \[\widetilde{mpd}_{i+1}(v) = \lceil \widetilde{minPivotDist}_{i+1}(v) \rceil_{\roundConst}.\]
\end{claim}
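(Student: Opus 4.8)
The plan is to argue by induction on the stage $t$ that the invariant ``$\widetilde{mpd}_{i+1}(v) = \lceil \widetilde{minPivotDist}_{i+1}(v) \rceil_{\roundConst}$ for all $v \in V$'' holds at the end of stage $t$, where I additionally carry along the auxiliary invariant that the data structure $\mathcal{D}$ correctly encodes the forest $F_i$ with weight function $w_{F_i}$ using the \emph{current} values of $\widetilde{pivotDist}_{j+1}(\cdot)$ and $\widetilde{mpd}_{i+1}(\cdot)$. First I would check the base case: after initialization every vertex of $V(H_1)$ is marked, $\widetilde{mpd}_{i+1}(v) = \lceil n^2 W \rceil_{\roundConst}$, and since at that point $\widetilde{pivotDist}_{j+1}(v) = n^2 W$ for all relevant $v$, indeed $\widetilde{minPivotDist}_{i+1}(v) = \sum_{j \le i} \widetilde{pivotDist}_{j+1}(p_j(v)) = n^2 W$ up to the rounding, consistent with the claim (one has to be mildly careful that the forest is initially empty and the first ``update'' feeds in all of $F_i$, but this is routine).

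For the inductive step, fix a stage $t$ and a vertex $v$. The key observation is the telescoping identity
\[
  \textsc{ReturnDist}(v) \;=\; \sum_{j \le i} \widetilde{pivotDist}_{j+1}(p_j(v)) \;-\; \widetilde{mpd}_{i+1}(v) \;=\; \widetilde{minPivotDist}_{i+1}(v) \;-\; \widetilde{mpd}_{i+1}(v),
\]
which holds because the root-path of $v$ in $F_i$ is exactly $v, p_2(v), p_3(\ldots), \ldots$ up to a vertex of $V(H_{i+1})$, and the weights telescope with only the leaf edge carrying the $-\widetilde{mpd}_{i+1}(v)$ correction term. Hence the distance from $v$ to its root is negative precisely when $\widetilde{mpd}_{i+1}(v) > \widetilde{minPivotDist}_{i+1}(v)$, i.e.\ exactly when $\widetilde{mpd}_{i+1}(v)$ has become stale and must be lowered. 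The algorithm, for each tree $T$ that changed this stage, repeatedly pulls the nearest marked (hence leaf, since only leaves of $V(H_1)$ are marked) vertex $u$ of the root via $\textsc{FindNearestMarkedVertex}$; if $\textsc{ReturnDist}$ is negative it resets $\widetilde{mpd}_{i+1}(u) = \lceil \widetilde{minPivotDist}_{i+1}(u) \rceil_{\roundConst}$ (extracted from the distance) and re-examines the tree. I would argue: (i) after this reset, $u$'s root-distance is non-negative and stays non-negative for the rest of the stage because $\widetilde{pivotDist}$ values only decrease within a stage can only make it \emph{more} negative — wait, here I need the monotonicity the other way, so let me instead note that once $\widetilde{mpd}_{i+1}(u) = \lceil \widetilde{minPivotDist}_{i+1}(u) \rceil_{\roundConst}$ at the end of the stage, and $\widetilde{minPivotDist}$ is by definition the running minimum, the value is correct; the loop terminates because each iteration strictly decreases some $\widetilde{mpd}_{i+1}(\cdot)$ on a finite ground set and we only revisit a tree after a strict decrease; (ii) for vertices $v$ in trees that did \emph{not} change this stage, neither their root-path nor any $\widetilde{pivotDist}$ on it changed, so $\widetilde{minPivotDist}_{i+1}(v)$ did not change and the invariant is preserved verbatim; (iii) for vertices $v$ in a changed tree that were never selected, one must check $\textsc{FindNearestMarkedVertex}$ eventually certifies their root-distance is non-negative, i.e.\ $\widetilde{mpd}_{i+1}(v) \le \widetilde{minPivotDist}_{i+1}(v)$, and combined with monotonicity of $\widetilde{minPivotDist}$ and the fact that $\widetilde{mpd}$ only ever gets set to $\lceil \widetilde{minPivotDist}_{i+1}(\cdot)\rceil_{\roundConst}$-values from earlier stages, equality up to rounding follows.

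The main obstacle I anticipate is handling the rounding operator $\lceil \cdot \rceil_{\roundConst}$ cleanly together with the running minimum: one must verify that $\lceil \min_{t' \le t} \widetilde{pivotDist}^{(t')}_{i+1}(v) \rceil_{\roundConst}$ is what the algorithm computes, not $\min_{t' \le t} \lceil \widetilde{pivotDist}^{(t')}_{i+1}(v) \rceil_{\roundConst}$ — these coincide because $\lceil \cdot \rceil_{\roundConst}$ is monotone, but it is worth stating. A second subtle point is that $\textsc{FindNearestMarkedVertex}$ returns the \emph{closest} marked descendant, so when several leaves in the same tree are simultaneously stale, the algorithm fixes them in increasing order of (negative) root-distance; I would argue this order is irrelevant since fixing one leaf does not change the root-distances of the others (those depend on disjoint leaf-edge corrections plus the shared internal path, and only the internal $\widetilde{pivotDist}$ weights are shared, which are untouched by an $\widetilde{mpd}$ reset). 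Finally, I would double-check that a leaf's root-distance can indeed be recovered as claimed, namely $\widetilde{minPivotDist}_{i+1}(u) = \textsc{ReturnDist}(u) + \widetilde{mpd}_{i+1}(u)$ before the reset, so the new value $\lceil \widetilde{minPivotDist}_{i+1}(u)\rceil_{\roundConst}$ is computable in $O(1)$ time given the old $\widetilde{mpd}_{i+1}(u)$.
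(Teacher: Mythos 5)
There is a genuine gap, and it sits at the heart of your argument: the second equality in your ``telescoping identity'' is false. What the dynamic tree gives you is $\textsc{ReturnDist}(v) = \sum_{j \leq i} \widetilde{pivotDist}_{j+1}(p_j(v)) - \widetilde{mpd}_{i+1}(v)$, i.e.\ the \emph{current} composed pivot distance $\widetilde{pivotDist}_{i+1}(v)$ minus $\widetilde{mpd}_{i+1}(v)$ --- not $\widetilde{minPivotDist}_{i+1}(v) - \widetilde{mpd}_{i+1}(v)$. The current composed sum is \emph{not} monotonically decreasing and in general does not equal the running minimum: although each individual $\widetilde{pivotDist}_{j+1}(u)$ for a fixed $u \in V(H_j)$ only decreases (\Cref{clm:decreaseInApproxDist}), a change of the pivot $p_j(v)$ replaces the higher-level term $\widetilde{pivotDist}_{j+1}(p_j(v))$ by the corresponding value of a \emph{different} vertex, which may be larger; so the sum along the root path can go up. This is precisely why \Cref{def:sparsifier hierarchy} defines $\widetilde{minPivotDist}$ as a minimum over all past stages and why the $\widetilde{mpd}$ machinery exists at all. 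Consequently your statement ``the root-distance is negative precisely when $\widetilde{mpd}_{i+1}(v) > \widetilde{minPivotDist}_{i+1}(v)$'' is not correct as an equivalence, and the claim that the new value $\lceil \widetilde{minPivotDist}_{i+1}(u) \rceil_{\roundConst}$ can be ``extracted from the distance'' needs an argument: what you can extract is $\lceil \widetilde{pivotDist}_{i+1}(u) \rceil_{\roundConst}$ at the moment of the reset, and one must show this coincides with the rounded new running minimum (it does, because a reset is triggered only when the current sum drops below $\widetilde{mpd}_{i+1}(u) = \lceil \widetilde{minPivotDist}_{i+1}(u) \rceil_{\roundConst}$, and then monotonicity of $\lceil \cdot \rceil_{\roundConst}$ forces $\lceil \text{current sum} \rceil_{\roundConst} = \lceil \min(\text{old min}, \text{current sum}) \rceil_{\roundConst}$); you never make this step, and your own mid-proof hesitation (``wait, here I need the monotonicity the other way'') is a symptom of the same conflation left unresolved.

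The paper's proof instead argues the two inequalities separately, and neither uses your identity. For $\widetilde{mpd}_{i+1}(v) \geq \lceil \widetilde{minPivotDist}_{i+1}(v) \rceil_{\roundConst}$ it uses that every reset assigns a value that is (the rounding of) the running minimum at that moment, together with monotone decrease of the rounded running minimum. For $\widetilde{mpd}_{i+1}(v) \leq \widetilde{minPivotDist}_{i+1}(v)$ it uses that at the end of \emph{every} stage all leaves of every tree are at non-negative root-distance, i.e.\ $\widetilde{mpd}_{i+1}(v) \leq \widetilde{pivotDist}_{i+1}(v)$ at each stage end; since $\widetilde{mpd}_{i+1}(v)$ is non-increasing, it is therefore at most the minimum of these per-stage sums, which is exactly $\widetilde{minPivotDist}_{i+1}(v)$. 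Your points (ii) and (iii) (unchanged trees, and that fixing one leaf's $\widetilde{mpd}$ does not disturb other leaves' root-distances) are fine and implicit in the paper, but without replacing the false identity by the two-sided argument above, the central step of your proof does not go through.
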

\begin{proof}
Whenever $\widetilde{mpd}_{i+1}(v)$ is re-set, it is set to the current value $ \lceil \widetilde{minPivotDist}(u) \rceil_2$. Since $\lceil \widetilde{minPivotDist}(u) \rceil_2$ is monotonically decreasing over time, we thus have $\widetilde{mpd}_{i+1}(v) \geq \lceil \widetilde{minPivotDist}(u) \rceil_2$.

On the other hand, by our algorithm, we ensure that at the end of every stage, for each tree $T \in F_i$, that the nearest leaf to its root is at a non-negative distance. Thus, it is not hard to see that for each $v \in V(H_1)$, we have $\sum_{j \leq i} \widetilde{pivotDist}_{j+1}(v) \geq \widetilde{mpd}_{i+1}(v)$. But since 
$\widetilde{minPivotDist}(u)$ is exactly the minimum over all such sums $\sum_{j \leq i} \widetilde{pivotDist}_{j+1}(v)$ at previous and the current stages, we have that $\widetilde{mpd}_{i+1}(v) \leq \widetilde{minPivotDist}(u)$. Rounding up both quantities to the nearest power of two further preserves this inequality.
\end{proof}

\begin{claim}\label{clm:runtimeHistoricPivots}
Given an algorithm \Cref{alg:updateAp} to update for each $1 \leq i < k$, $v \in V(H_i)$ the quantities $\widetilde{pivotDist}_{i+1}(v)$ and $\ap_{i+1}(v)$, we can maintain for each $1 \leq i < k$ and $v \in V$, the last improving pivots $\hap_{i+1}(v)$ and $\lceil \widetilde{minPivotDist}_{i+1}(v) \rceil_{\roundConst}$ in additional time $O(kn \log nW \log n)$.
\end{claim}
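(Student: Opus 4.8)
The plan is to charge every operation to one of a small number of combinatorial \emph{events} and multiply by the $O(\log n)$ per-operation cost of the dynamic forest structure $\mathcal{D}$ of \cite{alstrup2005maintaining}. Fix a level $i$. I would first bound how often the hierarchy forest $F_i$ changes: an edge $(v,\ap_{j+1}(v))$ of $F_i$ with $j\le i$ is inserted, deleted or re-weighted only when $\ap_{j+1}(v)$ or $\widetilde{pivotDist}_{j+1}(v)$ changes (which, by \Cref{clm:decreaseInApproxDist}, happens $O(\log nW)$ times for each fixed $v$) or, for $j=1$, when $\widetilde{mpd}_{i+1}(v)$ changes. Since $\sum_{j\le i+1}n_j = O(n)$ by the geometric decrease in \Cref{cor:numberOfVerticesDecreasesALot}, the first two kinds contribute $O(n\log nW)$ events per level, each realized in $\mathcal{D}$ by $O(1)$ calls to $\textsc{AddEdge}/\textsc{DeleteEdge}$ at $O(\log n)$ apiece; initializing $\mathcal{D}$ on the vertex set $V$ costs a further $O(n\log n)$.

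Second, I would handle the $\textsc{FindNearestMarkedVertex}$ sweep that maintains $\widetilde{mpd}_{i+1}$. The structural observation is that the marked vertices are exactly the leaves $V(H_1)$ and that, from the definition of $w_{F_i}$ and of $\widetilde{pivotDist}_{i+1}(v)=\sum_{j\le i}\widetilde{pivotDist}_{j+1}(p_j(v))$, the distance from a leaf $v$ to the root of its tree equals $\widetilde{pivotDist}_{i+1}(v)-\widetilde{mpd}_{i+1}(v)$; thus one query at a root returns the most out-of-date leaf. Per stage and per changed tree the sweep does one query plus one more for each reset of some $\widetilde{mpd}_{i+1}(u)$ it performs (each reset being preceded by a negative answer), so I would charge queries to (i) the (stage, changed-tree) incidences --- at most the number of edge events of $F_i$ --- and (ii) the resets of $\widetilde{mpd}$. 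After a reset $\widetilde{mpd}_{i+1}(u)$ equals the current $\widetilde{minPivotDist}_{i+1}(u)=\widetilde{pivotDist}_{i+1}(u)$, which --- like every pivot distance here --- is even, so $\lceil\cdot\rceil_{\roundConst}$ is the identity and $u$'s distance to its root becomes exactly $0$ while no other leaf is pushed below $0$; this is exactly the condition guaranteeing termination of the sweep within a stage and the end-of-stage invariant of \Cref{clm:correctnessOfHistoricPivots}. Each reset also triggers one re-weighting of $F_i$ in $\mathcal{D}$ and one $\textsc{FindRoot}(u)$ to record the new $\hap_{i+1}(u)$ (which changes only when $\widetilde{mpd}_{i+1}(u)$ does), all $O(\log n)$.

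Third, I would bound the total number of resets of $\widetilde{mpd}_{i+1}$ --- equivalently, the number of times the cumulative quantity $\widetilde{pivotDist}_{i+1}(v)$ hits a new running minimum, summed over $v\in V$ --- by $O(n\log nW)$ per level. Granting this, the sweep and all operations it triggers cost $O(n\log nW\log n)$ per level; combined with the first paragraph, $F_i$ undergoes $O(n\log nW)$ events, each processed in $O(\log n)$, so summing over the $k=\Theta(\log\log n)$ levels and absorbing the $O(kn\log n)$ initializations yields $O(kn\log nW\log n)$.

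The step I expect to be the crux is this last recourse bound: proving that $\widetilde{minPivotDist}_{i+1}(v)$ changes only $O(\log nW)$ times for each fixed $v$, i.e.\ that the lazy re-projection scheme behind \Cref{def:sparsifier hierarchy} fires only polylogarithmically often at \emph{every} level. A naive level-by-level analysis --- in which each of the $O(\log nW)$ per-level pivot-distance improvements from \Cref{clm:decreaseInApproxDist} propagates to the level above --- would only give $(\log nW)^{\Theta(\log\log n)}$, the exact blow-up the modified hierarchy is engineered to prevent; the bound must instead be obtained by exploiting the monotone, geometrically decreasing behaviour of the per-level distances together with the rapid growth of the parameters $b_i$ across levels, which should make a vertex's cumulative pivot distance essentially controlled by its topmost term. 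Everything else --- $\sum_j n_j=O(n)$, the no-op nature of the rounding on pivot distances, and the $O(\log n)$ cost of a $\mathcal{D}$-operation --- is routine.
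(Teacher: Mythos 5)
Your first two charging steps coincide with the paper's accounting: the forest $F_i$ undergoes only $O(\log nW)$ edge/weight events per vertex (by \Cref{clm:decreaseInApproxDist}), i.e.\ $O(n\log nW)$ events per level since $\sum_j n_j = O(n)$, each processed in $O(\log n)$ by the dynamic-tree structure, and the $\textsc{FindNearestMarkedVertex}$ queries are charged either to these events or to resets of $\widetilde{mpd}_{i+1}$. The genuine gap is your third step, which you explicitly assume (``Granting this'') and which carries the entire claim: bounding the number of resets per level by $O(n\log nW)$. Worse, the route you sketch for closing it --- a cross-level recourse analysis exploiting the growth of the parameters $b_i$ to control the cumulative sum $\widetilde{pivotDist}_{i+1}(v)$ --- is not the mechanism the paper uses and is not needed. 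The paper's argument is a one-line per-vertex potential argument on the stored value itself: every time a query returns a negative distance, the ensuing reset at least halves $\widetilde{mpd}_{i+1}(u)$; since $\widetilde{mpd}_{i+1}(u)$ is initialized to $\lceil n^2W\rceil_{\roundConst}$ and stays bounded below by a constant, each vertex suffers only $O(\log nW)$ resets, hence $O(n\log nW)$ negative-answer queries per level, each (with the induced re-weighting and the update of $\hap_{i+1}(u)$) costing $O(\log n)$. This halving behaviour is precisely what the ``last \emph{significantly} improving pivot'' bookkeeping of \Cref{def:sparsifier hierarchy} is designed to guarantee; no property of the $b_i$'s enters the argument.

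Relatedly, your diagnosis conflates two different quantities. The cumulative distance $\widetilde{pivotDist}_{i+1}(v)$ of a leaf to its root can indeed change far more than $O(\log nW)$ times (this is the $(\log nW)^{\Theta(\log\log n)}$-type blow-up you fear), but this causes no extra work: each such change is induced by a single ancestor-edge update in $F_i$, which the tree data structure absorbs in one $O(\log n)$ operation regardless of how many leaves see their root-distance change --- and these ancestor-edge updates are exactly the $O(n\log nW)$ events you already counted. The only per-leaf work is triggered by resets of $\widetilde{mpd}_{i+1}$, and those are tamed by the halving argument above (the same per-vertex $O(\log nW)$ bound on changes of $\lceil\widetilde{minPivotDist}_{i+1}(\cdot)\rceil_{\roundConst}$ is what \Cref{lma:edgesUpperBound} later relies on). As written, your proposal leaves this crux unproven and points toward a different, unnecessary mechanism, so it does not establish the claim.
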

\begin{proof}
Fixing a level $1 \leq i < k$, we have that each vertex in $V(H_i)$ has $\widetilde{pivotDist}_{i+1}(v)$ and $\ap_{i+1}(v)$ updated at most $O(\log nW)$ times. Since each update can be handled by the algorithm described above in time $O(\log n)$, our algorithm spends at most $O(|V(H_i)| \log nW \log n)$ time on handling updates and the resulting query on the updated tree. Further, each time a query returns a negative value, we at least half the value $\widetilde{mpd}_{i+1}(u)$ for some $u \in V$. Thus, the number of such queries is bound by $O(n \log nW)$ and each query and subsequent update of $\widetilde{mpd}_{i+1}(u)$ can again be implemented in time $O(\log n)$. Since we have $k-1$ levels where we maintain our data structure, the bound follows.
\end{proof}

\paragraph{Putting it all together.} Finally, we give the algorithm to maintain $H_1, H_2, \ldots, H_k$ (and thus to maintain $\wH_1, \wH_2, \ldots, \wH_k$). Recall that initially these graphs are equal to the empty graphs. Then, $H_i$ (and $\wH_i$) is updated whenever \Cref{alg:updateAp} is invoked, and to be precise, is updated just before the $i$-th iteration of the for-loop in \Cref{alg:updateAp}. For $i = 1$, the update is simple as $H_1$ is just equal to $G$ at any stage. For $i + 1 \geq 1$, we have that $H_i$ is already updated for the current stage, and for any $v \in V(H_i)$, we have $\ap_{i+1}(v), \widetilde{pivotDist}_{i+1}(v), B_{H_i}(v, \frac{1}{4}\widetilde{pivotDist}_{i+1}(v))$ and $\mathcal{D}_v$, and for any $v \in V$, we have $\hap_{i+1}(v)$ and $\lceil \widetilde{minPivotDist}_{i+1}(v) \rceil_{\roundConst}$ in their updated version (i.e. these values do not change for the rest of the stage). This can be seen easily from \Cref{alg:updateAp} and our description of the algorithm to maintain historic approximate pivots. 

Given this updated information, it is straight-forward to generate all edges that are missing from $H_{i+1}$ in constant additional time per edge added to $H_{i+1}$. To obtain a bound on the runtime, it thus suffices to bound the number of edges in $H_{i+1}$.

\begin{lemma}\label{lma:edgesUpperBound}
Throughout the algorithm, we have for any $1 \leq i < k$, that $|E(H_{i+1})| = O(m + kn \log^5 nW + \sum_{j \leq i} |V(H_j)|b_j \log^4 nW)$.
\end{lemma}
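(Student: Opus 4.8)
The plan is to split $E(H_{i+1})$ into the four kinds of edges mandated by \Cref{def:sparsifier hierarchy}: the three families of base edges coming from \Cref{iball_edges}, \Cref{item:glueEdgesBetweenAPs} and \Cref{item:glueEdgesBetweenBaseConnectors} (which together make up $E^{base}_{i+1}$), and the projected edges coming from \Cref{projected_edges} (which make up $E^{proj}_{i+1}$), and to bound the total number of distinct edges ever inserted of each kind, using that all these graphs are incremental.

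First I would handle the base edges. For the ball edges of \Cref{iball_edges}, fix $u\in V(H_i)$: by \Cref{clm:decreaseInApproxDist} the single-level quantity $\widetilde{pivotDist}_{i+1}(u)$ takes only $O(\log(nW))$ distinct values, and while it is held fixed the ball $B_{\wH_i}(u,\frac{1}{4}\widetilde{pivotDist}_{i+1}(u))$ is monotonically growing but stays below $\hat{b}_i$ vertices by the while-loop invariant of \Cref{alg:updateAp}; so at most $O(\hat{b}_i\log(nW))$ vertices $v$ are ever relevant to $u$, and for each pair $(u,v)$ the inserted edge $(\ap_{i+1}(u),\ap_{i+1}(v))$ of weight $5\lceil\widetilde{pivotDist}_{i+1}(u)\rceil_{\roundConst}$ changes only when one of $\ap_{i+1}(u)$, $\ap_{i+1}(v)$, $\widetilde{pivotDist}_{i+1}(u)$ changes, i.e. $O(\log(nW))$ times. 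This gives $O(|V(H_i)|\hat{b}_i\log^2(nW))=O(|V(H_i)|b_i\log^3(nW))$ ball edges. For the clique-between-approximate-pivots edges of \Cref{item:glueEdgesBetweenAPs}, each $v\in V(H_i)$ changes its approximate pivot $O(\log(nW))$ times by \Cref{clm:decreaseInApproxDist}, contributing an $O(\log(nW))$-clique, hence $O(|V(H_i)|\log^2(nW))$ edges overall. For the connector edges of \Cref{item:glueEdgesBetweenBaseConnectors}, for each $v\in V$ the value $\hap_i(v)$ attains $O(\log(nW))$ distinct values as $t'$ ranges over earlier stages (it moves only when $\lceil\widetilde{minPivotDist}_i(v)\rceil_{\roundConst}$ roughly halves, cf. \Cref{clm:runtimeHistoricPivots}), and for each of these $O(\log(nW))$ choices of $x$ the endpoints $\hap_{i+1}(x),\hap_{i+1}(v)$ change $O(\log(nW))$ times, for a total of $O(n\log^2(nW))$ connector edges. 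Altogether $|E^{base}_{i+1}|=O(|V(H_i)|b_i\log^3(nW)+n\log^2(nW))$.

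Next the projected edges. For a base edge $e=(x,y)\in E^{base}(H_j)$ with $j\le i$, its contribution to $E^{proj}_{i+1}$ at any stage is the single edge $(\hap_{i+1}(x),\hap_{i+1}(y))$ whose weight depends only on $\lceil\widetilde{minPivotDist}_{i+1}(x)\rceil_{\roundConst}$, the fixed value $\lceil w_{H_j}(e)\rceil_{\roundConst}$, and $\lceil\widetilde{minPivotDist}_{i+1}(y)\rceil_{\roundConst}$. Since $\hap_{i+1}(\cdot)$ and $\lceil\widetilde{minPivotDist}_{i+1}(\cdot)\rceil_{\roundConst}$ change only $O(\log(nW))$ times for any fixed vertex (each change of the latter at least halves it, cf. \Cref{clm:runtimeHistoricPivots}) and --- this is \emph{the key point} --- this bound does not depend on the level $i+1$, each base edge from a level $\le i$ generates only $O(\log(nW))$ distinct projected edges, so $|E^{proj}_{i+1}|=O\big(\log(nW)\cdot\sum_{j\le i}|E^{base}(H_j)|\big)$. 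Substituting $|E^{base}(H_1)|=|E|=m$ and, for $2\le j\le i$, the base-edge bound above applied at level $j$, then re-indexing the sum and using \Cref{cor:numberOfVerticesDecreasesALot} to check that the $n$-terms collect to $O(kn\,\polylog(nW))$, yields the stated estimate.

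The main obstacle is precisely this projected-edge count. A careless ``layer-by-layer'' argument would re-create a fresh projection of every edge at each level and thus produce $\log(nW)^{\Theta(i)}$ copies of an original edge at level $i+1$ --- exactly the exponential blow-up that \Cref{def:sparsifier hierarchy} is engineered to prevent --- so the real content is that a fixed base edge has only $O(\log(nW))$ distinct incarnations under \Cref{projected_edges}, uniformly in the target level. This in turn rests entirely on the fact that $\hap_{i+1}(\cdot)$ and $\lceil\widetilde{minPivotDist}_{i+1}(\cdot)\rceil_{\roundConst}$ pass through only $O(\log(nW))$ values independently of $i$, which is exactly what the lazy maintenance of these quantities (\Cref{clm:decreaseInApproxDist}, \Cref{clm:runtimeHistoricPivots}) guarantees; once that is in hand, the remaining recursion and the collection of polylogarithmic factors are routine.
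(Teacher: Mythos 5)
Your proposal is correct and follows essentially the same route as the paper's proof: a case analysis over the four edge families of \Cref{def:sparsifier hierarchy}, bounding each via the $O(\log nW)$ bound on changes of $\widetilde{pivotDist}$, $\hap$, and $\lceil\widetilde{minPivotDist}\rceil_{\roundConst}$ (\Cref{clm:decreaseInApproxDist}), the $\hat{b}_i$ ball-size invariant of \Cref{alg:updateAp}, and the level-independent count of projected versions per base edge. Your intermediate exponents of $\log nW$ differ slightly from the paper's (e.g.\ $O(\log nW)$ rather than $O(\log^2 nW)$ projections per base edge, one extra log on the ball edges), but these differences wash out and you arrive at the same final bound by the same mechanism.
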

\begin{proof}
We first prove the claim that for any such $i$, we have $|E^{base}_{i+1}| = O(n \log^3 nW + |V(H_i)|\hat{b}_i \log nW)$. We proceed by a case analysis for each edge type that is generated. Following \Cref{def:sparsifier hierarchy} we have
\begin{itemize}
    \item For edges generated from \Cref{iball_edges}, we have that there is at most one edge generated for each vertex $v$ in $B_{H_i}(u, \frac{1}{4} \cdot \widetilde{pivotDist}_{i+1}(u))$ for any vertex $u \in V(H_i)$. Since $\widetilde{pivotDist}_{i+1}(u)$ is updated at most $O(\log nW)$ times and since it is ensured that $B_{H_i}(u, \frac{1}{4} \cdot \widetilde{pivotDist}_{i+1}(u))$ is of size less than $\hat{b}_i$ when edges are generated, we have that there are at most $O(|V(H_i)|\hat{b}_i \log nW)$ such edges. 
     
     \item For edges generated from \Cref{item:glueEdgesBetweenAPs}, we have that such edges are only generated for a vertex $v \in V(H_i)$ whenever its pivot $\ap_{i+1}(v)$ is updated, and if so an edge to every former pivot of $v$ is added. But since we have from  \Cref{clm:decreaseInApproxDist} that there are at most $O(\log nW)$ such pivots throughout the algorithm, the total number of such edges can be bound by $O(|V(H_i)|\log^2 nW)$.
     
    \item For edges generated from \Cref{item:glueEdgesBetweenBaseConnectors}, we have to generate new edges for a vertex $v \in V$ only when $\hap_{i+1}(v)$ is updated or if for an existing edge $(\hap^{(t)}_{i+1}(x), \hap_{i+1}^{(t)}(v))$ where $x = \hap^{(t')}_{i}(v)$ for some $t' \leq t$, the weight has to change because one of the quantities $\lceil \widetilde{minPivotDist}^{(t)}_{i+1}(x) \rceil_2, \lceil \widetilde{minPivotDist}^{(t')}_i(v) \rceil_2, \lceil \widetilde{minPivotDist}^{(t)}_{i+1}(v) \rceil_2$ was changed. By our previous line of argumentation, we have that there are at most $O(\log nW)$ historic pivots $\hap_i(v)$ and $\hap_{i+1}(v)$ and that $\lceil \widetilde{minPivotDist}^{(t)}_{i+1}(x) \rceil_2, \lceil \widetilde{minPivotDist}^{(t)}_{i+1}(v) \rceil_2$ can change at most $O(\log nW)$ times for each pair $x,v$ for which we generate an edge. We conclude that there are at most $O(n \log^3 nW)$ such edges. 
\end{itemize}

Having established the upper bound on $E_{j}^{base}$ for all $1 \leq j < k$, and using that $E_1^{base}$ consists of the edges in $G$, we can thus bound the number of edges in $E_{i+1}^{proj}$. These edges are generated from \Cref{projected_edges}, where we have an edge $(\hap_{i+1}(x), \hap_{i+1}(y))$ in $E_{i+1}^{proj}$ for each edge $(x,y) \in E^{base}_j$ for any $j \leq i$ for any historic pivots of $x$ and $y$ and the edge weight is equal to $ \lceil \widetilde{minPivotDist}_{i+1}(x)\rceil_{\roundConst} + \lceil w_{H_j}(e) \rceil_{\roundConst} +  \lceil \widetilde{minPivotDist}_{i+1}(y)\rceil_{\roundConst}$. From our previous discussion, we have that for each such edge in $E^{base}_j$ for any $j \leq i$, we can have at most $O(\log^2 nW)$ versions in $E_{i+1}^{proj}$. Thus, the total number of such edges is $O(m + kn \log^5 nW + \sum_{j \leq i} |V(H_j)|\hat{b}_j \log^3 nW)$, as desired.
\end{proof}

Using this upper bound on the number of edges, we can prove the main result of the section, \Cref{thm:algoToMaintainVertexSparsifierHier}.

\mainAlgoResult*

\begin{proof}
We have correctness of the algorithm by \Cref{lma:approxPivotsAreReallyPowerful}, \Cref{clm:correctnessOfHistoricPivots} and our previous discussion. It thus remains only to bound the runtime. From \Cref{lma:approxPivotsAreReallyPowerful}, \Cref{clm:runtimeHistoricPivots} and \Cref{lma:edgesUpperBound}, we can upper bound the total update time of the algorithm by 
\[O\left(\sum_{i=1}^k n_i b_i^4 \log^6 nW + mk \log(m) \right)\] 
(recall $n_1 = \Omega(n)$). Further note that for $i < 10$, we have $b_i \leq 2^{(6/5)^{10}} = O(1)$ and therefore $n_i  b_i^4 = O(n)$. For $i \geq 10$, we have that 
\[
n_i \leq n / \prod_{j = 1}^{i-1} b_{j} = n / 2^{\sum_{j = 1}^{i-1} (6/5)^j} = n / 2^{(6/5)^i \sum_{j = 1}^{i-1} (5/6)^j} \leq n/ b_i^4
\]
where we use the formula for geometric sums to obtain $\sum_{j = 1}^{i-1} (5/6)^{j} \geq (1-(5/6)^{-10})/(1-5/6) - 1 \geq 4$. This allows to bound the total update time as stated above, and it remains to observe that one can implement the data structure to maintain $\textsc{Adj}_{v, H_i}$ for each $i$ in time $O(m \log m)$ by using binary search when adding new edges in to the adjacency list of $v$.
\end{proof}

\subsection{The Query Algorithm}

Finally, we can give a query algorithm that is almost identical to the query algorithm in \cite{andoni2020parallel}. Here we define for convenience the function $\ap_1 : V \mapsto V$ to be the identity function and recall that $\widetilde{pivotDist}_{k+1}(x) = \infty$ for all $x \in V(H_k)$.

\begin{algorithm}
\DontPrintSemicolon
$i \gets 1$.\\
\While(\label{lne:whileLoopConditionQuery}){the distance estimate $\widehat{dist}(\ap_i(u),\ap_i(v))$ from \Cref{thm:algoToMaintainVertexSparsifierHier} exceeds $\max\{\frac{1}{4}\widetilde{pivotDist}_{i+1}(\ap_i(v)), \frac{1}{4}\widetilde{pivotDist}_{i+1}(\ap_i(u))\}$}{
    $\widetilde{d}_i \gets \widetilde{pivotDist}_{i+1}(\ap_i(u)) + \widetilde{pivotDist}_{i+1}(\ap_i(v))$.\label{lne:addPivotPaths}\\
    $i \gets i + 1$.
}
$\widetilde{d}_i \gets \widehat{dist}(\ap_i(u),\ap_i(v))$.\label{lne:addFinalPathDist}\\
\Return $\widetilde{dist}(u,v) = \sum_{j \leq i} \widetilde{d}_j$
\caption{$\textsc{QueryDist}(u,v)$}
\label{alg:query}
\end{algorithm}

\begin{lemma}
The algorithm $\textsc{QueryDist}(u,v)$ returns a distance estimate $\widetilde{dist}(u,v)$ such that 
\[
    dist_G(u,v) \leq \widetilde{dist}(u,v) \leq \tilde{O}(1) \cdot dist_G(u,v).
\]
The algorithm runs in worst-case time $O(\log\log n)$.
\end{lemma}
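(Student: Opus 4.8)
The plan is to analyze $\textsc{QueryDist}$ (Algorithm~\ref{alg:query}) in two parts: a lower bound $dist_G(u,v) \le \widetilde{dist}(u,v)$, an upper bound $\widetilde{dist}(u,v) \le \tilde O(1)\cdot dist_G(u,v)$, and a running-time bound.

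\textbf{Running time.} This is the easy part. The while-loop in Line~\ref{lne:whileLoopConditionQuery} increments $i$ each iteration, and once $i$ reaches $k$ we have $\widetilde{pivotDist}_{k+1}(\cdot) = \infty$, so the loop terminates (the condition $\widehat{dist} > \infty$ is never true). Hence the loop runs at most $k-1 = O(\log\log n)$ times, and each iteration does $O(1)$ work: a query to the data structure of \Cref{thm:algoToMaintainVertexSparsifierHier} (worst-case $O(1)$ time) plus a few arithmetic operations on stored values $\widetilde{pivotDist}_{i+1}(\ap_i(\cdot))$. The final line also does one $O(1)$ query and an $O(k)$ summation. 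So the total is $O(\log\log n)$.

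\textbf{Lower bound.} I would argue inductively that $\sum_{j \le i}\widetilde d_j \ge dist_{H_i}(\ap_i(u),\ap_i(v))$ at the point the loop has reached index $i$, using that (i) $\widehat{dist}(\ap_i(u),\ap_i(v)) \ge dist_{H_i}(\ap_i(u),\ap_i(v))$ always, by \Cref{thm:algoToMaintainVertexSparsifierHier}; (ii) the increment $\widetilde d_i = \widetilde{pivotDist}_{i+1}(\ap_i(u)) + \widetilde{pivotDist}_{i+1}(\ap_i(v))$ pays for going from $\ap_i(u),\ap_i(v)$ (in $H_i$) down to their pivots $\ap_{i+1}(\ap_i(u)) = \ap_{i+1}(u)$, $\ap_{i+1}(v)$ in $H_{i+1}$, since $\widetilde{pivotDist}_{i+1}(\cdot) \ge dist_{H_i}(\cdot, \ap_{i+1}(\cdot))$ by \Cref{prop:apprxDist}; and (iii) by the triangle inequality $dist_{H_i}(\ap_i(u),\ap_i(v)) \le \widetilde d_i + dist_{H_i}(\ap_{i+1}(u),\ap_{i+1}(v))$, and then $dist_{H_i}(\ap_{i+1}(u),\ap_{i+1}(v)) \le dist_{H_{i+1}}(\ap_{i+1}(u),\ap_{i+1}(v))$ because $H_{i+1}$'s edges have weights at least the corresponding $H_i$-distances (the lower-bound half of \Cref{thm:mainApproxTheorem}, proved in \Cref{sec:LowerBoundingVertexSparsifier}). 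Chaining this from $i=1$ (where $\ap_1$ is the identity, so $dist_{H_1}(\ap_1(u),\ap_1(v)) = dist_G(u,v)$) through the final index gives $\widetilde{dist}(u,v) \ge dist_G(u,v)$.

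\textbf{Upper bound.} This is where the work is, and it closely mirrors the query analysis of~\cite{andoni2020parallel}. Let $i^\ast$ be the final value of $i$ when the loop exits. For each level $j < i^\ast$ the loop did \emph{not} exit, meaning $\widehat{dist}(\ap_j(u),\ap_j(v)) > \frac18\max\{\widetilde{pivotDist}_{j+1}(\ap_j(v)), \widetilde{pivotDist}_{j+1}(\ap_j(u))\}$; combined with the quality guarantee of \Cref{thm:algoToMaintainVertexSparsifierHier} (when the estimate is \emph{above} the $\frac18$-ball threshold we cannot directly invoke the $2$-approx, so instead I use that $\widetilde{pivotDist}_{j+1}(\ap_j(u)), \widetilde{pivotDist}_{j+1}(\ap_j(v))$ are each $\le 8\,dist_{H_j}(\ap_j(u),\ap_j(v))$ whenever the estimate exceeds the threshold... actually more carefully: the else-direction of the guarantee says if one endpoint is inside the other's $\frac18$-ball then $\widehat{dist}\le 2\,dist$; contrapositively if $\widehat{dist} > 2\,dist_{H_j}(\cdot,\cdot)$ then neither is in the other's ball, i.e.\ $dist_{H_j}(\ap_j(u),\ap_j(v)) > \frac18\max\{\widetilde{pivotDist}_{j+1}(\ap_j(u)),\widetilde{pivotDist}_{j+1}(\ap_j(v))\}$, but this needs the estimate bound too — so the clean statement is: at each non-exiting level, $\widetilde d_j = \widetilde{pivotDist}_{j+1}(\ap_j(u)) + \widetilde{pivotDist}_{j+1}(\ap_j(v)) \le 16\,dist_{H_j}(\ap_j(u),\ap_j(v))$, because if not then each $\widetilde{pivotDist}$-term being large would force $u$ or $v$ into a ball and hence $\widehat{dist} \le 2\,dist$, contradicting the loop condition). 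Then I bound $dist_{H_{j+1}}(\ap_{j+1}(u),\ap_{j+1}(v)) \le 3629\, dist_{H_j}(\ap_{j+1}(u),\ap_{j+1}(v)) \le 3629\,(dist_{H_j}(\ap_j(u),\ap_j(v)) + \widetilde d_j) \le 3629 \cdot 17\,dist_{H_j}(\ap_j(u),\ap_j(v))$ using \Cref{thm:mainApproxTheorem} and the triangle inequality. Iterating, $dist_{H_{i^\ast}}(\ap_{i^\ast}(u),\ap_{i^\ast}(v)) \le (3629\cdot 17)^{i^\ast - 1} dist_G(u,v)$, and similarly the accumulated $\sum_{j < i^\ast}\widetilde d_j \le 16\sum_{j<i^\ast}(3629\cdot17)^{j-1} dist_G(u,v) = O((3629\cdot17)^{i^\ast}) dist_G(u,v)$. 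Finally at level $i^\ast$ the loop exits, so either the estimate is below the $\frac18$-threshold — in which case one endpoint lies in the other's $\frac18$-ball and \Cref{thm:algoToMaintainVertexSparsifierHier} gives $\widehat{dist}(\ap_{i^\ast}(u),\ap_{i^\ast}(v)) \le 2\, dist_{H_{i^\ast}}(\ap_{i^\ast}(u),\ap_{i^\ast}(v))$ — or $i^\ast = k$ and then $H_k$ has a single vertex so $\ap_k(u) = \ap_k(v)$ and the distance is $0$; either way $\widetilde d_{i^\ast} = O((3629\cdot17)^{i^\ast})dist_G(u,v)$. Summing, $\widetilde{dist}(u,v) = O((3629\cdot 17)^{k}) dist_G(u,v)$, and since $k = \Theta(\log\log n)$ this constant is $(O(1))^{O(\log\log n)} = (\log n)^{O(1)} = \tilde O(1)$, as claimed.

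\textbf{Main obstacle.} The delicate point is the inductive bookkeeping in the upper bound: correctly reading off from the two-sided query guarantee of \Cref{thm:algoToMaintainVertexSparsifierHier} that at each non-exiting level the pivot-distance increments $\widetilde d_j$ are themselves $O(1)$-comparable to $dist_{H_j}(\ap_j(u),\ap_j(v))$ (this is what makes the loop terminate "for the right reason" and is the analogue of the classic Thorup--Zwick argument that either you are inside a ball or the pivot distance is small), and then carrying the $3629$-factor stretch of a single level through $O(\log\log n)$ levels while tracking that the triangle-inequality detour through the pivots only costs constant factors. Everything else is routine.
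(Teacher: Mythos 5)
Your plan follows the paper's own proof in all essentials: the loop length is bounded by $k$ via the convention $\widetilde{pivotDist}_{k+1}(\cdot)=\infty$; the lower bound charges each $\widetilde{d}_j$ against the pivot hops; and the upper bound rests on the same dichotomy the paper uses — at every non-exiting level, either one of $\ap_j(u),\ap_j(v)$ lies in the other's $\frac{1}{8}$-ball (so the $2$-approximation of \Cref{thm:algoToMaintainVertexSparsifierHier} would contradict the loop condition), hence $\widetilde{d}_j = O(1)\cdot dist_{H_j}(\ap_j(u),\ap_j(v))$ — combined with the per-level factor $3629$ of \Cref{thm:mainApproxTheorem}, the triangle inequality through the pivots, and geometric accumulation over $k=\Theta(\log\log n)$ levels (the paper organizes this as a backward induction, you as a forward accumulation; your constants $16$ vs.\ the correct $32$, etc., are immaterial).

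The one step that does not go through as written is in your lower bound: you invoke $dist_{H_j}(\ap_{j+1}(u),\ap_{j+1}(v)) \leq dist_{H_{j+1}}(\ap_{j+1}(u),\ap_{j+1}(v))$ and attribute it to \Cref{sec:LowerBoundingVertexSparsifier}. The lemma there only proves $dist_G(x,y) \leq dist_{H_{j+1}}(x,y)$: its case analysis lower-bounds the weights of edges of types \ref{item:glueEdgesBetweenBaseConnectors} and \ref{projected_edges} against distances in $G$ (via \Cref{fact:helperInLowerBoudnInd}), not against distances in $H_j$, and since $dist_{H_j}$ may exceed $dist_G$ by a factor that grows with $j$ (the upper bound of \Cref{thm:mainApproxTheorem} only gives $dist_{H_j} \leq 3629^{j-1}\, dist_G$), consecutive-level monotonicity is neither established in the paper nor safe to assume — the $\widetilde{minPivotDist}$ padding in those edge weights is calibrated to $G$-distances only. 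The repair is immediate and is exactly what the paper does: apply the triangle inequality in $G$ itself, writing $dist_G(u,v) \leq \sum_{j<i}\bigl(dist_G(\ap_j(u),\ap_{j+1}(u)) + dist_G(\ap_j(v),\ap_{j+1}(v))\bigr) + dist_G(\ap_i(u),\ap_i(v))$, and bound each hop by $\widetilde{pivotDist}_{j+1}(\ap_j(\cdot))$ using $dist_G \leq dist_{H_j}$ together with \Cref{prop:apprxDist}, and the last term by the returned estimate. Two smaller points: your exit-case claim that $i^{*}=k$ forces $H_k$ to consist of a single vertex is unsupported and unnecessary — the convention $\widetilde{pivotDist}_{k+1}(\cdot)=\infty$ already makes the $2$-approximation guarantee apply at level $k$; and to get worst-case $O(\log\log n)$ rather than $O((\log\log n)^2)$ you should note, as the paper does, that $\ap_j(u)=p_j(\ap_{j-1}(u))$ is computed incrementally from the cached previous iteration so that each loop iteration costs $O(1)$.
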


\paragraph{Runtime Analysis.} To bound the runtime, we first observe that if we are in the $j$-th iteration of the while-loop, once we evaluated $p_j(u)$ and $p_j(v)$, we can implement the while-loop in $O(1)$ time. While $p_j(u)$ and $p_j(v)$ are nested functions of depth $j-1$ which would naively take time $O(j)$ to evaluate, we use that in the $j-1$-th iteration (for $j > 1$, otherwise $O(j)$ is constant), we already evaluate $p_{j-1}(u)$ and $p_{j-1}(v)$, and by keeping them stored in a cache, we can compute $p_j(u) = p_j(p_{j-1}(u))$ and $p_j(v) = p_j(p_{j-1}(v))$ in constant time respectively. Thus, each iteration of the while loop takes constant time.

Letting $i$ refer to the final value of the variable, we claim that $i \leq k$. Given this claim, it is not hard to see that from \Cref{alg:query} and \Cref{thm:algoToMaintainVertexSparsifierHier}, that the total time spend can be bound by $O(k) = O(\log\log n)$.

To prove that $i \leq k$, we observe that the variable is initialized to $1$ and then increased by each iteration of the while-loop by just one. Thus, if we assume for the sake of contradiction that $i > k$, the condition of the while-loop before the $k$-th iteration must have been true. In particular, we have $\widehat{dist}(\ap_k(u),\ap_k(v))$ exceeding $\frac{1}{4}\widetilde{pivotDist}_{k+1}(\ap_k(v))$ but this gives an immediate contradiction as we define $\widetilde{pivotDist}_{k+1}(x) = \infty$ for all $x \in V(H_i)$ while we assume that $H_k$ is a connected graph.

\paragraph{Lower Bounding the Estimate.} Observe that 
\begin{align*}
    \widetilde{dist}(u,v) &= \sum_{j \leq i} \widetilde{d}_j \geq \sum_{\ell = 1}^{i - 1} \widetilde{pivotDist}_{\ell+1}(\ap_\ell(u)) + dist_{H_i}(\ap_i(u),\ap_i(v)) + \sum_{\ell = 1}^{i - 1} \widetilde{pivotDist}_{\ell+1}(\ap_\ell(v))\\
    &\geq \sum_{\ell = 1}^{i - 1} dist_{H_{\ell}}(\ap_\ell(u), \ap_{\ell+1}(u)) + dist_{H_i}(\ap_i(u),\ap_i(v)) + \sum_{\ell = 1}^{i - 1} dist_{H_{\ell}}(\ap_\ell(v), \ap_{\ell+1}(v))\\
    &\geq \sum_{\ell = 1}^{i - 1} dist_{G}(\ap_\ell(u), \ap_{\ell+1}(u)) + dist_{G}(\ap_i(u),\ap_i(v)) + \sum_{\ell = 1}^{i - 1} dist_{G}(\ap_\ell(v), \ap_{\ell+1}(v))\\
    &\geq dist_G(u,v)
\end{align*}
where we have the first equality from \Cref{lne:addPivotPaths} and \Cref{lne:addFinalPathDist}, the second inequality follows from \Cref{prop:apprxDist}, the third inequality from \Cref{thm:mainApproxTheorem}, and the final inequality from the triangle inequality.

\paragraph{Upper Bounding the Estimate.} We next prove by induction that for each $1 \leq \ell \leq i$, we have $\sum_{j = \ell}^i \widetilde{d}_j \leq 2 \cdot 235885
^{i-\ell} dist_{H_i}(\ap_i(u), \ap_i(v))$. For the base case, we have $\ell = i$, and $\widetilde{d}_{\ell} \leq 2 \cdot dist_{H_{\ell}}(\ap_{\ell}(u), \ap_{\ell}(v))$ by \Cref{thm:algoToMaintainVertexSparsifierHier} which exactly corresponds to the definition in \Cref{lne:addFinalPathDist}. 

We can then take the inductive step for $\ell+1 \mapsto \ell$ for some $\ell < i$. From the while-loop condition in \Cref{lne:whileLoopConditionQuery}, \Cref{thm:algoToMaintainVertexSparsifierHier} and the fact that $\ell < i$, 
\begin{align*}
    dist_{H_{\ell}}(\ap_{\ell}(u),\ap_{\ell}(v))&\geq \frac{1}{8} \max\{ \widetilde{pivotDist}_{\ell+1}(\ap_\ell(u)), \widetilde{pivotDist}_{\ell+1}(\ap_\ell(v)) \} \\ &\geq \frac{1}{8} \max\{ dist_{H_{\ell}}(\ap_{\ell}(u), \ap_{\ell + 1}(u)),  dist_{H_{\ell}}(\ap_{\ell}(v), \ap_{\ell + 1}(v)) \}
\end{align*} 
where the last inequality follows from \Cref{prop:apprxDist}. We further obtain
\begin{align*}
\sum_{j = \ell}^i \widetilde{d}_j &= \sum_{j = \ell + 1}^i \widetilde{d}_j + \widetilde{d}_{\ell} \\
        &\leq 2 \cdot 235885^{i-\ell-1} dist_{H_{\ell+1}}(\ap_{\ell+1}(u), \ap_{\ell+1}(v)) + \widetilde{d}_{\ell}\\
        &\leq 2 \cdot 3629 \cdot 235885^{i-\ell-1} dist_{H_{\ell}}(\ap_{\ell+1}(u), \ap_{\ell+1}(v))  + \widetilde{d}_{\ell}\\
        &\leq 2 \cdot 3629 \cdot 236145^{i-\ell-1} (dist_{H_{\ell}}(\ap_{\ell}(u), \ap_{\ell}(v))\\
        &+ \sum_{x \in \{u,v\}} dist_{H_{\ell}}(\ap_{\ell+1}(x), \ap_{\ell}(x)))  + \widetilde{d}_{\ell}\\
        &\leq 2 \cdot 3629 \cdot 235885^{i-\ell-1} (dist_{H_{\ell}}(\ap_{\ell}(u), \ap_{\ell}(v))\\& + \sum_{x \in \{u,v\}} dist_{H_{\ell}}(\ap_{\ell+1}(x), \ap_{\ell}(x))) \\
        &\leq 2 \cdot 3629 \cdot 235885^{i-\ell-1} \cdot 65 \cdot dist_{H_{\ell}}(\ap_{\ell}(u), \ap_{\ell}(v)) \\
        &= 2 \cdot 235885^{i-\ell} \cdot dist_{H_{\ell}}(\ap_{\ell}(u), \ap_{\ell}(v)) 
\end{align*}
where we use the induction hypothesis in the first inequality, \Cref{thm:mainApproxTheorem} in the second inequality, the triangle inequality in the third inequality. In the fourth inequality, we use that by \Cref{lne:addPivotPaths} and \Cref{prop:apprxDist} of \Cref{def:sparsifier hierarchy}, $\tilde{d}_{\ell} = \widetilde{pivotDist}_{\ell+1}(\ap_\ell(u)) + \widetilde{pivotDist}_{\ell+1}(\ap_\ell(v)) \leq 4 \left(dist_{H_\ell}(\ap_\ell(u), \ap_{\ell+1}(u)) + dist_{H_\ell}(\ap_\ell(v), \ap_{\ell+1}(v))\right)$ and the previous statement. Finally, we use our insight from before. This concludes the induction.

As we have established the claim, and bound $i \leq k$, it suffices to see that this implies in particular that $\widetilde{dist}(u,v) = \sum_{j = 1}^i \widetilde{d}_j \leq 2 \cdot 235885^{i-1} dist_{H_1}(\ap_1(u), \ap_1(v)) = 2 \cdot 235885^{i-1} dist_{G}(u, v) \leq 2 \cdot 235885^{O(\log\log n)} dist_{G}(u, v) \leq \tilde{O}(1) \cdot dist_{G}(u, v)$.

\section{Acknowledgement} We would like to thank anonymous reviewers of STOC for their invaluable feedback.

\printbibliography


\pagebreak

\appendix

\section{Lower Bounding Vertex Sparsifiers}
\label{sec:LowerBoundingVertexSparsifier}

Here, we prove the lower bound given in \Cref{thm:mainApproxTheorem}.

\begin{lemma}
Given a $k$-level hierarchy maintaining algorithm as described in \Cref{def:sparsifier hierarchy}, for any $1 \leq i \leq k$, we have for any $u,v \in V(H_{i})$, $dist_{G}(u,v) \leq dist_{H_{i}}(u,v)$. 
\end{lemma}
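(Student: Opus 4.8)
The plan is to prove the statement by induction on $i$, showing that every edge of $H_i$ corresponds to a walk in $G$ of no greater weight; since $V(H_i) \subseteq V(G)$, this immediately gives $dist_G(u,v) \leq dist_{H_i}(u,v)$ for all $u,v \in V(H_i)$. The base case $i = 1$ is trivial since $H_1 = G$. For the inductive step, I would assume the claim for all levels $j \leq i$ and establish it for $H_{i+1}$ by examining each of the edge types introduced in \Cref{def:sparsifier hierarchy} and exhibiting, for each, a $G$-walk between its endpoints of weight at most the edge weight.

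The key steps, organized by edge type: For the ball edges of \Cref{iball_edges}, an edge $(\ap_{i+1}(u), \ap_{i+1}(v))$ with weight $5 \lceil \widetilde{pivotDist}_{i+1}(u)\rceil_{\roundConst}$ should be routed via the walk $\ap_{i+1}(u) \leadsto u \leadsto v \leadsto \ap_{i+1}(v)$ in $H_i$, whose length is at most $\widetilde{pivotDist}_{i+1}(u) + dist_{H_i}(u,v) + \widetilde{pivotDist}_{i+1}(v)$; using $dist_{H_i}(u,v) \leq \tfrac14\widetilde{pivotDist}_{i+1}(u)$ (since $v$ lies in the relevant ball), $dist_{H_i}(u,\ap_{i+1}(u)) \leq \widetilde{pivotDist}_{i+1}(u)$, and $\widetilde{pivotDist}_{i+1}(v) \leq \widetilde{pivotDist}_{i+1}(u) + dist_{H_i}(u,v) \leq \tfrac54 \widetilde{pivotDist}_{i+1}(u)$ by the triangle inequality through $u$, one checks the bound $5\lceil\widetilde{pivotDist}_{i+1}(u)\rceil_{\roundConst}$ comfortably dominates this, and then applies the induction hypothesis to translate the $H_i$-walk into a $G$-walk. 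For the glue edges between approximate pivots (\Cref{item:glueEdgesBetweenAPs}) and between base connectors (\Cref{item:glueEdgesBetweenBaseConnectors}), I would similarly route through the common vertex $v$ (respectively through $v$ and the intermediate pivot), bounding the relevant $dist_{H_i}$ or $dist_{H_{i+1}}$ terms by the corresponding $\widetilde{minPivotDist}$ values via \Cref{prop:apprxDist} and \Cref{fact:pivotIsRoughlyMinPivot}, and then recursing. For the projected edges (\Cref{projected_edges}), the edge $(\hap_{i+1}(x), \hap_{i+1}(y))$ associated to a base edge $(x,y) \in E^{base}(H_j)$ is routed as $\hap_{i+1}(x) \leadsto x \leadsto y \leadsto \hap_{i+1}(y)$, where the middle segment uses the base edge (handled by the induction hypothesis at level $j$) and the outer segments are the pivot-to-vertex paths of length at most $\widetilde{minPivotDist}_{i+1}(\cdot)$, which is at most the edge's stated weight.

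The main obstacle I anticipate is a bookkeeping subtlety rather than a conceptual one: the weights in \Cref{def:sparsifier hierarchy} involve quantities $\widetilde{minPivotDist}$ and $\widetilde{pivotDist}$ evaluated at various \emph{historic} stages $t'$, while the distances one wants to lower bound against are in the \emph{current} graph $H_i$. Because all the $H_i$ are incremental, a historic distance can only be an overestimate of the current one, so $dist_{H_i^{(t')}}(v, V(H_{i+1}^{(t')})) \geq dist_{H_i^{(t)}}(v, \ldots)$ goes the \emph{right} way for a lower-bound argument — I need to make sure the monotonicity is exploited in the correct direction throughout, and in particular that $dist_{H_i}(v, \ap_{i+1}(v))$ in the current graph is at most the historic $\widetilde{pivotDist}$ that defined the edge weight. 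Once this monotonicity is pinned down (it is essentially \Cref{fact:pivotIsRoughlyMinPivot} together with incrementality), each edge type reduces to a short triangle-inequality computation plus an application of the inductive hypothesis, and the claimed lower bound $dist_G(u,v) \leq dist_{H_i}(u,v)$ follows.
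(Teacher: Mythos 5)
Your overall plan coincides with the paper's proof: induct on the level, reduce the claim to showing $w(e) \geq dist_G(x,y)$ for every edge $e=(x,y)$ of $H_{i+1}$, and handle each edge type of \Cref{def:sparsifier hierarchy} by a triangle-inequality walk combined with the induction hypothesis and the incrementality of the graphs. Two of your steps, however, do not go through as written.

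First, in the ball-edge case (\Cref{iball_edges}) you use $\widetilde{pivotDist}_{i+1}(v) \leq \widetilde{pivotDist}_{i+1}(u) + dist_{H_i}(u,v)$. The estimator is not $1$-Lipschitz: \Cref{prop:apprxDist} only sandwiches it between $dist_{H_i}(v,V(H_{i+1}))$ and $4\cdot dist_{H_i}(v,V(H_{i+1}))$, so the triangle inequality has to be applied to the true distance to $V(H_{i+1})$ and the factor $4$ reintroduced, as the paper does: $dist_{H_i}(v,\ap_{i+1}(v)) \leq 4\bigl(dist_{H_i}(v,u)+dist_{H_i}(u,\ap_{i+1}(u))\bigr) \leq 5\,\widetilde{pivotDist}_{i+1}(u)$. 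Your claimed bound of $\tfrac54\widetilde{pivotDist}_{i+1}(u)$ is therefore unjustified, and with the corrected constant the comparison against the weight $5\lceil\widetilde{pivotDist}_{i+1}(u)\rceil_{\roundConst}$ is no longer ``comfortable'' and must be redone carefully.

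Second, and more substantively, for the edges of \Cref{item:glueEdgesBetweenBaseConnectors} and \Cref{projected_edges} the weight budgets each endpoint only \emph{once} with $\lceil\widetilde{minPivotDist}_{i+1}(\cdot)\rceil_{\roundConst}$, so the factor-$4$ bound of \Cref{fact:pivotIsRoughlyMinPivot} that you invoke cannot be used there; moreover that fact concerns the \emph{current} pivot of a vertex of $V(H_i)$, whereas these edges involve the historic pivots $\hap_{i+1}(\cdot)$ of arbitrary vertices of $V$. What is needed is the paper's separate helper fact (\Cref{fact:helperInLowerBoudnInd}): $dist_G(y,\hap_{i+1}(y)) \leq \lceil\widetilde{minPivotDist}_{i+1}(y)\rceil_{\roundConst}$ with constant $1$, proved by unrolling $\widetilde{pivotDist}_{i+1}(y)=\sum_{j\leq i}\widetilde{pivotDist}_{j+1}(\ap_j(y))$ at the stage defining $\hap_{i+1}(y)$ (\Cref{lastimprPivot}), applying the induction hypothesis at \emph{every} lower level $j\leq i$ and at that historic stage, and using that $G$ is incremental. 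Your sketch asserts the right bound (``pivot-to-vertex paths of length at most $\widetilde{minPivotDist}_{i+1}(\cdot)$''), but the justification you point to loses a factor $4$ and does not reach the historic pivot; supplying this telescoping argument — which is also why the paper runs the induction jointly over level and stage — is the missing piece.
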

\begin{proof}
We prove by induction on level $i$ and stage $t$. The base case for $i = 1$ is straight-forward as $G = H_1$. We can thus take the inductive step $i \mapsto i + 1$ for $1 \leq i < k$. We prove base case and the inductive step for time $t$ simultaneously below, pointing out the difference (if there is any).

Clearly, to establish our claim, it suffices to show that for every edge $e = (x,y) \in E(H_{i+1})$, we have $w(e) \geq dist_{G}(x,y)$. Since there are only 4 types of edges in $E(H_{i+1})$, we can make a simple case analysis:
\begin{itemize}
    \item For an edge $e = (p_{i+1}(u), p_{i+1}(v))$ in $H_{i+1}$ generated by \Cref{iball_edges} for vertices $u,v \in V(H_i)$, we have, by \Cref{prop:apprxDist}, that $dist_{H_i}(p_{i+1}(u), u) \leq \widetilde{pivotDist}_{i+1}(u)$ and 
    \begin{align*}dist_{H_i}(v,p_{i+1}(v)) &\leq 4 \cdot dist_{H_i}(v, V(H_{i+1})) \\&\leq 4 \cdot \left(dist_{H_i}(v,u) + dist_{H_i}(p_{i+1}(u), u)\right) \leq 5 \cdot \widetilde{pivotDist}_{i+1}(u).\end{align*}
    Thus, \begin{align*}
    w(e) &= 8 \cdot \lceil \widetilde{pivotDist}_{i+1}(u) \rceil_{\roundConst} \\&\geq dist_{H_i}(p_{i+1}(u), u) + dist_{H_i}(u,v) + dist_{H_i}(v, p_{i+1}(v)) \\ &\geq dist_{H_{i}}(p_{i+1}(u), p_{i+1}(v)) \geq dist_G(p_{i+1}(u), p_{i+1}(v)).
    \end{align*}
    where in the last inequality, we use the induction hypothesis.

    \item For a generated edge  $e = (\ap^{(t_j)}_{i+1}(v), \ap^{(t_{\ell})}_{i+1}(v))$ in $H_{i+1}$ generated by  \Cref{item:glueEdgesBetweenAPs} for $v \in V(H_i)$ and $0 \leq t_j \leq t_{\ell} \leq t$. By the triangle inequality, \Cref{fact:pivotIsRoughlyMinPivot} and the fact that distances only decrease over time in $H_i$ by its incremental nature, we have
    \begin{align*}
        dist_{H_i}(\ap^{(t_j)}_{i+1}(v), \ap^{(t_\ell)}_{i+1}(v)) &\leq dist_{H_i}(\ap^{(t_j)}_{i+1}(v), v) + dist_{H_i}(v, \ap^{(t_\ell)}_{i+1}(v)) 
        \\&\leq 8 \cdot \lceil \widetilde{minPivotDist}^{(t_j)}_{i+1}(v) \rceil_{\roundConst} = w(e)
    \end{align*}
    and we can finally just use the IH to complete the case.
    
    \item For an edge $e = (\hap^{(t)}_{i+1}(x), \hap_{i+1}^{(t)}(v))$ generated by \Cref{item:glueEdgesBetweenBaseConnectors} for some vertex $v \in V$, time $t' \leq t$ and $x = \hap^{(t')}_{i}(v)$. Note that for time $t = 0$ (the base case), we have that $t' = t$, and therefore $\hap_{i+1}(x) = \hap_{i+1}(v)$ and therefore $e$ is simply a self-loop at $\hap_{i+1}(v)$ which trivially established the claim.
    
    Otherwise, we use that by the triangle inequality, we have $dist_G(\hap^{(t)}_{i+1}(x), \hap_{i+1}^{(t)}(v)) \leq dist_G(\hap^{(t)}_{i+1}(x), x) + dist_G(x,v) + dist_G(v, \hap_{i+1}^{(t)}(v))$. Next, we use for each of these three quantities the following helper fact. 
    
    \begin{fact}\label{fact:helperInLowerBoudnInd}
    For any $y \in V$, $\lceil \widetilde{minPivotDist}_{i+1}(y) \rceil_{\roundConst} \geq dist_G(y, \hap_{i+1}(y))$.
    \end{fact}
    \begin{proof} 
    Let $t'' =  \min \{ t'' \;|\; \widetilde{pivotDist}^{(t'')}_{i+1}(y) \leq  \lceil \widetilde{minPivotDist}_{i+1}(y) \rceil_{\roundConst} \}$. By  \Cref{lastimprPivot}, we have that  $\hap_{i+1}(y) = \ap^{(t'')}_{i+1}(y)$ and we further have that 
    \begin{align*}
        \lceil \widetilde{minPivotDist}_{i+1}(y) \rceil_{\roundConst} &\geq \widetilde{pivotDist}^{(t'')}_{i+1}(y) = \sum_{j \leq i} \widetilde{pivotDist}^{(t'')}_{j+1}(\ap^{(t'')}_j(y)) \\
        &\geq \sum_{j \leq i} dist^{(t'')}_{H_j}(\ap^{(t'')}_j(y), \ap^{(t'')}_{j+1}(y)) \geq \sum_{j \leq i} dist_{G}(\ap^{(t'')}_j(y), \ap^{(t'')}_{j+1}(y)) \\
        &\geq dist_G(y, \hap_{i+1}(y)). 
    \end{align*}
    Here, we first use the definition of $\widetilde{pivotDist}_{i+1}(y)$ from \Cref{prop:apprxDist}, and in the second inequality use the IH and the fact that $G$ itself is incremental. Finally, we use the triangle inequality. 
    \end{proof}
    
    Note in particular, that by invoking the induction hypothesis, we can use  \Cref{fact:helperInLowerBoudnInd} for any time $t'' \leq t$ and level $j$ in lieu of $i$ as long as $j \leq i$. Thus, we can conclude by \Cref{fact:helperInLowerBoudnInd} that $dist_G(x,v) = dist_G(v, \hap^{(t')}_i(v)) \leq \lceil \widetilde{minPivotDist}^{(t')}_{i}(v) \rceil_{\roundConst}$. We thus have 
    \begin{align*}
        dist_G(&\hap^{(t)}_{i+1}(x), \hap_{i+1}^{(t)}(v)) \leq dist_G(\hap^{(t)}_{i+1}(x), x) + dist_G(x,v) + dist_G(v, \hap_{i+1}^{(t)}(v))\\
        &\leq \lceil \widetilde{minPivotDist}^{(t)}_{i+1}(x) \rceil_2 + \lceil \widetilde{minPivotDist}^{(t')}_i(v) \rceil_2 + \lceil \widetilde{minPivotDist}^{(t)}_{i+1}(v) \rceil_2 \\
        &= w(e).
    \end{align*}
    
    \item For an edge $e = (\hap_{i+1}(x), \hap_{i+1}(y))$ generated by \Cref{projected_edges} for some $j \leq i$ and some $e' = (x,y) \in E^{base}(H_j)$. We obtain
     \begin{align*}
        dist_G(\hap_{i+1}(x), \hap_{i+1}(y)) &\leq dist_G(\hap_{i+1}(x), x) + dist_G(x,y) + dist_G(y, \hap_{i+1}(y))\\
        &\leq \lceil \widetilde{minPivotDist}_{i+1}(x)\rceil_{\roundConst} +  dist_G(x,y) + \lceil \widetilde{minPivotDist}_{i+1}(y)\rceil_{\roundConst} \\
        &\leq \lceil \widetilde{minPivotDist}_{i+1}(x)\rceil_{\roundConst} + \lceil w_{H_j}(e) \rceil_{\roundConst} +  \lceil \widetilde{minPivotDist}_{i+1}(y)\rceil_{\roundConst}\\
        &= w(e)
    \end{align*}
    where we first use the triangle inequality, then use \Cref{fact:helperInLowerBoudnInd} and finally use that $dist_G(x,y) \leq w_{H_j}(e)$ by the induction hypothesis.
\end{itemize}
\end{proof}
\end{document}